\definecolor{mgt}{RGB}{238,28,189}
\begin{document}
\title[A non-wellfounded proof theory of (\texorpdfstring{$\omega$}{}-)context-free languages]{A proof theory of (\texorpdfstring{$\omega$}{}-)context-free languages, \\ {via non-wellfounded proofs}} 
\author{Anupam Das \and Abhishek De}
\address{University of Birmingham, UK}
\email{\{a.das,a.de\}@bham.ac.uk}
\date{\today}
\begin{abstract}
We investigate the proof theory of regular expressions with fixed points, construed as a notation for ($\omega$-)context-free grammars. Starting with a hypersequential system for regular expressions due to Das and Pous~\cite{DasPou17:hka}, we define its extension by least fixed points and prove the soundness and completeness of its non-wellfounded proofs for the standard language model. From here we apply proof-theoretic techniques to recover an infinitary axiomatisation of the resulting equational theory, complete for inclusions of context-free languages. Finally, we extend our syntax by greatest fixed points, now computing $\omega$-context-free languages. We show the soundness and completeness of the corresponding system using a mixture of proof-theoretic and game-theoretic techniques.
\end{abstract}
\maketitle              %
\renewcommand{\phi}{\varphi}
\renewcommand{\emptyset}{\varnothing}
\renewcommand{\epsilon}{\varepsilon}
\newcommand{\defname}[1]{\textbf{{#1}}}
\newcommand{\df}{:=}
\newcommand{\bnf}{::=}
\newcommand{\Pow}{\mathcal P}
\newcommand{\pow}[1]{\Pow(#1)}

\newcommand{\IH}{\mathit{IH}}

\newtheorem{theorem}{Theorem}
\newtheorem{lemma}[theorem]{Lemma}
\newtheorem{proposition}[theorem]{Proposition}
\newtheorem{observation}[theorem]{Observation}
\crefformat{observation}{Observation~#2#1#3}

\newtheorem{corollary}[theorem]{Corollary}
\newtheorem{fact}[theorem]{Fact}

\theoremstyle{definition}
\newtheorem{definition}[theorem]{Definition}
\newtheorem{example}[theorem]{Example}
\newtheorem{remark}[theorem]{Remark}
\newcommand{\Nat}{\mathbb{N}}
\newcommand{\Int}{\mathbb{Z}}
\newcommand{\Cal}[1]{\mathcal{#1}}
\newcommand{\SF}[1]{\mathsf{#1}}
\newcommand{\resp}{\textit{resp.}\xspace}
\newcommand{\aka}{\textit{aka}\xspace}
\newcommand{\viz}{\textit{viz.}\xspace}
\newcommand{\ie}{\textit{i.e.}\xspace}
\newcommand{\cf}{\textit{cf.}\xspace}
\newcommand{\wrt}{\textit{wrt}\xspace}
\newcommand{\mgt}[1]{\textcolor{mgt}{#1}}
\newcommand{\cyn}[1]{\textcolor{cyan}{#1}}
\newcommand{\olv}[1]{\textcolor{olive}{#1}}
\newcommand{\orange}[1]{{\color{orange}#1}}
\newcommand{\purple}[1]{{\color{purple}#1}}

\newcommand{\blue}[1]{{\color{blue}#1}}
\newcommand{\green}[1]{{\color{teal}#1}}

\newtheorem{convention}[theorem]{Convention}

\newcommand{\Alphabet}{\mathcal{A}}
\newcommand{\Var}{\mathcal{V}}

\newcommand{\NL}{\mathbf{NL}}
\newcommand{\Ptime}{\mathbf{P}}
\newcommand{\Pspace}{\mathbf{PSPACE}}
\newcommand{\poly}{\mathrm{poly}}

\newcommand{\rk}[1]{|\hspace{-.15em}|#1|\hspace{-.15em}|}
\newcommand{\rkapprx}[2]{\rk{#1}^{#2}}

\newcommand{\lred}[1]{\rightarrow_{#1}}

\newcommand{\proves}{\vdash}

\newcommand{\nwfproves}{\proves^{\infty}}
\newcommand{\ImAnc}[1]{\mathrm{IA}_{#1}}
\newcommand{\imanc}[3]{\ImAnc{#1}(#2,#3)}

\newcommand{\Lang}{\mathcal L}
\newcommand{\lang}[1]{\Lang(#1)}
\newcommand{\Inf}[1]{\SF{Inf}(#1)}
\newcommand{\cons}{::}

\newcommand{\wLang}{\Lang}
\newcommand{\wlang}[1]{\wLang(#1)}

\newcommand{\FV}{\mathrm{FV}}
\newcommand{\fv}[1]{\FV(#1)}

\newcommand{\fint}[1]{\lceil #1 \rceil}

\newcommand{\FL}{\mathrm{FL}}
\newcommand{\fl}[1]{\FL(#1)}
\newcommand{\eqfl}{=_\FL}
\newcommand{\lefl}{<_\FL}
\newcommand{\leqfl}{\leq_\FL}
\newcommand{\geqfl}{\geq_\FL}

\newcommand{\subform}{\sqsubseteq}
\newcommand{\supform}{\sqsupseteq}
\newcommand{\dle}{\prec}
\newcommand{\dleq}{\preceq}
\newcommand{\dge}{\succ}
\newcommand{\dgeq}{\succeq}

\newcommand{\struct}{\mathfrak L}
\newcommand{\interp}[2]{#2^{#1}}

\newcommand{\anbn}{\{a^nb^n\}_n}
\newcommand{\anbnapprx}[1]{\{a^kb^k\}_{k< #1}}
\newcommand{\asbs}{(a^*b^*)}
\newcommand{\dyck}[1]{\mathrm{Dyck}_{#1}}

\newcommand{\es}[1]{\mathcal{#1}}
\newcommand{\ines}[1]{\mathcal{#1}^\geq}

\newcommand{\esEX}[3]{\es {#1}_{#2}(#3)}
\newcommand{\inesEX}[3]{\ines {#1}_{#2}(#3)}

\newcommand{\canes}[1]{\mathcal E_{#1}}
\newcommand{\meetes}[1]{\mathcal M_{#1}}
\newcommand{\meetvar}[2]{X_{#1 \cap #2}}

\newcommand{\rleq}{\lesssim}
\newcommand{\rgeq}{\gtrsim}
\newcommand{\req}{\approx}

\newcommand{\seqar}{\rightarrow}

\newcommand{\id}{\mathsf{init}}

\newcommand{\K}{\mathsf{k}}
\newcommand{\kk}[1]{\K_{#1}}
\newcommand{\kl}[1]{\kk{#1}^l}
\newcommand{\kr}[1]{\kk{#1}^r}
\newcommand{\wk}{\rr{\mathsf{w}}}
\newcommand{\cntr}{\mathsf{c}}
\newcommand{\cut}{\mathsf{cut}}

\newcommand{\lr}[1]{#1\text{-}l}
\newcommand{\rr}[1]{#1\text{-}r}
\newcommand{\func}{\SF{func}}

\newcommand{\KA}{\mathsf{KA}}
\newcommand{\lhKA}{\ell\KA}
\newcommand{\HKA}{\mathsf{HKA}}
\newcommand{\lHKA}{\ell\HKA}
\newcommand{\muHKA}{\mu\HKA}
\newcommand{\lmuHKA}{\mu\lHKA}
\newcommand{\lmuHKAnwf}{\lmuHKA^{\infty}}

\newcommand{\muHKAw}{\muHKA_{\omega}}

\newcommand{\muHKAnwf}{\muHKA^\infty}

\newcommand{\munuHKA}{\mu\nu\HKA}
\newcommand{\munuHKAnwf}{\munuHKA^{\infty}}

\newcommand{\lmunuHKA}{\mu\nu\lHKA}
\newcommand{\lmunuHKAnwf}{\lmunuHKA^{\infty}}

\newcommand{\CFA}{\mu\mathsf{CA}}

\newcommand{\RLA}{\ensuremath{\mathsf{RLA}}\xspace}
\newcommand{\RLAhat}{\widehat\RLA}

\newcommand{\nRLA}{\nu\RLA}
\newcommand{\nRLAhat}{\nu\RLAhat}

\newcommand{\LRLAhat}{\mathsf L\RLAhat}
\newcommand{\CRLA}{\ensuremath{\mathsf{CRLA}}\xspace}
\newcommand{\LRAind}{\ensuremath{\mathsf{LRA}^{\SF{ind}}}\xspace}

\newcommand{\nLRLAhat}{\nu\LRLAhat}
\newcommand{\nCRLA}{\nu\CRLA}

\newcommand{\infrule}{\mathsf{r}}
\newcommand{\sequent}{\mathcal{S}}

\newcommand{\eloise}{$\exists$l\"{o}ise\xspace}
\newcommand{\abelard}{$\forall$belard\xspace}
\newcommand{\prover}{\mathbf{P}}
\newcommand{\denier}{\mathbf{D}}

\newcommand{\provstrat}{\mathfrak p}
\newcommand{\denstrat}{\mathfrak d}

\newcommand{\reg}{\SF{REG}}
\newcommand{\oreg}{\SF{REG}^\omega}
\newcommand{\CF}{\SF{CF}}
\newcommand{\oCF}{\SF{CF}^\omega}

\newcommand{\lv}[1]{\overline{\vec{#1}}}

\newcommand{\gram}{\Cal G}

\newcommand{\dec}[1]{\textcolor{teal}{\scriptsize\textbf{#1}}} 
\section{Introduction}

The characterisation of context-free languages (CFLs) as the least solutions of algebraic inequalities, sometimes known as the \emph{ALGOL-like theorem}, is a folklore result attributed to several luminaries of formal language theory including Ginsburg and Rice~\cite{GinsburgRice62}, Schutzenberger~\cite{Schutzenberger63}, and Gruska~\cite{Gruska71}.
This induces a syntax for CFLs by adding least fixed point operators to regular expressions, as first noted by Salomaa~\cite{Salomaa73}. Lei\ss~\cite{Leiss92:ka-with-rec} called these constructs ``$\mu$-expressions'' and defined an algebraic theory over them by appropriately extending Kleene algebras, which work over regular expressions.
Notable recent developments include a generalisation of Antimirov's partial derivatives to $\mu$-expressions~\cite{Thiemann17} and criteria for identifying $\mu$-expressions that can be parsed unambiguously \cite{KrishnaswamiYallop19}.

Establishing axiomatisations and proof systems for classes of formal languages has been a difficult challenge. 
Many \emph{theories} of regular expressions, such as Kleene algebras ($\KA$) were proposed in the late 20\textsuperscript{th} century (see, e.g., e.g.~\cite{Conway71book,Kleene56,Kozen94:completeness-ka}). 
The completeness of $\KA$ for the (equational) theory of regular languages, due to Kozen \cite{Kozen94:completeness-ka} and Krob \cite{Krob91:ka-completeness} independently, is a celebrated result that has led to several extensions and refinements, e.g.~\cite{KozSmi97:kat-completeness,KozSil12:left-handed-completeness,CraLauStr15:omega-regular-algebras,KozSil20:left-handed-completeness}.
More recently the proof theory of $\KA$ has been studied via \emph{infinitary} systems. On one hand,~\cite{Palka07} proposed an \emph{$\omega$-branching} sequent calculus and on the other hand~\cite{DasPou17:hka,DasDouPou18:lka-completeness,HazKup22:transfin-HKA} have studied \emph{cyclic} `hypersequential' calculi. 

Inclusion of CFLs is $\Pi^0_1$-complete, so any recursive (hence also cyclic) axiomatisation must necessarily be incomplete. 
Nonetheless various
theories of $\mu$-expressions have been extensively studied, in particular \emph{Chomsky algebras} and \emph{$\mu$-semirings}~\cite{EsikLeiss02,EsikLeiss05:alg-comp-semirings,Leiss16:matrices-over-mu-cont-chom-alg,LeissHopkins18}, giving rise to a rich algebraic theory. 
Indeed Grathwohl, Henglein, and Kozen~\cite{GraHenKoz13:inf-ax-cf-langs} have  given a complete (but infinitary) axiomatisation of the equational theory of $\mu$-expressions, by extending these algebraic theories with a \emph{continuity} principle for least fixed points.

\begin{figure}[t]
    \centering
    \begin{tikzpicture}
    \draw [fill=violet!20,violet!20] (-2.5,-2) rectangle (2.5,3.5);
    \draw [fill=orange!20,orange!20] (2.5,-2) rectangle (7.5,3.5);
        \node (A) at (0,3) {$\CFA$};
        \node (B) at (0,1) {$\muHKAnwf$};
        \node (C) at (5,2) {context-free languages};
        \node (D) at (0,-1) {$\lmunuHKAnwf$};
        \node (E) at (5,-1) {$\omega$-context-free languages};
        \draw[->] (A) to[bend left] node[below] {\scriptsize\textcolor{gray}{ soundness, e.g.\ \textbf{\scriptsize\cite{GraHenKoz13:inf-ax-cf-langs}}}} (C);
        \draw[->,gray] (B) to[bend left] node[below] {\textcolor{teal}{\scriptsize\textbf{\cref{thm:soundness-muHKA}}}} (C);
        \draw[<-] (B) to[bend right] node[above] {\textcolor{teal}{\scriptsize\textbf{\cref{thm:completeness}}}} (C);
        \draw[->] (D) to[bend left] node[below] {\textcolor{teal}{\scriptsize\textbf{\cref{thm:munuHKA-soundness}}}} (E);
        \draw[<-] (D) to[bend right] node[above] {\textcolor{teal}{\scriptsize\textbf{\cref{thm:munuHKA-completeness}}}} (E);
        \draw[dashed] (-2.5,0.2) -- (7.5,0.2);
        \draw[right hook->,dashed] (C) to (E);
        \draw[->] (B) to (A);
        \node[label={[align=right]\dec{\cref{thm:muHKAnwf-to-muHKAw}}\\\dec{\cref{prop:soundness-of-muHKAw-for-CFA}}}] at (-1.2,1.5) {};
        \draw[right hook->,dashed] (B) to (D);
    \node[above right] at (-2.5,-2) {\textcolor{violet}{\small \textit{Proof systems}}};
    \node[above left] at (7.5,-2) {\textcolor{orange}{\small \textit{Language models}}};
    \end{tikzpicture}
    \caption{\textnormal{Summary of our main contributions. Each arrow $\rightarrow$ denotes an inclusion of equational theories, over an appropriate language of $\mu$-expressions. The gray arrow, \cref{thm:soundness-muHKA}, is also a consequence of the remaining black ones.}
    }
    \label{fig:summary}
\end{figure}
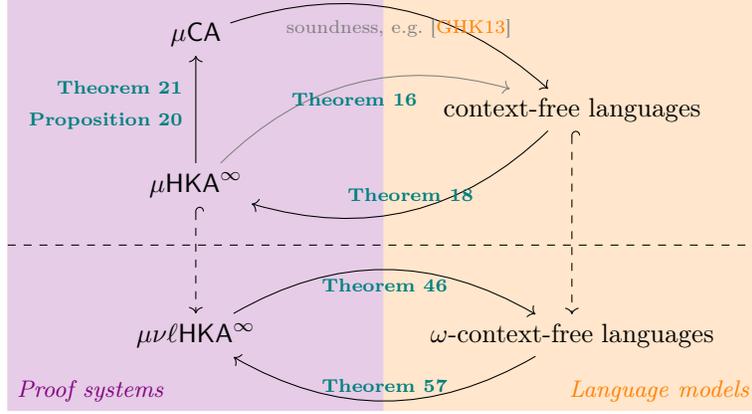

\smallskip

\noindent
\textbf{Contributions.} In this paper, we propose a \emph{non-wellfounded} system $\muHKAnwf$ for $\mu$-expressions. 
It can be seen as an extension of the cyclic system of~\cite{DasPou17:hka} for regular expressions. 
Our first main contribution is the adequacy of this system for CFLs: $\muHKAnwf$ proves $e=f$ just if the CFLs computed by $e$ and $f$, $\lang e$ and $\lang f$ respectively, are the same. 
We use this result to obtain alternative proof of completeness of the infinitary axiomatisation $\CFA$ of~\cite{GraHenKoz13:inf-ax-cf-langs}, comprising our second main result. 
Our method is inspired by previous techniques in non-wellfounded proof-theory,  namely \cite{Studer08,DDS23}, employing `projections' to translate non-wellfounded proofs to wellfounded ones.
Our result is actually somewhat stronger than that of \cite{GraHenKoz13:inf-ax-cf-langs}, since our wellfounded proofs are furthermore \emph{cut-free}.

Finally we develop an extension $\lmunuHKA$ of (leftmost) $\muHKA$ by adding \emph{greatest} fixed points, $\nu$, for which $\lang \cdot$ extends to a model of \emph{$\omega$-context-free languages}. Our third main contribution is the soundness and completeness of $\lmunuHKA$ for $\lang \cdot$. 
Compared to $\muHKA$, the difficulty for metalogical reasoning here is to control interleavings of $\mu$ and $\nu$, both for soundness argument and in controlling proof search for completeness.
To this end we employ \emph{game theoretic} techniques to characterise word membership and control proof search. 
\smallskip

All our main results are summarised in \cref{fig:summary}. 

\subsection*{Acknowledgements}
This work was supported by a UKRI Future Leaders Fellowship, ‘Structure vs Invariants in Proofs’, project reference MR/S035540/1. The authors are grateful to anonymous reviewers for their helpful comments (in particular, leading to~\Cref{ex:r1-example}) and for pointing us to relevant literature such as \cite{Lange02:model-checking-fp-logic+chop,LoedMadhSerr04:vp-games,MHHO05}.  %
\section{A syntax for context-free grammars}
\label{sec:cfg}

Throughout this work we make use of a finite set $\Alphabet$ (the \defname{alphabet}) of \defname{letters}, written $a,b,\dots$, and a countable set $\Var$ of \defname{variables}, written $X,Y,\dots$.
When speaking about context-free grammars (CFGs), we always assume non-terminals are from $\Var $ and the terminals are from $\Alphabet$.

We define \defname{($\mu$-)expressions}, written $e,f,$ etc., by:
\begin{equation}
\label{eq:grammar-of-mu-exprs}
e,f,\dots \quad \bnf 
0 \mid 
1  \mid  X  \mid  a \mid 
e+ f \mid  e\cdot f \mid \mu X e
\end{equation}
We usually simply write $ef$ instead of $e\cdot f$. $\mu$ is considered a variable binder, with the \emph{free variables} $\fv e $ of an expression $e$ defined as expected:

\begin{definition}[Free variables]
The set of \defname{free variables} of an expression $e$, written $\fv e$, is defined by:
\begin{itemize}
    \item $\fv 0 \df \emptyset$
    \item $\fv 1 \df \emptyset$
    \item $\fv X \df \{X\}$
    \item $\fv {e+f} \df \fv e \cup \fv f$
    \item $\fv {ae} \df \fv e$
    \item $\fv {\mu X e} \df \fv e \setminus \{X\}$
\end{itemize}
\end{definition}

We sometimes refer to expressions as \emph{formulas}, and write $\subform$ for the subformula relation.

$\mu$-expressions compute languages of finite words in the expected way:
\begin{definition}
    [Language semantics]
    \label{def:reg-lang-model}
    Let us temporarily expand the syntax of expressions to include each language $A\subseteq \Alphabet^*$ as a constant symbol.
We interpret each closed expression (of this expanded language) as a subset of $ \Alphabet^*$ as follows:

\smallskip

\begin{minipage}{.35\textwidth}
    \begin{itemize}
    \item $\lang 0 \df \emptyset$
    \item $\lang 1 \df \{\epsilon\}$
    \item $\lang a \df \{a\}$
    \item $\lang A \df A$
\end{itemize}
\end{minipage}
\begin{minipage}{.6\textwidth}
    \begin{itemize}
    \item $\lang {e+f} \df \lang e \cup \lang f$
    \item $\lang {ef} \df \{ vw : v \in \lang e, w \in \lang f\}$
    \item $\lang {\mu X e(X)} \df \bigcap \{ A \supseteq \lang {e(A)} \} $
\end{itemize}
\end{minipage}

\end{definition}
Note that all the operators of our syntax correspond to monotone operations on $\pow{\Alphabet^*}$, with respect to $\subseteq$. Thus $\lang {\mu X e(X)}$ is just the least fixed point of the operation $A \mapsto \lang{e(A)}$, by the Knaster-Tarski fixed point theorem.

The \defname{productive} expressions, written $p,q$ etc.\ are generated by:
\begin{equation}
\label{eq:prod-exprs-mu-only}
    p,q,\dots \quad \bnf \quad a \quad \mid \quad p + q \quad \mid \quad  p\cdot e \quad \mid \quad e\cdot p \quad \mid \quad \mu X p
\end{equation}
We say that an expression is \defname{guarded} if each variable occurrence occurs free in a productive subexpression.
\defname{Left-productive} and \defname{left-guarded} are defined in the same way, only omitting the clause $e\cdot p$ in the grammar above.
For convenience of exposition we shall employ the following convention throughout:
\begin{convention}
    Henceforth we assume all expressions are guarded.
\end{convention}

\begin{remark}
    [Why only guarded expressions?]
    There are several reasons for employing this convention. 
Most importantly, left-guardedness will be required for our treatment of $\omega$-words later via greatest fixed points, where grammars naturally parse from the left via \emph{leftmost derivations}. 
In the current setting, over finite words with only least fixed points, it makes little difference whether we use only guarded expressions or not, nor whether we guard from the left or right.
However our convention does simplify some proofs and change some statements; we will comment on such peculiarities when they are important.
\end{remark}

\begin{example}
[Empty language]
\label{ex:empty-lang}
    In the semantics above, note that the empty language $\emptyset$ is computed by several expressions, not only $0$ but also $\mu X X$ and $\mu X (aX)$. 
    Note that whle the former is unguarded the latter is (left-)guarded.
    In this sense the inclusion of $0$ is somewhat `syntactic sugar', but it will facilitate some of our later development.
\end{example}

\begin{example}
    [Kleene star and universal language]
    \label{ex:kleene-star-and-univ}
    For any expression $e$ we can compute its Kleene star $e^* \df \mu X (1 + eX)$ or $e^* \df \mu X (1 + Xe)$. 
    These definitions are guarded just when $e $ is productive.
     Now, note that we also have not included a symbol $\top$ for the universal language $\Alphabet^*$.
     We can compute this by the expression $\left(\sum \Alphabet\right)^*$, which is guarded as $\sum \Alphabet$ is productive.
\end{example}

It is well-known that $\mu$-expressions compute just the context-free (CF) languages~\cite{GinsburgRice62,Schutzenberger63,Gruska71}.
In fact this holds even under the restriction to left-guarded expressions, by simulating the \emph{Greibach normal form}:
\begin{theorem}
[Adequacy, see, e.g., \cite{EsikLeiss02,EsikLeiss05:alg-comp-semirings}]
\label{thm:cf-equiv-mu-definable}
    $L$ is context-free (and $\epsilon \notin L$) $\iff$ $L=\lang e$ for some $e$ left-guarded (and left-productive, respectively).
\end{theorem}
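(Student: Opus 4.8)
The statement is the classical equivalence between context-free languages and μ-expressions, so the plan is essentially to organise the two standard directions and be careful about the guardedness/left-guardedness refinements. I will prove both implications by translating between CFGs (in suitable normal forms) and μ-expressions, using the Knaster–Tarski characterisation of $\lang{\mu X e(X)}$ as a least fixed point throughout.

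**From μ-expressions to CFLs ($\Rightarrow$, the easy direction).** Given a closed μ-expression $e$, I would build a CFG $\gram_e$ by structural recursion: introduce one nonterminal per subexpression (or per $\mu$-bound variable), with productions mirroring the constructors — $X_{e+f}\to X_e \mid X_f$, $X_{ef}\to X_e X_f$, $X_a \to a$, $X_0$ with no productions, $X_1\to\epsilon$, and for $X_{\mu X e(X)}$ identify it with the nonterminal for $X$ and add the productions of $e$. The key lemma is that $\lang{e}$ equals the language generated by $\gram_e$ from the corresponding start symbol; this follows because the grammar's least-fixed-point semantics (the derivation relation) coincides with the Knaster–Tarski least fixed point used in \cref{def:reg-lang-model}, proved by mutual induction / fixed-point approximation. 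For the refinement: if $e$ is left-productive (hence the empty word is not in $\lang e$ — this needs a short induction showing left-productive expressions never contain $\epsilon$), then one can further put $\gram_e$ into \emph{Greibach normal form}, every production starting with a terminal; and if $e$ is merely left-guarded one still gets an $\epsilon$-free-from-recursion grammar that can be Greibachised after handling the finitely many $\epsilon$-producing subformulas.

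**From CFLs to μ-expressions ($\Leftarrow$, the direction where left-guardedness matters).** Given a CFG $\gram$ with nonterminals $X_1,\dots,X_n$, view its productions as a system of inequalities $X_i \geq e_i(X_1,\dots,X_n)$ where $e_i$ is the sum of the right-hand sides (each right-hand side a product of terminals and nonterminals). The ALGOL-like theorem says $L$ is the least solution of this system in $\pow{\Alphabet^*}$. I then eliminate the variables one at a time using Bekić's lemma / the rule $\mu X\,\mu Y\, e = \mu Y\, e[\mu X e/X]$-style manipulations, i.e. solve $X_n = \mu X_n\, e_n$ and substitute, obtaining after $n$ steps a single closed μ-expression $e$ with $\lang e = L$. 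For the refinement, first put $\gram$ into Greibach normal form (possible precisely when $\epsilon\notin L$): then every $e_i$ is a sum of terms each beginning with a letter, so each $e_i$ is left-productive; one checks that the variable-elimination steps preserve left-guardedness (substituting a left-guarded expression for a variable that occurs only in left-productive position keeps the result left-guarded), yielding a left-guarded, left-productive $e$.

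**Main obstacle.** The conceptually routine but technically delicate point — and the one I would spend most care on — is the bookkeeping in the $\Leftarrow$ direction showing that Gaussian-style elimination of nonterminals from a Greibach-normal-form system preserves \emph{left}-guardedness and left-productivity at every step (in particular that no unguarded $\mu X X$-like subterm is ever created, and that substitution does not move a variable out of left-productive position). The $\epsilon$-membership side conditions (a left-productive expression has $\epsilon\notin\lang e$; a CFG can be Greibachised iff its language omits $\epsilon$) are the other thing to state cleanly. Everything else reduces to the two well-known facts already cited, namely the ALGOL-like theorem and the existence of Greibach normal form, combined with the Knaster–Tarski reading of $\mu$; so the proof is really an exercise in matching the syntactic guardedness discipline to the Greibach normal form, and I would present it at that level of detail rather than re-deriving the classical results.
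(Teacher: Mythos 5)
Your proposal is correct and follows essentially the same route as the paper, which likewise splits the theorem into (i) a canonical grammar construction from the expression and (ii) a Greibach-normal-form plus Beki\'c-style elimination of non-terminals for the converse, with exactly the left-guardedness/left-productivity bookkeeping you flag as the delicate point. The only cosmetic differences are that the paper's canonical grammar takes the Fischer--Ladner closure as non-terminals with an unfolding production $X_{\mu X f(X)} \to X_{f(\mu X f(X))}$ rather than identifying bound variables with non-terminals, and its notion of Greibach normal form admits $X \to \epsilon$ productions, so the $\epsilon \in L$ case needs no separate patch (in your version one would just output $1 + e'$).
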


While this argument is known, it is pertinent to recall it as we are working with only guarded expressions, and as some of the intermediate concepts will be useful to us later.
First we will need to define a notion of subformula peculiar to fixed point expressions:

\begin{definition}[Fisher-Ladner (FL)]
\label{def:fl}
The \defname{Fischer-Ladner (FL) closure} of an expression $e$, written $\fl e$, is the smallest set of expressions closed under closed subformulas and, whenever $\mu X f(X)\in \fl e $ then $f(\mu X f(X))\in \fl e$. 
\end{definition}

It is well-known that $\fl e $ is finite, and in fact has size linear in that of $e$.
Now the $\impliedby$ direction of the Adequacy theorem, \cref{thm:cf-equiv-mu-definable}, can be proved by construing $\fl e $ as the non-terminals of an appropriate CFG.
Formally, this is broken up into the following two Propositions.

\begin{proposition}
\label{prop:exprs-are-cf}
    $\lang e $ is context-free, for any closed expression $e$.
\end{proposition}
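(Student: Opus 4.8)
The plan is to turn the (finite) Fischer--Ladner closure $\fl e$ into the non-terminals of a context-free grammar $\gram_e$. For each $f\in\fl e$ introduce a fresh non-terminal, named $f$ as well, with productions dictated by the principal connective of $f$: none for $0$; $1\seqar\epsilon$; $a\seqar a$ (the right occurrence now a terminal); $g+h\seqar g\mid h$; $gh\seqar g\,h$; and $\mu X g(X)\seqar g(\mu X g(X))$. Every symbol on a right-hand side is again in $\fl e$ --- by closure under closed subformulas and under $\mu$-unfolding (\cref{def:fl}) --- and is closed whenever $f$ is. Taking $e$ as start symbol, $\gram_e$ is a \emph{finite} CFG since $\fl e$ is finite, so $\Lang_{\gram_e}(e)$ is context-free; it then remains to show $\Lang_{\gram_e}(f)=\lang f$ for all closed $f\in\fl e$, where $\Lang_{\gram_e}(f)$ denotes the set of terminal strings derivable from $f$.

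For $\Lang_{\gram_e}(f)\subseteq\lang f$, induct on the length of a derivation $f\Rightarrow^*_{\gram_e}w$: every case mirrors the corresponding clause of \cref{def:reg-lang-model}, the only interesting one being $\mu X g(X)\seqar g(\mu X g(X))\Rightarrow^* w$, where the induction hypothesis yields $w\in\lang{g(\mu X g(X))}$ and one concludes $w\in\lang{\mu X g(X)}$ from the fixed-point identity $\lang{\mu X g(X)}=\lang{g(\mu X g(X))}$ --- itself a consequence of Knaster--Tarski (noted after \cref{def:reg-lang-model}) together with a routine compositionality lemma $\lang{g(h)}=\lang{g(\lang h)}$.

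For the converse $\lang f\subseteq\Lang_{\gram_e}(f)$ --- the crux --- one must exploit \emph{leastness} of $\lang{\mu X g(X)}$. By continuity of the language operations, $\lang f=\bigcup_{n}\Lang_n(f)$, where $\Lang_n(\cdot)$ is the depth-bounded semantics with $\Lang_0(\mu X g(X))\df\emptyset$, $\Lang_{n+1}(\mu X g(X))\df\Lang_n(g(\mu X g(X)))$, and the usual clauses otherwise. I would then prove, by induction on the pair $(n,\text{size of }f)$ ordered lexicographically, that $w\in\Lang_n(f)$ implies $f\Rightarrow^*_{\gram_e}w$ for every closed $f\in\fl e$: the connective cases recurse on strictly smaller closed subformulas (which remain in $\fl e$) at the same $n$, while for $f=\mu X g(X)$ with $n\ge1$ we have $w\in\Lang_n(f)=\Lang_{n-1}(g(\mu X g(X)))$, so the hypothesis at the strictly smaller parameter $n-1$ --- applied to the \emph{closed} formula $g(\mu X g(X))\in\fl e$ --- gives $g(\mu X g(X))\Rightarrow^*w$, whence $\mu X g(X)\seqar g(\mu X g(X))\Rightarrow^*w$.

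I expect the main obstacle to be exactly this last induction: plain structural induction on $f$ fails, since unfolding replaces $\mu X g(X)$ by the syntactically \emph{larger} $g(\mu X g(X))$, so the descending measure must be supplied by the unfolding-depth parameter rather than by formula size. (Alternatively one can read off a well-founded measure on $\fl e$ directly, using the Convention that $e$ is guarded --- so that re-entering any $\mu$-formula along a derivation first forces a terminal to be produced --- but the approximant route is cleaner and needs nothing about guardedness here.) The residual facts ($\lang f=\bigcup_n\Lang_n(f)$ from continuity, and the compositionality lemma) are routine.
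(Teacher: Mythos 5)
Your grammar is exactly the paper's canonical grammar from \eqref{eq:can-grammar-mu-only} (non-terminals $X_f$ for $f\in\fl e$, with the same productions, including the unfolding production for $\mu$), so the approach is essentially identical; the paper simply stops at the construction and leaves the verification implicit. The correctness argument you supply — soundness via the fixed-point identity and completeness via a lexicographic induction on (unfolding depth, formula size), correctly diagnosing why plain structural induction fails at the $\mu$-unfolding — is sound and fills in exactly the routine part the paper omits.
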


\begin{proof}
[Proof of \cref{prop:exprs-are-cf}]
    We construct a CF grammar with nonterminals $X_f$ for each  $f \in \fl e$, starting nonterminal $X_e$, and all productions of form:
   \begin{equation}
       \label{eq:can-grammar-mu-only}
        \begin{array}{r@{\quad \rightarrow \quad }l}
         X_1 & \epsilon \\
         X_a & a \\
         X_{f+g} & X_f \quad | \quad X_g \\
         X_{fg} & X_fX_g \\
         X_{\mu X f(X)}& X_{f(\mu X f(X))} \qedhere
    \end{array}
   \end{equation}
\end{proof}

For the $\implies$ direction of \cref{thm:cf-equiv-mu-definable}, since we assume only guarded expressions, we work with grammars in \emph{Greibach normal form} (GNF).
Recall that a GNF grammar is one for which each production has form $X \to a \vec X$ or $X\to \epsilon$, for $X,\vec X$ non-terminals and $a\in \Alphabet$.
It is well-known that such grammars exhaust all CF languages~\cite{HopcroftUllmanMotwani06}.
Thus we obtain a slightly stronger result:

\begin{proposition}
[see, e.g., \cite{EsikLeiss02,EsikLeiss05:alg-comp-semirings}]
\label{prop:cf-are-mu-definable}
    $L$ is context-free $\implies$ $L= \lang e$ for some left-guarded expression $e$.
    Moreover if $\epsilon \notin L$ then $e$ is left-productive. 
\end{proposition}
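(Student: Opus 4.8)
The plan is to start from a context-free grammar for $L$, read it as a system of least-fixed-point equations, and solve that system by iterated $\mu$-elimination (Gaussian elimination, \aka Beki\'c's lemma), taking care that left-guardedness --- and, when $\epsilon \notin L$, left-productivity --- survives each step. First I would invoke the standard fact that $L$ is generated by a grammar in \emph{Greibach normal form} (see, e.g., \cite{HopcroftUllmanMotwani06}), with nonterminals $X_1, \dots, X_n$, start symbol $X_1$, every production of shape $X_i \to a\vec X$ for $a \in \Alphabet$ and $\vec X$ a word over $X_1, \dots, X_n$, and $X_1 \to \epsilon$ among the productions exactly when $\epsilon \in L$. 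To each $X_i$ I associate the expression $e_i$ obtained as the sum, over all productions $X_i \to a X_{j_1}\cdots X_{j_k}$, of the expressions $a\, X_{j_1}\cdots X_{j_k}$, plus an extra summand $1$ when $X_i \to \epsilon$ is a production. The point of using Greibach normal form is that in each $e_i$ every variable occurrence sits inside a subexpression of form $a\, X_{j_1}\cdots X_{j_k}$, which is left-productive; so each $e_i$ is left-guarded, and $\lang{\cdot}$ interprets the system $\vec X = \vec e(\vec X)$ by monotone operations on the complete lattice $\pow{\Alphabet^*}$, whose simultaneous least fixed point assigns to each $X_i$ the language generated from $X_i$ --- this is the fixed-point semantics of context-free grammars, \ie the ALGOL-like theorem.

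Next I would solve the system by eliminating variables from $X_n$ down to $X_1$. Set $f_n \df \mu X_n\, e_n$; this is left-guarded, since in $e_n$ the variable $X_n$ occurs only inside left-productive subexpressions. Substituting $f_n$ for $X_n$ throughout $e_1, \dots, e_{n-1}$ (renaming bound variables as needed) produces a new system in the variables $X_1, \dots, X_{n-1}$, and continuing in this fashion yields, after $n$ steps, a closed expression $e$. To see that $e$ is left-guarded, I would maintain the invariant that at every stage each remaining free variable occurs only inside left-productive subexpressions: this is preserved because substituting \emph{any} expression into a left-productive position again yields a left-productive subexpression, and, more generally, one checks by induction on $p$ that $p$ left-productive implies $p[g/X]$ left-productive.

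For correctness I would argue purely algebraically: since all operators denote monotone maps on $\pow{\Alphabet^*}$, Beki\'c's lemma applies, so nesting least fixed points as above computes the componentwise simultaneous least fixed point of $\vec X = \vec e(\vec X)$. Hence $\lang e$ is the first component of that fixed point, which by the first paragraph is the language generated from $X_1$, namely $L$. For the ``moreover'' clause, when $\epsilon \notin L$ I would take a Greibach normal form grammar with no $\epsilon$-production; then each $e_i$ is left-productive, $\mu X\, p$ is left-productive whenever $p$ is, and (by the same induction as above) left-productivity is preserved under substitution, so $e$ comes out left-productive.

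The main obstacle I anticipate is not a single deep step but organising the correctness argument so that it rests only on lattice-theoretic fixed-point facts: after the first substitution the equations are no longer ``grammar-shaped'' (they contain $\mu$-subexpressions and products with closed expressions), so one cannot re-invoke grammar semantics at each stage and must instead push everything through Beki\'c's lemma once, for the original system. The remaining work --- bound-variable freshness during the substitutions, and verifying the left-productivity/left-guardedness invariants --- is routine but needs care, given our standing convention that all expressions are guarded.
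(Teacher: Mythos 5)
Your proposal is correct and follows essentially the same route as the paper: pass to Greibach normal form, read the grammar as a system of left-guarded fixed-point equations, and resolve it by Beki\'c's theorem via iterated elimination of non-terminals, tracking left-guardedness (and left-productivity when there is no $\epsilon$-production) through the substitutions. The only cosmetic difference is that the paper phrases the elimination as an induction on the number of non-terminals over generalised grammars whose right-hand sides are arbitrary left-guarded $\mu$-expressions, which is exactly your observation that the intermediate systems are no longer ``grammar-shaped''.
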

\begin{proof}
[Proof sketch]
We expand the statement to grammars where each non-terminal has a unique production whose RHS is an arbitrary left-guarded $\mu$-expression, i.e.\ of the form $\{X_i \to e_i(\vec X)\}_{i<n}$ for non-terminals $\vec X = X_0, \dots, X_{n-1}$.
Note that this exhausts all context-free languages by (a) assuming Greibach normal form for left-guardedness; and (b) using $+$ to combine multiple productions from the same non-terminal.
From here we proceed by induction on $n$, the number of non-terminals, using \emph{Beki\'c's Theorem} for resolving equational systems.
Namely from $\{X_i \to e_i(\vec X, X_n)\}_{i\leq n}$
we set $e_n'(\vec X)\df \mu X_n e_n (\vec X,X_n)$ and first find solutions $\vec f$ to the grammar $\{X_i \to e_i(\vec X, e_n' (\vec X))\}_{i<n}$, by inductive hypothesis. 
Now we set the solution for $X_n$ to be $f_n \df e_n'(\vec f)$.
Note that, since we did not introduce identities, the solutions contain $1$ just if there is an $\epsilon$ production for some non-terminal, by Greibach normal form.
\end{proof}

\begin{example}
\label{eg:mu-exprs-dyck-anbn-asbs}
Consider the left-guarded expressions $\dyck 1 \df \mu X(1 + \langle X \rangle X) $ and $\anbn \df \mu X(1+ a X b) $.
As suggested, $\dyck 1 $ indeed computes the language of well-bracketed words over alphabet $\{\langle, \rangle\}$,
whereas $\anbn$ computes the set of words $\vec a\vec b$ with $|\vec a| = |\vec b|$.
We can also write $\asbs \df \mu X(1 + aX + Xb)$, which is guarded but not left-guarded.
However, if we define Kleene $*$ as in \cref{ex:kleene-star-and-univ}, then we can write $a^*$ and $b^*$ as left-guarded expressions and then take their product for an alternative representation of $\asbs$.
Note that the empty language $\emptyset$ is computed by the left-guarded expression $\mu X (aX)$, cf.~\cref{ex:empty-lang}.
\end{example} %
\section{A non-wellfounded proof system}
\label{sec:muHKA}

In this section we extend a calculus $\HKA$ from \cite{DasPou17:hka} for regular expressions to all $\mu$-expressions, and prove soundness and completeness of its non-wellfounded proofs for the language model $\lang \cdot$.
We shall apply this result in the next section to deduce completeness of an infinitary axiomatisation for $\lang \cdot$, before considering the extension to \emph{greatest} fixed points later.

\smallskip

A \defname{hypersequent} has the form $\Gamma\seqar S$ where $\Gamma$ (the LHS) is a list of expressions (a \defname{cedent}) and $S$ (the RHS) is a set of such lists. 
We interpret lists by the product of their elements, and sets by the sum of their elements.
Thus we extend our notation for language semantics by $\lang \Gamma \df \lang {\prod\Gamma}$ and $\lang S \df \bigcup\limits_{\Gamma \in S } \lang \Gamma$.

\begin{figure}[t]
    \[
    \begin{array}{l@{\quad}c}
        \text{\parbox{2cm}{\textbf{Non-logical rules:}}}
    &
\vlinf{\id}{}{\seqar [\, ]}{}
\qquad
\vlinf{\wk}{}{\orange\Gamma \seqar \purple S,[\Delta]}{\orange\Gamma \seqar \purple S}
\qquad
\vlinf{\kl a }{}{a, \orange\Gamma \seqar a\purple{S}}{\orange\Gamma \seqar \purple S}
\qquad
\vlinf{\kr a}{}{\orange \Gamma,a \seqar \purple S a}{\orange\Gamma \seqar \purple S}  
\\
\noalign{\smallskip}
     \text{\parbox{1.8cm}{\textbf{Left logical rules:}}}    
     & 
     \begin{array}{c}
     \vlinf{\lr 0 }{}{\orange \Gamma, \mgt 0,\olv {\Gamma'} \seqar \purple S}{}
\qquad
\vlinf{\lr 1 }{}{\orange \Gamma, \mgt 1, \olv {\Gamma'} \seqar \purple S}{\orange \Gamma, \olv \Gamma' \seqar \purple S}
\qquad
\vlinf{\lr \cdot}{}{\orange\Gamma, \mgt{ef}, \olv{\Gamma'} \seqar \purple S}{\orange \Gamma, \mgt e,\mgt f,\olv{\Gamma'} \seqar \purple S} 
\\
\noalign{\smallskip}
\vliinf{\lr +}{}{\orange\Gamma, \mgt{e + f} , \olv{\Gamma'} \seqar \purple S}{\orange \Gamma, \mgt e, \olv{\Gamma'} \seqar \purple S}{\orange \Gamma, \mgt f, \olv{\Gamma'} \seqar \purple S}
\qquad
\vlinf{\lr \mu}{}{\orange \Gamma, \mgt{\mu X e(X)}, \olv{\Gamma'} \seqar \purple S}{\orange \Gamma, \mgt{e(\mu X e(X))}, \olv{\Gamma'} \seqar \purple S}
\end{array}
\\
\noalign{\smallskip}
\text{\parbox{2.1cm}{\textbf{Right logical rules:}}}
&
\begin{array}{c}
     \vlinf{\rr 0}{}{\orange \Gamma \seqar \purple S, [\blue \Delta, \mgt 0,\green{\Delta'}]}{\orange \Gamma \seqar \purple S}
\qquad
\vlinf{\rr 1}{}{\orange \Gamma \seqar \purple S, [\blue \Delta, \mgt 1,\green{\Delta'}]}{\orange \Gamma \seqar \purple S, [\blue \Delta, \green{\Delta'}]}
\qquad
\vlinf{\rr \cdot}{}{\orange \Gamma \seqar \purple S, [\blue \Delta, \mgt{ef}, \green{\Delta'}]}{\orange \Gamma \seqar \purple S, [\blue \Delta, \mgt e,\mgt f,\green{\Delta}]}
\\
\noalign{\smallskip}
\vlinf{\rr +}{}{\orange\Gamma \seqar \purple S, [\blue \Delta, \mgt{e+f},\green{\Delta'}]}{\orange \Gamma \seqar \purple S, [\blue \Delta, \mgt e,\green{\Delta'}],[\blue \Delta, \mgt f,\green{\Delta'}]}
\qquad
\vlinf{\rr \mu}{}{\orange \Gamma \seqar \purple S, [\blue  \Delta, \mgt{\mu X e(X)},\green {\Delta'}]}{\orange\Gamma \seqar \purple S, [\blue \Delta, \mgt{e(\mu X e(X))}, \green {\Delta'}]}
\end{array}
    \end{array}
\]
\caption{\textnormal{Rules of the system $\muHKA$.}}
\label{fig:muHKA-rules}
\end{figure}

The system $\muHKA$ is given by the rules in \cref{fig:muHKA-rules}.
Here we use commas to delimit elements of a list or set and square brackets $[, ]$ to delimit lists in a set.
In the $\K$ rules, we write $aS \df \{[a,\Gamma] : \Gamma \in S\}$ and $Sa \df \{[\Gamma,a]:\Gamma \in S\}$.

For each inference step, as typeset in \cref{fig:muHKA-rules}, the \defname{principal} formula is the distinguished \mgt{magenta} formula occurrence in the lower sequent, while any distinguished \mgt{magenta} formula occurrences in upper sequents are \defname{auxiliary}. 
(Other colours may be safely ignored for now).

Our system differs from the original presentation of $\HKA$ in \cite{DasPou17:hka} as (a) we have general fixed point rules, not just for the Kleene $*$; and (b) we have included both left and right versions of the $\K$ rule, for symmetry. 
We extend the corresponding notions of non-wellfounded proof appropriately:

\begin{definition}
    [Non-wellfounded proofs]
        A \defname{preproof} (of $\muHKA$) is generated \emph{coinductively} from the rules of $\muHKA$ i.e.\ it is a possibly infinite tree of sequents (of height $\leq \omega$) generated by the rules of $\muHKA$.
    A preproof is \defname{regular} or \defname{cyclic} if it has only finitely many distinct subproofs.
    An infinite branch of a preproof is \defname{progressing} if it has infinitely many $\lr \mu$ steps.
    A preproof is progressing, or a \defname{$\infty$-proof}, if all its infinite branches are progressing.
    We write $\muHKA \nwfproves \Gamma \seqar S$ if $\Gamma \seqar S$ has a $\infty$-proof in $\muHKA$, and sometimes write $\muHKAnwf$ for the class of $\infty$-proofs of $\muHKA$.
\end{definition}

Note that our progress condition on preproofs is equivalent to simply checking that every infinite branch has infinitely many left-logical or $\K$ steps, as $\lr\mu$ is the only rule among these that does not decrease the size of the LHS.
This is simpler than usual conditions from non-wellfounded proof theory, as we do not have any alternations between the least and greatest fixed points.
Indeed we shall require a more complex criterion later when dealing with $\omega$-languages.
Note that, as regular preproofs may be written naturally as finite graphs, checking progressiveness for them is efficiently decidable (even in $\NL$, see e.g.\ \cite{DasPou17:hka,CurDas22:cic}).

The need for such a complex hypersequential line structure is justified in \cite{DasPou17:hka} by the desideratum of \emph{regular} completeness for the theory of regular expressions: intuitionistic `Lambek-like' systems, cf.~e.g.~\cite{Jip04:semirings-res-kls,Palka07,DasPous18:lka-pt} are incomplete (wrt regular cut-free proofs).
The complexity of the RHS of sequents in $\HKA$ is justified by consideration of proof search for, say, $a^* \seqar (aa)^* + a(aa)^*$ and $(a+b)^* \seqar a^*(ba^*)^*$, requiring reasoning under sums and products, respectively.

In our extended system we actually gain \emph{more} regular proofs of inclusions between context-free languages.
For instance:

\begin{figure}
    \[
    \footnotesize
    \vlderivation{
\vlin{\lr \mu, \rr\mu}{\bullet}{\mgt\anbn \seqar [\asbs]}{
\vliin{\lr +}{}{\mgt{1 + a\anbn b} \seqar [ 1 + a\asbs + \asbs b ]}{
    \vlin{\wk,\rr +}{}{1  \seqar [ 1 + a\asbs + \asbs b ]}{
    \vlin{\lr 1 , \rr 1}{}{1 \seqar [1]}{
    \vlin{\id}{}{\seqar [\, ]}{\vlhy{}}
    }
    }
}{
    \vlin{\wk,\rr+}{}{\mgt{a\anbn b} \seqar [ 1 + a\asbs + \asbs b ]}{
    \vlin{\lr \cdot, \rr \cdot}{}{\mgt{a\anbn b} \seqar [  a\asbs  ]}{
    \vlin{\kl a}{}{a,\mgt{\anbn b} \seqar [  a,\asbs  ]}{
    \vlin{\rr \mu,\rr +,\wk}{}{\mgt{\anbn b} \seqar [  \asbs  ]}{
    \vlin{\lr \cdot, \rr \cdot}{}{\mgt{\anbn b} \seqar [  \asbs b  ]}{
    \vlin{\kr b}{}{\mgt\anbn ,b \seqar [  \asbs ,b  ]}{
    \vlin{\lr\mu,\rr\mu}{\bullet}{\mgt\anbn \seqar \asbs}{\vlhy{\vdots}}
    }
    }
    }
    }
    }
    }
}
}
}
\]
\caption{A regular $\infty$-proof $R$ of $\anbn \seqar [\asbs]$.}
\label{eq:anbn-in-asbs}
\end{figure}

\begin{example}
\label{ex:anbn-in-asbs}
Recall the guarded expressions $\anbn$ and $\asbs$ from \cref{eg:mu-exprs-dyck-anbn-asbs}. 
We have the regular $\infty$-proof $R$ in \cref{eq:anbn-in-asbs} of $\anbn \seqar [\asbs]$,
where $\bullet$ marks roots of identical subproofs. 
Note that indeed the only infinite branch, looping on $\bullet$, has infinitely many $\lr\mu$ steps.
\end{example}

\begin{remark}
    [Impossibility of general regular completeness]
    At this juncture let us make an important point: it is impossible to have any (sound) recursively enumerable system, let alone regular cut-free proofs, complete for context-free inclusions, since this problem is $\Pi^0_1$-complete (see e.g.~\cite{HopcroftUllmanMotwani06}).
    In this sense examples of regular proofs are somewhat coincidental.
\end{remark}

It is not hard to see that each rule of $\muHKA$ is sound for language semantics:
\begin{lemma}[Local soundness]
\label{lemma:local-soundness}
For each inference step,
\begin{equation}
\label{eq:inf-step}
    \vliiinf{\infrule}{}{\Gamma \seqar S}{\Gamma_0 \seqar S_0}{\cdots}{\Gamma_{k-1}\seqar S_{k-1}}
\end{equation}
 for some $k\leq 2$, we have: $\forall i<k \, \lang {\Gamma_i} \subseteq \lang {S_i} \implies \lang \Gamma \subseteq \lang S$.
\end{lemma}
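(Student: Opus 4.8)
The plan is to prove \cref{lemma:local-soundness} by a straightforward case analysis on the rule $\infrule$ applied at the last step, using the compositional semantics of \cref{def:reg-lang-model} together with the extended interpretation of cedents as products and RHS-sets as unions. The key observation making all the cases uniform is that each rule acts on a \emph{distinguished} principal formula embedded in a fixed context: on the left a cedent is of the form $\Gamma, (-), \Gamma'$, so $\lang{\Gamma,(-),\Gamma'} = \lang\Gamma \cdot \lang{(-)} \cdot \lang{\Gamma'}$; on the right a list in $S$ is of the form $[\Delta,(-),\Delta']$, contributing $\lang\Delta \cdot \lang{(-)} \cdot \lang{\Delta'}$ as one of the unioned sets. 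Since concatenation of languages is monotone in each argument and distributes over unions, and since union is monotone, it suffices in each case to verify the corresponding inclusion (or equality) \emph{locally at the principal formula}, and then the ambient context and the other RHS lists are carried along automatically. So the real content is a finite list of tiny identities about $\lang\cdot$.

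Concretely, I would dispatch the cases as follows. For $\id$: $\lang{[\,]} = \{\epsilon\} \subseteq \lang\emptyset$? No — rather the empty RHS-set is $\emptyset$, but the conclusion $\seqar [\,]$ reads $\lang{[\,]} \subseteq \lang{\{[\,]\}} = \{\epsilon\}$, which holds since $\lang{[\,]} = \{\epsilon\}$; here there is no premise. For $\wk$: weakening the RHS only enlarges $\lang S$, so $\lang\Gamma \subseteq \lang S \subseteq \lang{S \cup \{[\Delta]\}}$. For $\kl a$ and $\kr a$: these use $\lang{a}\cdot\lang\Gamma = \{a\}\cdot\lang\Gamma$ and the fact that $\lang{aS} = \{a\}\cdot\lang S$ (resp.\ $\lang{Sa} = \lang S \cdot \{a\}$), so the inclusion is preserved by left/right concatenation with $\{a\}$. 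For the left logical rules: $\lr 0$ has conclusion with $\lang{\Gamma,0,\Gamma'} = \lang\Gamma\cdot\emptyset\cdot\lang{\Gamma'} = \emptyset \subseteq \lang S$ vacuously (no premise); $\lr 1$ uses $\lang 1 = \{\epsilon\}$ so $\lang{\Gamma,1,\Gamma'} = \lang{\Gamma,\Gamma'}$, an \emph{equality}; $\lr\cdot$ uses associativity of concatenation so $\lang{\Gamma,ef,\Gamma'} = \lang{\Gamma,e,f,\Gamma'}$; $\lr+$ uses distributivity of concatenation over $\cup$ to get $\lang{\Gamma,e+f,\Gamma'} = \lang{\Gamma,e,\Gamma'}\cup\lang{\Gamma,f,\Gamma'}$, so the inclusion follows from the two premises; and $\lr\mu$ uses the fixed-point identity $\lang{\mu X e(X)} = \lang{e(\mu X e(X))}$ — which is exactly the unfolding guaranteed by Knaster--Tarski as noted after \cref{def:reg-lang-model} — so again an equality. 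The right logical rules $\rr 0, \rr 1, \rr\cdot, \rr+, \rr\mu$ are handled identically, working inside the distinguished list $[\Delta,(-),\Delta']$ of the RHS-set and using that $\lang{S\cup\{L\}} = \lang S \cup \lang L$; the one mild subtlety is $\rr\cdot$, where the typeset premise writes $\green{\Delta}$ rather than $\green{\Delta'}$, which I would note is a typo for $\Delta'$ (or simply work with the intended rule).

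I do not expect a genuine obstacle here: the lemma is a routine verification, and the only thing one must be slightly careful about is the \emph{direction} of each local fact — several rules ($\lr 1$, $\lr\cdot$, $\lr+$ pointwise, $\lr\mu$, and their right counterparts) are in fact semantics-preserving equalities, while $\wk$ and the $\K$-rules are genuine inclusions — but in all cases the inclusion $\lang\Gamma \subseteq \lang S$ is what propagates. The closest thing to a subtlety worth a sentence in the writeup is justifying $\lang{\mu X e(X)} = \lang{e(\mu X e(X))}$: this is the unfolding of the least (pre)fixed point, valid because all expression-formers denote monotone operations on $\pow{\Alphabet^*}$, so Knaster--Tarski applies and the least prefixed point is a fixed point. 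With that recalled, every case is a one-line computation, and assembling them gives the lemma.
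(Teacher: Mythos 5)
Your proof is correct and takes the only natural route: the paper itself omits the argument for \cref{lemma:local-soundness}, asserting only that ``it is not hard to see,'' and your rule-by-rule case analysis via the local semantic identities (monotonicity and distributivity of concatenation, $\lang{\mu X e(X)} = \lang{e(\mu X e(X))}$ from Knaster--Tarski) is exactly the intended routine verification. Your observation that the premise of $\rr\cdot$ as typeset should read $\Delta'$ rather than $\Delta$ is also correct.
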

Consequently finite proofs are also sound, by induction on their structure.
For non-wellfounded proofs, we must employ a less constructive argument, typical of non-wellfounded proof theory:

\begin{theorem}
    [Soundness]
    \label{thm:soundness-muHKA}
        $\muHKA \nwfproves \Gamma\seqar S \implies \lang \Gamma \subseteq \lang S$.
\end{theorem}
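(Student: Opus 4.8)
The plan is to argue by contradiction: suppose $\muHKA \nwfproves \Gamma \seqar S$ via an $\infty$-proof $\pi$, but $\lang\Gamma \not\subseteq \lang S$, so there is a word $w \in \lang\Gamma \setminus \lang S$. I will use this "bad word" to trace an infinite branch through $\pi$ along which the word always stays a counterexample, and then derive a contradiction from the progress condition. Concretely, I first establish a local "inverse" of \cref{lemma:local-soundness}: for each inference step as in~\eqref{eq:inf-step}, if $w \in \lang{\Gamma}\setminus\lang S$ then there is some premiss $i<k$ and some word $w_i$ with $w_i \in \lang{\Gamma_i}\setminus\lang{S_i}$, where moreover $w_i$ is obtained from $w$ in a controlled way — for the $\K$ rules $w_i$ is $w$ with a letter stripped off one end, for the left-logical rules (including $\lr\mu$) $w_i = w$, and for $\rr{}$ and $\wk$ steps $w_i = w$ as well. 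This is a routine case analysis on the rules, using the semantic clauses of \cref{def:reg-lang-model}; the only slightly delicate cases are $\lr +$ and $\rr +$, where one must pick the correct disjunct, and $\lr\mu$/$\rr\mu$, which are immediate since $\lang{\mu X e(X)} = \lang{e(\mu X e(X))}$ by Knaster–Tarski.

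Starting from the root with the counterexample $w$, I iterate this construction to build an infinite branch $(\Gamma_n \seqar S_n)_{n\in\omega}$ of $\pi$ together with words $w_n \in \lang{\Gamma_n}\setminus\lang{S_n}$, where $w_0 = w$ and each $w_{n+1}$ is obtained from $w_n$ by the recipe above. (The branch is genuinely infinite because no premiss is ever $\seqar[\,]$ with the axiom $\id$ above it: $\lang{[\,]} = \{\epsilon\} \subseteq \lang{S}$ would force $w_n$ to not exist.) Now observe that along this branch the length $|w_n|$ is non-increasing, and it strictly decreases exactly at each $\K$-step. Since $\pi$ is an $\infty$-proof, this infinite branch is progressing, hence contains infinitely many $\lr\mu$ steps; in particular it is infinite and, crucially, it contains only finitely many $\K$-steps is *not* guaranteed — so I need a sharper invariant. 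The fix: combine the length of $w_n$ with a measure on the principal cedent. Here I will instead use the standard well-foundedness argument — along the progressing branch, consider the sequence of left-hand sides; between consecutive $\lr\mu$-steps only finitely many other left rules and $\K$-rules can fire (each strictly decreases the multiset of sizes of formulas on the LHS, as noted in the remark after the definition of $\infty$-proof), so there are infinitely many $\lr\mu$-steps, each unfolding some $\mu$-subformula that must be "responsible" for membership of a suffix of $w_n$.

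The real contradiction comes from the fixed-point semantics. At each $\lr\mu$ step on the branch, principal formula $\mu X e(X)$ with $w_n \in \lang{\Gamma, \mu X e(X), \Gamma'}$, I can find an ordinal approximant stage $\alpha_n$ such that $w_n$ already witnesses membership using the $\alpha_n$-th approximant $\mu^{\alpha_n} X e(X)$ of the least fixed point; unfolding strictly decreases this ordinal (as the letters of $w_n$ consumed by that occurrence are finite and each unfolding peels off at least one productive layer, by guardedness). Infinitely many $\lr\mu$-steps would then give an infinite strictly descending sequence of ordinals, which is impossible. I expect the main obstacle to be exactly this last step: setting up the ordinal-approximant bookkeeping so that it is genuinely preserved by *all* the other rules on the branch (right rules, $\wk$, and the other end's $\K$-rule) and decreases at $\lr\mu$ — in particular, tracking *which* occurrence of the unfolded $\mu$-formula is the relevant one as formulas get split by $\lr\cdot$ and combined, and handling the interaction with the RHS set $S_n$ (which only grows "semantically smaller" as a counterexample target). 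Once the invariant "$w_n$ is a counterexample, witnessed at ordinal level $\leq \kappa_n$, with $\kappa_n$ strictly decreasing at each $\lr\mu$" is in place, the theorem follows immediately from well-foundedness of the ordinals together with progressiveness of $\pi$.
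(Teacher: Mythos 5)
Your overall strategy --- trace a counterexample word up an infinite branch, then refute progressiveness --- matches the paper's, but your mechanism for the final contradiction is genuinely different. The paper works with $n_\sequent$, the \emph{minimum} length of a counterexample to $\sequent$ (choosing at $\lr+$ the premiss minimising it), and then splits into two cases: if the branch has infinitely many $\K$ steps, the monotone non-increasing sequence $(n_{\sequent_i})_i$ of naturals decreases strictly infinitely often, contradiction; if it has only finitely many, then guardedness forces the LHS to accumulate unboundedly many productive expressions, so the minimum length of \emph{any} word in $\lang{\Gamma_i}$ eventually exceeds $n_0 \geq n_{\sequent_i}$, again a contradiction. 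No ordinals and no threads are needed. Note that you have the case split backwards: infinitely many $\K$ steps is the \emph{easy} case (your own length observation already kills it); the case your "sharper invariant" must handle is the one with \emph{finitely} many $\K$ steps, where the word length stabilises and only $\lr\mu$ unfoldings recur.

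Your replacement for that case --- ordinal approximants for the least fixed points --- is a legitimate alternative (it is essentially the signature method the paper itself deploys later for \cref{thm:evaluation-theorem}), but as written it has a gap precisely where you flag "the main obstacle". Infinitely many $\lr\mu$ steps on a branch do \emph{not} by themselves yield an infinite descending chain of ordinals: distinct $\lr\mu$ steps may unfold occurrences that are not ancestors of one another, and each unfolding only decrements the approximant of the occurrences it creates. To get a single descending chain you must either (a) extract from the branch a single infinitely-often-principal \emph{thread} --- this is a K\"onig's-lemma argument, which the paper proves separately as \cref{prop:lmu-fair-implies-exists-prog-thread} and which your proposal does not supply --- and argue the tracked occurrence's ordinal is non-increasing at every other rule; or (b) assign a signature to the \emph{whole} cedent, lexicographically ordered along a well-founded dependency order on the Fischer--Ladner closure, and show it decreases appropriately (the route of the paper's \cref{thm:evaluation-theorem}). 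Separately, your stated reason for the ordinal decrease ("each unfolding peels off at least one productive layer, by guardedness") conflates the two arguments: the decrease $\mu^{\beta+1} \rightsquigarrow \mu^{\beta}$ at an unfolding is pure approximant semantics and needs no guardedness; guardedness is what powers the paper's letter-counting argument, not yours. With the thread extraction added, your proof goes through; without it, the decisive step is unjustified.
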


\begin{proof}
[Proof of \cref{thm:soundness-muHKA}]
For each sequent $\sequent = \Gamma \seqar S$, define $n_{\sequent} \in \Nat\cup\{\infty\}$ for the least length of word $w\in \lang \Gamma \setminus \lang S$ (if there is no such word then $n_{\sequent} = \infty$).
Now suppose, for contradiction, that $P$ is a $\infty$-proof of $\sequent = \Gamma \seqar S$, but $\lang \Gamma \setminus \lang S \neq \emptyset$, and so $n_\sequent \in \Nat$.
By (the contraposition of) \cref{lemma:local-soundness} we may continually choose invalid premisses of rules to build an infinite branch $(\sequent_i = \Gamma_i \seqar S_i)_{i<\omega}$ s.t.\ $\lang{\Gamma_i}\setminus \lang{S_i} \neq \emptyset$ for all $i<\omega$.
Moreover, we can guarantee that the sequence $(n_{\sequent_i})_{i<\omega}$ is monotone non-increasing.
For this note that for each inference step $\vliiinf{\infrule}{}{\sequent}{\sequent_0}{\cdots }{\sequent_{k-1}}$ of form as in \eqref{eq:inf-step} we have:
\begin{enumerate}
    \item\label{item:k-step-strict-decrease} If $\infrule$ is a $\K$ step, then $n_\sequent > n_{\sequent_0}$; and,
    \item Otherwise there is some $i<k$ with $n_{\sequent} = n_{\sequent_i}$.
\end{enumerate}
In particular, if $\infrule$ is a $\lr +$ step, we should choose the $\sequent_i$ admitting the smallest $n_{\sequent_i}$.
Now, since $P$ is a $\infty$-proof, $(\sequent_i)_{i<\omega}$ must be progressing and has infinitely many $\lr \mu$ steps.
We have two cases:
\begin{itemize}
    \item If $(\sequent_i)_{i<\omega}$ has infinitely many $\K$ steps then case \ref{item:k-step-strict-decrease} happens infinitely often along $(n_{\sequent_i})_{i<\omega}$, and so it is a monotone non-increasing sequent of natural numbers that does not converge. Contradiction.
    \item Otherwise $(\sequent_i)_{i\geq k}$ is $\K$-step-free, for some $k<\omega$, and so the number of letters in the LHS of the sequent is monotone non-decreasing in $i\geq k$.
    Since there are infinitely many $\lr \mu $ steps, by guardedness the number of producing expressions (whose languages necessarily are nonempty and do not contain $\epsilon$) is strictly increasing, and eventually dominates even $ n_0\geq n_i$. Contradiction. \qedhere
\end{itemize}
\end{proof}

By inspection of the rules of $\muHKA$ we have:
\begin{lemma}
    [Invertibility]
    \label{lem:invertibility}
    Let $\infrule$ be a logical step as in \eqref{eq:inf-step}.
    $\lang \Gamma \subseteq \lang S \implies \lang {\Gamma_i}\subseteq \lang {S_i}$, for each $i<k$.
\end{lemma}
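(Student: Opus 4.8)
The statement to prove is the Invertibility Lemma: for a logical inference step with conclusion $\Gamma \seqar S$ and premisses $\Gamma_i \seqar S_i$ ($i<k$), we have $\lang \Gamma \subseteq \lang S \implies \lang{\Gamma_i} \subseteq \lang{S_i}$ for each $i$. The plan is to go through the logical rules of $\muHKA$ one by one (the left logical rules $\lr0,\lr1,\lr\cdot,\lr+,\lr\mu$ and the right logical rules $\rr0,\rr1,\rr\cdot,\rr+,\rr\mu$) and observe in each case that the semantic content of a premiss is implied by — in fact, in most cases \emph{equal} to — that of the conclusion. The key observation underpinning everything is that each logical rule merely \emph{rewrites} a principal formula into a semantically equal expression in a fixed list/set context, using only the defining clauses of $\lang\cdot$ from \cref{def:reg-lang-model} together with the fact that all expression operators are monotone with respect to $\subseteq$ (so rewriting inside a larger context preserves inclusions).

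Concretely, I would argue as follows. For the rules that do not branch and do not delete anything — $\lr\cdot$ ($\lang{ef}=\lang{e}\lang{f}$), $\lr\mu$ ($\lang{\mu Xe(X)}=\lang{e(\mu Xe(X))}$ by the fixed-point property, Knaster--Tarski), and symmetrically $\rr\cdot$, $\rr\mu$ — one checks $\lang{\Gamma}=\lang{\Gamma_0}$ and $\lang S = \lang{S_0}$ outright, so the implication is trivial. The rule $\lr1$ and $\rr1$ likewise preserve both sides' semantics, using $\lang1=\{\epsilon\}$ and that $\epsilon$ is a unit for concatenation. For the additive rules: in $\rr+$, the single premiss has $\lang{S_0} = \lang S$ again (distributing $[\Delta,\cdot,\Delta']$ over $+$ inside the set-union semantics), so it is immediate. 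The one genuinely non-trivial direction is $\lr+$: here the conclusion is $\Gamma,e+f,\Gamma'\seqar S$ and the two premisses are $\Gamma,e,\Gamma'\seqar S$ and $\Gamma,f,\Gamma'\seqar S$. Since $\lang{\Gamma e\Gamma'}\subseteq \lang{\Gamma(e+f)\Gamma'}$ and likewise for $f$ (monotonicity, as $\lang e,\lang f\subseteq \lang{e+f}$), any word witnessing a premiss failure also witnesses the conclusion failure; contrapositively $\lang{\Gamma(e+f)\Gamma'}\subseteq \lang S$ gives $\lang{\Gamma e\Gamma'}\subseteq \lang S$ and $\lang{\Gamma f\Gamma'}\subseteq\lang S$. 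The rules $\lr0,\rr0$ are vacuous (no premisses, or the premiss side is irrelevant: with a $0$ in the left list the LHS language is empty; with a $0$ in a right list that summand contributes nothing, so the premiss $\Gamma\seqar S$ already has $\lang S$ at least as large).

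I do not expect a serious obstacle here — the lemma is essentially a bookkeeping exercise, and the phrase "by inspection of the rules" in the statement signals as much. The only point requiring a little care is to state clearly, once, the monotonicity principle: if $\lang{e}\subseteq\lang{e'}$ then $\lang{\Gamma e\Gamma'}\subseteq \lang{\Gamma e'\Gamma'}$ and $\lang{S,[\Delta,e,\Delta']}\subseteq\lang{S,[\Delta,e',\Delta']}$, which follows since all the syntactic operations denote monotone operations on $\pow{\Alphabet^*}$ (as already noted after \cref{def:reg-lang-model}). With that in hand, every logical rule is seen either to preserve both $\lang\Gamma$ and $\lang S$ exactly, or (in the $\lr+$ case) to shrink the LHS language from conclusion to premiss while keeping the RHS fixed, and in both situations the desired implication is immediate. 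I would present the proof as a short case analysis organised by rule, spelling out only the $\lr+$ and $\lr0/\rr0$ cases and dispatching the rest with "similarly, using $\lang{\cdot}$".
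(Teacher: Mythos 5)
Your proposal is correct and matches the paper's argument, which dispatches this lemma purely ``by inspection of the rules'': each logical step either preserves the languages of both sides exactly (the $1$, $\cdot$, $\mu$ and right-$0$/right-$+$ cases, using the unit law, associativity/distributivity of concatenation, and the fixed-point identity $\lang{\mu Xe(X)}=\lang{e(\mu Xe(X))}$), or, in the $\lr+$ case, shrinks the LHS language from conclusion to premiss, and $\lr 0$ is vacuous. Your care in excluding the non-logical rules (where weakening is genuinely not invertible) is exactly the point of the lemma's restriction to logical steps.
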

\begin{theorem}
    [Completeness]
    \label{thm:completeness}
    $\lang \Gamma \subseteq \lang S \Rightarrow \muHKA \nwfproves \Gamma \seqar S$.
\end{theorem}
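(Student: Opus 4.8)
The plan is to build, for any valid hypersequent $\Gamma \seqar S$, a canonical $\infty$-proof by a fair \emph{proof-search} procedure, in the style of the completeness argument for $\HKA$ in \cite{DasPou17:hka}. The procedure applies only invertible logical rules (\cref{lem:invertibility}) together with carefully controlled $\K$- and $\wk$-steps, and maintains validity ($\lang\Gamma \subseteq \lang S$) as an invariant. At a hypersequent $\Gamma \seqar S$: if $0$ occurs in $\Gamma$, close the branch with $\lr 0$; otherwise, if $\Gamma$ has a leftmost compound formula, apply the corresponding left logical rule to it; otherwise $\Gamma$ is a list of letters, i.e.\ $\lang \Gamma = \{w\}$ for a single word $w$ (possibly $\epsilon$), and we \emph{resolve} the RHS: fairly apply right logical rules to the lists in $S$ until the list $[w]$ itself appears among them, then weaken away all the other lists (legitimate since $\lang\Gamma = \{w\} \subseteq \lang{\{[w]\}}$), and finally consume $w$ letter by letter with $\K$-steps, ending at $\seqar \{[\,]\}$ and closing with $\id$. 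Note that the left-decomposition phase need not terminate along a branch (e.g.\ when the LHS repeatedly unfolds a $\mu$), but any such infinite branch has infinitely many $\lr\mu$ steps and is therefore already progressing; the delicate phases are the RHS-resolution ones.

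Two things must be checked about the RHS-resolution step. First, that $[w]$ really does appear after \emph{finitely} many right logical steps: since the sequent is valid, $w \in \lang S$, so $w \in \lang\Delta$ for some $[\Delta]\in S$, and it suffices to show that $w \in \lang\Delta$ implies a finite sequence of right logical rules exposes $[w]$. This is the key ``word-approximation'' lemma, proved by reading $\lang\Delta$ through the context-free grammar on $\fl\Delta$ from \cref{prop:exprs-are-cf}: the rules $\rr 1, \rr\cdot, \rr +, \rr\mu$ mirror the productions of that grammar exactly (with $\rr +$ retaining \emph{both} alternatives in the RHS set, so no backtracking is needed), whence a finite leftmost derivation of $w$ translates to a finite right-decomposition producing $[w]$. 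Second, that the procedure never gets stuck except at $\id$ --- if $\Gamma$ is empty or a word and the branch is not yet closed, then by validity there is always a witness list to decompose, so some rule applies --- whence every finite branch ends in an axiom; and that validity is preserved, which is immediate for logical rules (\cref{lem:invertibility}), for the $\K$-steps (since $aL \subseteq aL' \iff L\subseteq L'$), and for the $\wk$-steps (we only discard lists whose removal preserves $\lang\Gamma \subseteq \lang{S}$, as above).

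It remains to show every infinite branch $\beta$ is progressing. Every left logical step other than $\lr\mu$, and every $\K$-step, strictly decreases the size of the LHS, while $\rr$- and $\wk$-steps leave it unchanged; so if $\beta$ had only finitely many $\lr\mu$ steps then, after the last one, the LHS would be non-increasing and only finitely many further left/$\K$ steps could occur --- hence $\beta$ would eventually consist of $\rr$- and $\wk$-steps only. But our procedure performs such steps only inside an RHS-resolution phase, and by the word-approximation lemma (applied to the valid sequent at which the phase begins) every such phase is finite, ending in a $\K$- or $\id$-step. This contradiction shows $\beta$ has infinitely many $\lr\mu$ steps, so the constructed preproof is an $\infty$-proof of $\Gamma \seqar S$, as required.

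The main obstacle is exactly the tension exploited in the last paragraph: na\"ively, a fair proof search threatens to produce a non-progressing infinite branch by unfolding $\mu$ forever \emph{on the right} via $\rr\mu$, which does not count towards progress. The way around this is the realisation that, for a \emph{valid} sequent, once the LHS has been reduced to a concrete word $w$ only finitely much right-logical work is ever needed --- just enough to expose the single list $[w]$, after which everything else can be discarded by $\wk$ --- so RHS-resolution phases are genuinely finite and cannot make up an infinite branch. Pinning down the word-approximation lemma, and the validity bookkeeping for the $\wk$-steps, is the remaining routine work.
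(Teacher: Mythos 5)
Your proposal is correct and follows essentially the same route as the paper: saturate with (invertible) left logical rules, observe that any infinite branch so produced is automatically progressing, and then, at a sequent $a_1,\dots,a_n \seqar S$ with $\vec a \in \lang S$, build a finite subproof by mimicking a derivation of $\vec a$ in the canonical grammar of \cref{prop:exprs-are-cf} --- your ``word-approximation lemma'' is exactly the paper's Membership Lemma (\cref{lem:membership-mu-only}). You merely spell out in more detail the bookkeeping (exposing $[w]$, weakening away the other lists, consuming letters with $\K$-steps) that the paper leaves as a proof idea.
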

\begin{proof}[Proof sketch]
    We describe a bottom-up proof search strategy:
    \begin{enumerate}
        \item\label{item:apply-lhs-rules-maximally-mu-only} Apply left logical rules maximally, preserving validity by \cref{lem:invertibility}.
        Any infinite branch is necessarily progressing.
        \item\label{item:cfg-membership-prfs} This can only terminate at a sequent of the form $a_1, \dots, a_n \seqar S$ with $\vec a \in \lang S$, whence we mimic a `leftmost' parsing derivation for $\vec a$ wrt $S$. \qedhere
    \end{enumerate}
\end{proof}

To flesh out a bit more the \cref{item:cfg-membership-prfs} of the argument above, let us state:
\begin{lemma}
    [Membership]
    \label{lem:membership-mu-only}
    $a_1\cdots a_n \in \lang S \Rightarrow \muHKA \nwfproves a_1,\dots, a_n \seqar S$.
\end{lemma}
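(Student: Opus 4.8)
The plan is to prove the Membership Lemma by induction on $n$, the length of the word $a_1 \cdots a_n$, with a suitable strengthening. The key observation is that membership $a_1 \cdots a_n \in \lang S$ means there is some list $\Delta = [\Delta] \in S$ with $a_1 \cdots a_n \in \lang \Delta$, and hence there is a \emph{leftmost parsing derivation} of $a_1 \cdots a_n$ from $\Delta$ in the sense of the grammar construction of \cref{prop:exprs-are-cf} (reading the elements of $\fl{S}$ as non-terminals). I would first use the invertibility of the right logical rules (\cref{lem:invertibility}, applied in the contrapositive-free direction, i.e.\ that validity is preserved bottom-up) together with the $\wk$ rule to reduce to the case where $S$ is a singleton $\{[\Delta]\}$: namely, by $\wk$ we may delete all lists of $S$ except the one witnessing membership, and then it suffices to build an $\infty$-proof of $a_1, \dots, a_n \seqar [\Delta]$.

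The heart of the argument is then to simulate a leftmost derivation. Starting from $a_1, \dots, a_n \seqar [\Delta]$, I would repeatedly apply the right logical rules to the \emph{leftmost} formula of the cedent on the RHS that is not a letter: $\rr 1$ erases a leading $1$, $\rr\cdot$ splits a product, $\rr +$ chooses the correct disjunct (guided by the leftmost derivation), and $\rr\mu$ unfolds a leading $\mu$-expression. Whenever the leftmost formula on the RHS becomes a letter $b$, it must match $a_1$ (the current leftmost letter on the LHS) by correctness of the derivation, so we apply $\kl a$ (i.e.\ $\kl{a_1}$) to consume $a_1$ from both sides simultaneously, strictly decreasing the number of letters remaining on the LHS. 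Eventually the LHS is emptied and the RHS cedent reduces to the empty list, at which point we close with $\id$. A clean way to organise this is a secondary induction on the FL-structure of the leftmost RHS formula (which is well-founded between consecutive $\kl{}$ steps because each of $\rr 1, \rr\cdot, \rr+, \rr\mu$ strictly decreases an appropriate FL-measure, using guardedness and finiteness of $\fl\cdot$), nested inside the primary induction on $n$: after each $\kl{}$ step we have consumed one letter and may invoke the primary IH. Alternatively, and perhaps more slickly, one packages the whole thing directly into a (possibly infinite but in fact finite) derivation tree by recursion on the leftmost derivation tree itself.

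The main obstacle I anticipate is ensuring that this proof-search process actually \emph{terminates} — i.e.\ that we produce a \emph{well-founded} proof, not merely an $\infty$-proof — so that progressiveness is trivially satisfied. Between two consecutive applications of a $\K$ rule we only apply right logical rules, and one must argue these cannot be applied infinitely often: this is exactly where guardedness is used, since an unguarded expression like $\mu X X$ on the RHS could unfold forever via $\rr\mu$ without ever exposing a letter. Under the guardedness convention, unfolding a leading $\mu$-expression (possibly several nested ones) eventually exposes a productive subexpression, hence after finitely many right-rule applications the leftmost RHS formula becomes a letter and a $\K$ step fires. Making this termination measure precise — say, a lexicographic combination of (letters remaining on the LHS, then some ordinal rank on the leftmost RHS formula bounding the number of right-rule steps before the next letter is exposed) — is the one genuinely delicate point; everything else is routine bookkeeping matching the five production schemas of \eqref{eq:can-grammar-mu-only} against the five right logical rules.
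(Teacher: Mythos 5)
Your overall strategy is the paper's: the proof given there is precisely ``induction on a leftmost derivation of $\vec a$ with respect to the canonical grammar of \eqref{eq:can-grammar-mu-only}'', which is the ``more slick alternative'' you mention at the end of your second paragraph. The initial reduction to a singleton RHS via $\wk$ is a sensible simplification (it also trivialises the side condition of $\kl a$, which requires \emph{every} list in $S$ to begin with $a$), and note only that $\rr+$ does not choose a disjunct but splits one list into two, so ``choosing'' means $\rr+$ followed by a $\wk$ discarding the unwanted list.

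However, the \emph{primary} bookkeeping you propose --- a secondary induction on an FL-measure of the leftmost RHS formula, ``well-founded between consecutive $\kl{}$ steps because each of $\rr 1,\rr\cdot,\rr+,\rr\mu$ strictly decreases'' it --- does not work. The convention in this section is guardedness, not \emph{left}-guardedness, and for $\asbs=\mu X(1+aX+Xb)$ the leftmost RHS formula evolves under leftmost right rules as $\asbs \mapsto 1+a\asbs+\asbs b \mapsto \asbs b \mapsto \asbs$ (now followed by a trailing $b$), i.e.\ it returns to itself without any letter being consumed. To prove $b^k \seqar [\asbs]$ one needs $k$ such rounds before the first $\kl b$ fires, so no ordinal rank on the leftmost formula (nor any lexicographic combination with the number of LHS letters, which is unchanged during these rounds) can bound the number of right-rule steps between $\K$ steps; guardedness does \emph{not} guarantee that unfolding the leftmost formula ``eventually exposes a letter''. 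What guarantees termination is simply that the witnessing parse tree of the finite word $a_1\cdots a_n$ is finite, so the induction must be carried out on that derivation itself --- which is exactly the paper's formulation. With that correction your argument goes through.
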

\begin{proof}
    [Proof idea]
    Recall the `canonical' grammar as constructed in \eqref{eq:can-grammar-mu-only}.
    We proceed by induction on a leftmost (or rightmost) derivation (as defined for Muller CFGs in \cref{sec:mem-sound-compl-gfps}) of $\vec a$ according to this canonical grammar.
\end{proof} %
\section{Completeness of an infinitary cut-free axiomatisation}
\label{sec:inf-ax-cf-theory}

While our completeness result above was relatively simple to establish, we can use it, along with proof theoretic techniques, to deduce completeness of an infinitary axiomatisation of the theory of $\mu$-expressions.
In fact we obtain an alternative proof of the result of \cite{GraHenKoz13:inf-ax-cf-langs}, strengthening it to a `cut-free' calculus $\muHKAw$. 

\medskip

Write $\CFA$ for the set of axioms consisting of:
\begin{itemize}
    \item $(0,1,+,\cdot)$ forms an idempotent semiring (aka a \emph{dioid}).
    \item \textbf{($\mu$-continuity)} $e\mu X f(X) g = \sum\limits_{n<\omega} ef^n(0)g$.
\end{itemize}
We are using the notation $f^n(0)$ defined by $f^0(0)\df 0$ and $f^{n+1}(0) \df f (f^n(0))$.
We also write $e\leq f$ for the natural order given by $e + f = f$. Now, define $\muHKAw$ to be the extension of $\muHKA$ by the `$\omega$-rule':
\[
\vlinf{\omega}{}{\Gamma , \mu X e(X), \Gamma'\seqar S}{\{ \Gamma, e^n(0),\Gamma'\seqar S \}_{n<\omega} }
\]
By inspection of the rules we have soundness of $\muHKAw $ for $\CFA$:
\begin{proposition}
\label{prop:soundness-of-muHKAw-for-CFA}
    $\muHKAw\proves \Gamma \seqar S \implies \CFA \proves \prod\Gamma \leq \sum\limits_{\Delta \in S}\prod \Delta$. 
\end{proposition}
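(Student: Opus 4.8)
The plan is to proceed by induction on the (well-founded, but infinitely branching at $\omega$-steps) $\muHKAw$-derivation of $\Gamma \seqar S$, reading a hypersequent $\Gamma \seqar S$ as the $\CFA$-inequality $\prod\Gamma \leq \sum_{\Delta\in S}\prod\Delta$. For the induction step one checks, rule by rule, that $\CFA$-provability of the interpretations of the premisses yields $\CFA$-provability of the interpretation of the conclusion; the base cases $\id$ and $\lr 0$ reduce to the $\CFA$-theorems $1\leq 1$ and $0 \leq \sum_{\Delta\in S}\prod\Delta$ (using that $0$ is an annihilator and bottom).

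Most cases are routine dioid bookkeeping. The structural and $\K$ rules need only monotonicity of $+$ and one-sided distributivity: for $\wk$ we pass from $\prod\Gamma\leq \sum_{\Delta\in S}\prod\Delta$ to $\prod\Gamma\leq \sum_{\Delta\in S}\prod\Delta + \prod\Delta_0$, and for $\kl a$ (resp.\ $\kr a$) we multiply the premiss inequality by $a$ on the left (resp.\ right) and distribute. The remaining logical rules $\lr 1,\lr\cdot,\lr +,\rr 0,\rr 1,\rr\cdot,\rr +$ are discharged using that $1$ is a unit, $0$ an annihilator, $\cdot$ associative, and $\cdot$ distributes over $+$ on both sides: in each case the interpretation of the conclusion is, modulo the dioid axioms, literally equal to (a join of) the interpretation(s) of the premiss(es) --- e.g.\ for $\rr +$ one uses $\prod\Delta\,(e+f)\,\prod\Delta' = \prod\Delta\, e\,\prod\Delta' + \prod\Delta\, f\,\prod\Delta'$.

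The two cases using more than the dioid axioms are the fixed-point rules and the $\omega$-rule. For $\lr\mu$ and $\rr\mu$ we invoke $\CFA \proves \mu X e(X) = e(\mu X e(X))$ --- a standard consequence of $\mu$-continuity, which in fact forces $\mu X e$ to denote the least (pre)fixed point of $e$; see \cite{GraHenKoz13:inf-ax-cf-langs} --- so the conclusion's interpretation is $\CFA$-provably equal to the premiss's by congruence. For the $\omega$-rule, suppose $\CFA \proves \prod\Gamma\cdot e^n(0)\cdot\prod\Gamma' \leq \sum_{\Delta\in S}\prod\Delta$ for all $n<\omega$. By $\mu$-continuity, $\prod\Gamma\cdot \mu X e(X)\cdot\prod\Gamma' = \sum_{n<\omega}\prod\Gamma\cdot e^n(0)\cdot\prod\Gamma'$, and the intended reading of this infinitary axiom is precisely that the right-hand side is the $\leq$-least upper bound of the family $\{\prod\Gamma\cdot e^n(0)\cdot\prod\Gamma'\}_{n<\omega}$; since $\sum_{\Delta\in S}\prod\Delta$ is an upper bound of that family by hypothesis, we conclude $\prod\Gamma\cdot\mu X e(X)\cdot\prod\Gamma' \leq \sum_{\Delta\in S}\prod\Delta$.

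The only genuine subtlety I anticipate is the unfolding identity $\CFA\proves \mu X e = e(\mu X e)$. The $\geq$-inequality is easy from continuity: each $e^n(0)\leq e(\mu X e)$ (by monotonicity of $e$ together with $e^{n-1}(0)\leq\mu X e$), so $\mu X e \leq e(\mu X e)$; the $\leq$-inequality, however, requires pushing $e$ through the supremum $\mu X e = \sum_n e^n(0)$, i.e.\ a general continuity-of-contexts lemma. Rather than reprove this, we take it from \cite{GraHenKoz13:inf-ax-cf-langs}; everything else is elementary manipulation over the dioid axioms.
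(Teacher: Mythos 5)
Your proposal is correct and matches the paper's argument, which is exactly an "inspection of the rules": the dioid axioms handle the structural, $\K$ and non-fixed-point logical rules, the unfolding identity $\CFA\proves \mu X e = e(\mu X e)$ (cited from \cite{GraHenKoz13:inf-ax-cf-langs}, as you do) handles $\lr\mu$ and $\rr\mu$, and the $\omega$-rule is discharged by reading the $\mu$-continuity equation $\prod\Gamma\cdot\mu X e(X)\cdot\prod\Gamma' = \sum_{n<\omega}\prod\Gamma\cdot e^n(0)\cdot\prod\Gamma'$ as a least-upper-bound assertion. Your write-up is simply a more detailed rendering of the same induction on the wellfounded $\muHKAw$-derivation.
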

Here the soundness of the $\omega$-rule above is immediate from $\mu$-continuity in $\CFA$.
Note, in particular, that $\CFA$ already proves that $\mu Xe(X)$ is indeed a fixed point of $e(\cdot)$, i.e.\ $e(\mu Xe(X)) = \mu X e(X)$ \cite{GraHenKoz13:inf-ax-cf-langs}.
The main result of this section is:
\begin{theorem}
\label{thm:muHKAnwf-to-muHKAw}
    $\muHKA\nwfproves e\seqar f$ $ \implies$ $ \muHKAw \proves e\leq  f$
\end{theorem}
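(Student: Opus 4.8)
The plan is to transform a given non-wellfounded $\muHKA$-proof of $e \seqar f$ into a wellfounded $\muHKAw$-proof of $e \leq f$ by the method of \emph{projections}: we replace each infinite branch by a (possibly transfinite) sequence of finite approximations, using the $\omega$-rule to glue these approximations together at the $\mu$-occurrences responsible for progress. Concretely, I would first observe that, since $\CFA$ (hence $\muHKAw$) proves $\mu X e(X) = \sum_{n<\omega} e^n(0)$ — and more generally, by $\mu$-continuity, that a $\mu$-formula occurring in context equals the supremum of its finite unfoldings — it suffices to derive, for the conclusion $e \seqar f$, each ``$n$-th approximant'' $e^{(\vec n)} \seqar f$ where the $\mu$-subformulas of $e$ (on the left) have been replaced by bounded unfoldings $e^{n}(0)$, and then close with $\omega$. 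So the real content is: a $\infty$-proof of $e \seqar f$ yields, for every assignment of approximation bounds to the left $\mu$-occurrences, a \emph{finite} $\muHKAw$-proof of the correspondingly truncated sequent.

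To build these finite proofs I would proceed by a measure-decreasing recursion that mirrors the soundness argument (\cref{thm:soundness-muHKA}). Given the $\infty$-proof $P$ and a bound $N$ (a function, or just a single natural number dominating all relevant counters), I read $P$ top-down; every rule other than $\lr\mu$ is copied verbatim (the $\omega$-rule is not needed there). At a $\lr\mu$ step with principal formula $\mu X g(X)$ on the left, I instead apply the $\omega$-rule: its premisses are $\Gamma, g^k(0), \Gamma' \seqar S$ for all $k<\omega$; for $k \geq 1$ I note $g^k(0) = g(g^{k-1}(0))$ and continue following $P$'s subproof above the $\lr\mu$ step, but with the auxiliary occurrence $g(\mu X g(X))$ now replaced by $g(g^{k-1}(0))$, i.e.\ with the approximation bound decremented; for $k=0$, $g^0(0) = 0$ and the branch closes immediately with $\lr 0$. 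Because $P$ is progressing, along any branch the $\lr\mu$ steps on the (relevant, ancestrally-tracked) left $\mu$-formula occur infinitely often, so the decremented bounds strictly decrease at those steps and stay constant elsewhere; combined with the fact that $\K$-steps strictly decrease the word-length counter and that between $\K$-steps guardedness forces productive formulas to accumulate — exactly the case analysis in the proof of \cref{thm:soundness-muHKA} — the recursion is well-founded, so every branch of the constructed tree is finite. Hence the tree is a genuine finite $\muHKAw$-derivation. Doing this for all $N$ and collecting via one final $\omega$-rule at the root (unfolding the outermost left $\mu$'s of $e$) gives $\muHKAw \proves e \leq f$.

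A few technical points I would need to handle carefully. First, \emph{tracking}: I must fix, for each left $\mu$-occurrence, which auxiliary occurrences in premisses are its ancestors, so that ``decrement the bound'' is well-defined; this is the standard notion of $\mu$-thread/ancestry, and the progress condition is precisely that every infinite branch carries such a thread with infinitely many unfoldings. Second, the bound-assignment is really a function on occurrences, not a single number, and the correct bookkeeping is to take, at the root, approximants indexed by $\omega$ and argue that for each fixed choice the induced subtree terminates; I may find it cleanest to phrase the recursion as: ``for all $e'$ obtained from $e$ by replacing left $\mu$-subformulas by finite unfoldings, $\muHKAw \proves e' \seqar f$,'' proved by well-founded induction on the multiset of approximation bounds ordered lexicographically below the word-counter/productive-formula-counter from soundness. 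Third, since $P$ may be merely regular rather than finite, I should note that the construction does not preserve regularity — which is fine, as $\muHKAw$ is an infinitary (well-founded but ill-founded-branching) calculus. The main obstacle, I expect, is precisely organizing the termination measure for the recursion so that it handles the interaction between $\K$-steps (which reset/shrink the word counter), $\lr +$-steps (where one must, as in soundness, follow the premiss minimizing the counter), and the $\mu$-unfolding bounds simultaneously — essentially lifting the two-case argument of \cref{thm:soundness-muHKA} from ``no valid infinite branch exists'' to ``the approximating tree is everywhere finite.''
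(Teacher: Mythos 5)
This is essentially the paper's proof: project the $\infty$-proof by replacing left $\mu$-occurrences with approximants $g^k(0)$, fire an $\omega$-rule at the first unfolding of each occurrence, close the $k=0$ branches with $\lr 0$, and obtain wellfoundedness because the bound carried by the progressing thread's smallest i.o.p.\ $\mu$-formula strictly decreases at each of its (infinitely many) principal steps. One correction: that thread-based decrement alone suffices for termination, and the additional appeal to the $\K$-step/word-length and productive-formula counters from the proof of \cref{thm:soundness-muHKA} cannot actually be ``lifted'' here --- those counters measure the shortest counterexample word of an \emph{invalid} sequent and are identically $\infty$ on the valid sequents occurring in a proof --- so that part of your termination measure is vacuous rather than load-bearing.
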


Note that, immediately from \cref{thm:completeness} and \cref{prop:soundness-of-muHKAw-for-CFA}, we obtain:

\begin{corollary}
    $\lang e \subseteq \lang f \implies \muHKAw\proves e\leq f \implies \CFA \proves e\leq f$
\end{corollary}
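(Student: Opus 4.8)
The plan is to obtain the corollary purely by composing the three results that immediately precede it, so the only real work is to set up the notational bridge between the algebraic inequality $e \leq f$ and the hypersequent on which the proof systems actually operate. Throughout I would read both $\muHKAw \proves e \leq f$ and the premiss $\muHKA \nwfproves e \seqar f$ of \cref{thm:muHKAnwf-to-muHKAw} as referring to the hypersequent $[e] \seqar \{[f]\}$ — written loosely as $e \seqar [f]$ — whose LHS is the singleton list $[e]$ and whose RHS is the singleton set containing the singleton list $[f]$. I would record once at the outset that, under the conventions $\lang \Gamma \df \lang{\prod \Gamma}$ and $\lang S \df \bigcup_{\Delta \in S} \lang \Delta$, this hypersequent satisfies $\lang{[e]} = \lang e$ and $\lang{\{[f]\}} = \lang f$.

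For the first implication, assume $\lang e \subseteq \lang f$. Instantiating \cref{thm:completeness} with $\Gamma = [e]$ and $S = \{[f]\}$ — for which $\lang \Gamma \subseteq \lang S$ is precisely $\lang e \subseteq \lang f$ — produces a non-wellfounded proof $\muHKA \nwfproves e \seqar [f]$. This is exactly the hypothesis of \cref{thm:muHKAnwf-to-muHKAw}, whose conclusion is $\muHKAw \proves e \leq f$. Composing the two gives the first arrow of the chain.

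For the second implication, assume $\muHKAw \proves e \leq f$, i.e.\ $\muHKAw \proves e \seqar [f]$. Applying \cref{prop:soundness-of-muHKAw-for-CFA} with the same $\Gamma = [e]$ and $S = \{[f]\}$ yields $\CFA \proves \prod \Gamma \leq \sum_{\Delta \in S} \prod \Delta$; it then remains only to collapse the products and sums over singletons, namely $\prod [e] = e$ and $\sum_{\Delta \in \{[f]\}} \prod \Delta = \prod [f] = f$, giving $\CFA \proves e \leq f$ and completing the chain. Since each step is a direct appeal to an already-established statement, I expect no genuine obstacle here; the only point demanding care is to keep the identification of $e \leq f$ with the hypersequent $e \seqar [f]$ consistent across all three invoked results, which is why I would fix that reading before carrying out either implication.
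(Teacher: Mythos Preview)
Your proposal is correct and follows exactly the route the paper intends: the corollary is stated as an immediate consequence of \cref{thm:completeness}, \cref{thm:muHKAnwf-to-muHKAw}, and \cref{prop:soundness-of-muHKAw-for-CFA}, and your argument simply spells out that composition together with the (straightforward) identification of $e\leq f$ with the hypersequent $e\seqar [f]$.
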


To prove \cref{thm:muHKAnwf-to-muHKAw} we employ similar techniques to those used for an extension of \emph{linear logic} with least and greatest fixed points \cite{DDS23}, only specialised to the current setting.
We only sketch the ideas here, referencing the analogous definitions and theorems from that work at the appropriate moments.

\subsection{Projections}
Let us consider cedents $\Gamma = \Gamma(f_1, \dots, f_k)$ where some occurrences of $f_1 , \dots , f_k$ are distinguished. 
Note that the distinguished occurrences of each $f_i$ may include some, none or all of the occurrences of $f_i$ in $\Gamma$, including as subexpressions of of expressions in $\Gamma$.
We allow distinct $f_i$ and $f_j$ to be the same formula, as long as they distinguish a disjoint set of occurrences.

When  $\vec f = (\mu X_1 g_1(X_1), \dots, \mu X_k g(X_k))$ and $\vec n \in \Nat^k$, we write $\vec f^{\vec n} \df (g_1^{n_1}(0), \dots, g_k^{n_k}(0))$, the list obtained by \defname{assigning} $\vec n$ to $\vec f$.

Let us briefly recap the definition of \emph{projection}in~\cite[Definition 15]{DDS23}, specialised to our setting:

\begin{definition}
    [Projections]
    For each $\muHKA$ preproof $P$ of $\Gamma(\vec f) \seqar S$ and $\vec n \in \Nat^k$ we define $P(\vec n)$ a preproof of $\Gamma(\vec f^{\vec n}) \seqar S$ by coinduction on $P$:
    \begin{itemize}
        \item The definition of $P(\vec n)$ commutes with any step for which a distinguished formula is not principal for a $\lr\mu$ step.
        \item If $P$ ends with a step for which a distinguished formula if $\lr \mu $ principal,
        \begin{equation}
            \label{eq:mu-left-step-concludes-proof}
            \vlderivation{
        \vlin{\lr \mu}{}{\Gamma(\mu X e(X)), \mu X e(X), \Gamma'(\mu X e(X)) \seqar S}{
        \vltr{Q}{\Gamma(\mu X e(X)), e(\mu X e(X)), \Gamma'(\mu X e(X)) \seqar S}{\vlhy{\ }}{\vlhy{}}{\vlhy{\ }}
        }
        }
        \end{equation}
        then we proceed by case analysis on the number assigned:
        \[
        P(0,\vec n) \df \vlinf{\lr 0}{}{\Gamma(0), 0,\Gamma'(0) \seqar S}{}
        \]
        \[
        P(n+1,\vec n)\df 
            \toks0={.45}
            \vlderivation{
            \vlin{=}{}{\Gamma(e^{n+1}(0)), e^{n+1}(0), \Gamma'(e^{n+1}(0))\seqar S}{
            \vltrf{Q(n,n+1,\vec n)}{\Gamma(e^{n+1}(0)), e(e^{n}(0)),\Gamma'(e^{n+1}(0)) \seqar S}{\vlhy{\qquad\quad  }}{\vlhy{\qquad }}{\vlhy{\qquad \quad }}{\the\toks0}
            }
            } 
        \]
        where in the second case we have further distinguished the occurrences of $\mu X e(X)$ indicated in the auxiliary formula in \eqref{eq:mu-left-step-concludes-proof}.
    \end{itemize}
\end{definition}

Note that we have technically used a `repetition' rule $=$ in the translation above to ensure productivity of the translation.
However it turns out this is unnecessary, as $\infty$-proofs are indeed closed under taking projections:

\begin{proposition}
\label{prop:proj-pres-prog}
    If $P$ is a $\muHKA$ $\infty$-proof, then so is $P(\vec n)$.
\end{proposition}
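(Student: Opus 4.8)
The plan is to prove Proposition~\ref{prop:proj-pres-prog} by contradiction: suppose $P$ is a $\infty$-proof but $P(\vec n)$ has some non-progressing infinite branch $\beta$. The key idea is to track $\beta$ back to an infinite branch of $P$, and to understand how the ``budget'' assigned to distinguished $\mu$-formulas evolves along it.

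First I would observe that the projection construction is \emph{local}: each sequent of $P(\vec n)$ is obtained from a sequent of $P$ by replacing certain distinguished occurrences $f_i = \mu X_i g_i(X_i)$ by approximants $g_i^{m_i}(0)$, and every rule step of $P(\vec n)$ is either (i) a copy of the corresponding step of $P$ (possibly with the distinguished tuple and its assignment unchanged, in the ``commuting'' case), or (ii) one of the two replacement patterns when a distinguished formula becomes $\lr\mu$-principal — an $\lr 0$ step (when the assigned number is $0$) or a $=$ step above a recursive call $Q(m, m{+}1, \vec n)$ (when the number is $m{+}1$). Crucially, in case (ii) the $\lr\mu$ step of $P$ is \emph{replaced} by an $\lr 0$ or $=$ step, so it does not contribute a $\lr\mu$ step to $\beta$; but whenever it is replaced by the second pattern, one distinguished approximant strictly decreases its index ($m{+}1 \mapsto m$) while possibly new occurrences get distinguished with a \emph{finite} assignment inherited from $m{+}1$. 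So along $\beta$, the multiset of indices of distinguished approximants, ordered by, say, the multiset ordering on $\Nat$ with the number of distinguished occurrences bounded (each distinguished $f_i$ splits into finitely many tracked occurrences), is non-increasing and strictly decreases each time a distinguished formula is $\lr\mu$-principal in the underlying branch of $P$. Hence only \emph{finitely many} of the $\lr\mu$ steps along the corresponding branch $\gamma$ of $P$ can have had a distinguished principal formula.

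Now let $\gamma$ be the branch of $P$ underlying $\beta$ (reading off, at each step, which step of $P$ was used). If $\gamma$ is finite then $\beta$ is finite, contradiction. If $\gamma$ is infinite, then since $P$ is a $\infty$-proof, $\gamma$ is progressing, i.e.\ has infinitely many $\lr\mu$ steps. By the previous paragraph, only finitely many of these have a distinguished principal formula, so infinitely many $\lr\mu$ steps of $\gamma$ have a \emph{non-distinguished} principal formula — and for each such step the projection construction ``commutes'', copying it verbatim as a genuine $\lr\mu$ step of $P(\vec n)$ on $\beta$. Hence $\beta$ has infinitely many $\lr\mu$ steps, i.e.\ $\beta$ is progressing, contradicting our assumption. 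This also shows $P(\vec n)$ is a genuine preproof (the coinductive definition is productive), though that is already ensured by the inserted $=$/repetition steps.

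The main obstacle I anticipate is the bookkeeping for how distinguished occurrences propagate: when a distinguished $\mu X e(X)$ is $\lr\mu$-principal with assignment $m{+}1$, the unfolding $e(e^m(0))$ creates occurrences of $e^m(0)$ that must themselves be distinguished (with assignment $m$) so that the recursion $Q(m, m{+}1, \vec n)$ typechecks, and one must check this does not cause the number of tracked occurrences to blow up unboundedly along a branch — it does not, because each original distinguished $f_i$ can only ``spawn'' descendants within the fixed finite formula $e$, so the relevant well-founded measure (a finite multiset of naturals, or equivalently $\omega^{<\omega}$ with a bounded exponent) is genuinely well-founded. Making this measure precise, and confirming that the ``commuting'' clause genuinely preserves $\lr\mu$ steps coming from non-distinguished formulas, are the two points requiring care; everything else is a routine branch-tracing argument, exactly analogous to \cite[Proposition~16]{DDS23}.
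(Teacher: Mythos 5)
Your overall skeleton — trace each infinite branch $\beta$ of $P(\vec n)$ back to an infinite branch $\gamma$ of $P$, observe that only the distinguished-principal $\lr\mu$ steps are replaced (by $=$ or $\lr 0$), and argue that only finitely many replacements can occur so that infinitely many genuine $\lr\mu$ steps survive — is exactly the paper's argument. The paper's sketch bounds the replacements via a size/$\K$-step count on the LHS, whereas you use the multiset of assignments to distinguished occurrences; the difference is cosmetic, and your measure is arguably the more direct one.

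There is, however, one genuine gap: your central claim that the multiset of assignments is \emph{non-increasing} along $\beta$ is false in general. Distinguished occurrences may sit as proper subformulas inside a \emph{non-distinguished} fixed point $\mu Y h(Y)$ (indeed this is forced after a single projection step, since the freshly distinguished occurrences live inside $e(\mu Xe(X))$, possibly under further binders). When such an enclosing $\mu Y h(Y)$ is $\lr\mu$-principal and $Y$ occurs more than once in $h$, the unfolding duplicates the distinguished occurrences it contains, each copy keeping the \emph{same} assignment $m$ — and $\{m\} \mapsto \{m,\dots,m\}$ is a strict \emph{increase} in the Dershowitz--Manna order. Your closing paragraph flags the worry about occurrences multiplying but answers the wrong question: the multiset order on finite multisets of naturals is well-founded without any cardinality bound, so well-foundedness was never the issue; monotonicity is, and duplication breaks it. The fix is short: every step at which the multiset can increase is a \emph{non-distinguished} $\lr\mu$ step, which the translation copies verbatim as a genuine $\lr\mu$ step of $\beta$. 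So, arguing for contradiction that $\beta$ is infinite but has only finitely many $\lr\mu$ steps, pass to the tail of $\beta$ above the last one; on that tail the multiset really is non-increasing and strictly decreases at each $=$ step, while $\gamma$ still supplies infinitely many $\lr\mu$ steps there, all necessarily distinguished-principal, hence infinitely many $=$ steps — contradicting well-foundedness. With that one-line repair (and dropping the unjustified "only finitely many distinguished-principal $\lr\mu$ steps on all of $\gamma$" in favour of the tail version) your proof is correct.
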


The proof of this result follows the same argument as~\cite[Proposition 18]{DDS23}. To briefly recall the idea:
\begin{proof}
    [Proof sketch]
    Each maximal branch $B$ of $P(\vec n)$ is a prefix of some branch $B'$ of $P$, by inspection of the translation, only with some $\lr\mu$ steps replaced by `$=$' steps or, when $B$ is finite, a $\lr 0 $ step.
    Thus if $B$ is infinite then it must have infinitely many $\K$ steps, since $B'$ must be progressing, and so must have infinitely many $\lr \mu $ steps too.
\end{proof}

In particular we have:
\begin{corollary}
[Projection]
\label{lem:projection}
    For each $\infty$-proof $P$ of $ \Gamma, \mu X e(X),\Gamma' \seqar S$, $P(n)$ is an $\infty$-proof of $\Gamma, e^n(0), \Gamma' \seqar S$, for each $n < \omega$.
\end{corollary}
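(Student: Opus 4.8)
The plan is to derive the corollary as the special case $k=1$ of \cref{prop:proj-pres-prog}. Concretely, I would apply the projection construction of the definition just given to the hypothesised $\infty$-proof $P$, taking the list of distinguished formulas to be the singleton $\vec f = (\mu X e(X))$ and distinguishing \emph{only} the displayed occurrence of $\mu X e(X)$ in the antecedent $\Gamma, \mu X e(X), \Gamma'$ — not the occurrences buried inside $\Gamma$, $\Gamma'$, or $S$, nor any occurring as subexpressions — and $\vec n = (n)$. Since $\vec f^{(n)} = (e^n(0))$ by definition, the construction returns a preproof $P(n)$ whose endsequent is $\Gamma, \mu X e(X), \Gamma' \seqar S$ with exactly that one occurrence replaced by $e^n(0)$, i.e.\ precisely $\Gamma, e^n(0), \Gamma' \seqar S$. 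The only thing left to verify is that $P(n)$ is progressing, and that is exactly what \cref{prop:proj-pres-prog} asserts.

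Two small points would be worth spelling out. First, the translation introduces `repetition' steps $=$ whose premise and conclusion coincide, because $e^{m+1}(0)$ is literally $e(e^m(0))$; I would remark that these can simply be contracted away (identifying premise with conclusion), so that $P(n)$ is a genuine $\muHKA$ preproof, and this clearly changes neither the endsequent nor the set of infinite branches. Second, to keep the appeal to \cref{prop:proj-pres-prog} self-contained in this instance, I would recall its argument specialised to $k=1$: any infinite branch $B$ of $P(n)$ is a (non-proper) prefix of an infinite — hence progressing — branch $B'$ of $P$, obtained from $B'$ by replacing the $\lr\mu$ steps whose principal formula is a descendant of the distinguished occurrence by $=$ steps (plus, when $B$ is finite, a terminal $\lr 0$ step); only finitely many such replacements occur along $B$, since each such step strictly decreases the natural number assigned to its thread and $e$ has only finitely many free occurrences of $X$, so these steps form a finitely branching, well-founded tree; hence $B$ retains infinitely many left-logical or $\K$ steps from $B'$, and since the symbol count of the LHS strictly decreases at $\lr\cdot$, $\lr 1$ and $\K$ steps while it increases only at $\lr\mu$ steps, $B$ must in fact contain infinitely many genuine $\lr\mu$ steps, so $B$ is progressing.

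The one genuinely delicate point — the reason \cref{prop:proj-pres-prog} is not completely routine — is the duplication of a distinguished occurrence when the bound variable occurs more than once in $e$: one distinguished $\mu$ with approximant index $m{+}1$ unfolds into several, all with strictly smaller index, and one must be sure this cannot keep a branch occupied with $=$ steps indefinitely. But since we are entitled to invoke \cref{prop:proj-pres-prog} as a black box, there is no obstacle in the corollary beyond the bookkeeping that distinguishing a single occurrence yields exactly the endsequent $\Gamma, e^n(0), \Gamma' \seqar S$; the substantive work has already been done upstream.
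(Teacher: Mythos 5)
Your proposal is correct and is exactly how the paper obtains this statement: the corollary is the immediate specialisation of \cref{prop:proj-pres-prog} to a single distinguished occurrence of $\mu X e(X)$ (the displayed one), with $\vec f^{(n)} = (e^n(0))$ giving the stated endsequent. The extra elaboration you give (contracting the $=$ steps, recalling the argument of \cref{prop:proj-pres-prog}) is fine but not needed, since the paper treats the proposition as already established.
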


From here it is simple to provide a translation from $\muHKA$ $\infty$-proofs to $\muHKAw$ preproofs, as in \cref{def:omega-trans} shortly.
However, to prove the image of the translation is \emph{wellfounded}, we shall need some structural proof theoretic machinery, which will also serve later use when dealing with greatest fixed points in \cref{sec:gfps,sec:mem-sound-compl-gfps}.

\subsection{Intermezzo: ancestry and threads}
Given an inference step $\infrule$, as typeset in \cref{fig:muHKA-rules}, we
say a formula occurrence $f$ in an upper sequent is an \defname{immediate ancestor} of a formula occurrence $e$ in the lower sequent if they have the same colour; furthermore if $e$ and $f$ are occur in a cedent $\Gamma,\Gamma',\Delta,\Delta'$, they must be the matching occurrences of the same formula (i.e.\ at the same position in the cedent); similarly if $e$ and $f$ occur in the RHS context $S$, they must be matching occurrences in matching lists.

Construing immediate ancestry as a directed graph allows us to characterise progress by consideration of its paths:
\begin{definition}
    [(Progressing) threads]
    \label{def:threads}
    Fix a preproof $P$. 
    A \defname{thread} is a maximal path in the graph of immediate ancestry. 
    An infinite thread on the LHS is \defname{progressing} if it is infinitely often principal (i.o.p.) for a $\lr \mu $ step.
\end{definition}

Our overloading of terminology is suggestive:
\begin{proposition}
\label{prop:lmu-fair-implies-exists-prog-thread}
    $P$ is progressing $\Leftrightarrow$ each branch of $P$ has a progressing thread.
\end{proposition}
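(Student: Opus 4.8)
The plan is to prove the two implications separately: $(\Leftarrow)$ is immediate, while $(\Rightarrow)$ is a König-style extraction on the thread forest of a fixed branch.

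For $(\Leftarrow)$ I would argue directly: if $B$ is an infinite branch carrying a progressing thread $\tau$, then each level at which $\tau$ is principal for a $\lr\mu$ step contributes a distinct $\lr\mu$ step on $B$, so $B$ has infinitely many of them and is progressing. (A finite branch carries no infinite thread, so ``every branch has a progressing thread'' is to be read over infinite branches, which are anyway the only ones relevant to progressiveness of $P$.)

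For $(\Rightarrow)$, fix an infinite progressing branch $B=(\sequent_i)_{i<\omega}$, which by definition has infinitely many $\lr\mu$ steps. First I would record a small observation from \cref{fig:muHKA-rules}: read upwards, no rule refills an emptied left cedent, and an empty left cedent admits no $\lr\mu$ step, so the LHS of every $\sequent_i$ is nonempty. Hence immediate ancestry restricted to the left cedents of $B$ forms an infinite, finitely branching forest $T_B$: the roots are the occurrences in the LHS of $\sequent_0$, each occurrence has at most two immediate ancestors (two only at a $\lr\cdot$ step) and a unique immediate descendant, so every node lies below a unique root. Since the infinitely many $\lr\mu$-principal occurrences along $B$ (one per $\lr\mu$ step) are spread over finitely many roots, pigeonhole yields a \emph{$\mu$-fertile} root, i.e.\ one whose subtree in $T_B$ contains infinitely many $\lr\mu$-principal occurrences; and a further pigeonhole shows every $\mu$-fertile node has a $\mu$-fertile child, since deleting the node costs at most one such occurrence, leaving infinitely many among its ($\leq 2$) children's subtrees. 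Descending through $\mu$-fertile children from a $\mu$-fertile root then produces an infinite thread $\tau$ all of whose nodes are $\mu$-fertile.

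The remaining point, which I expect to be the real content, is that this $\tau$ is genuinely progressing — plain König's lemma on $T_B$ does not give this, since an arbitrary infinite thread may never meet a $\lr\mu$ step (a formula may sit in a context forever). Here I would use that, along an LHS thread, any step for which the thread is principal either is a $\lr\mu$ step (unfolding $\mu X e(X)$ to $e(\mu X e(X))$), or strictly decreases the size of the formula carried ($\lr\cdot$, $\lr+$), or ends the thread ($\lr 0$, $\lr 1$, $\kl a$, $\kr a$); all other steps carry the thread unchanged. So if $\tau$ were $\lr\mu$-principal only finitely often, then by well-foundedness of formula size it would be principal at all only finitely often, hence eventually carried unchanged through every step — but then the subtree of $T_B$ rooted at a late enough node of $\tau$ is exactly the ray $\tau$ itself, containing no $\lr\mu$-principal occurrence, contradicting $\mu$-fertility. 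Thus $\tau$ is i.o.p.\ for $\lr\mu$, as required. (This is essentially the $\mu$-only specialisation of the standard branch/thread correspondence of non-wellfounded proof theory, so it could alternatively be obtained by citation.)
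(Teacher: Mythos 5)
Your proof is correct and follows essentially the same route as the paper's: a K\"onig-style extraction of an infinite thread from the finitely branching immediate-ancestry forest of a fixed branch, followed by the observation that every principal LHS step other than $\lr\mu$ strictly decreases formula size. The only (presentational) difference is that the paper applies K\"onig's lemma directly to the tree of \emph{principal} occurrences under direct ancestry, so the extracted path is i.o.p.\ by construction, whereas you descend through $\mu$-fertile nodes of the full ancestry forest and then separately rule out an eventually-non-principal ray.
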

\begin{proof}
    [Proof sketch]
    The $\impliedby $ direction is trivial.
    For $\implies$ direction we appeal to K\"onig's lemma.
    Fix a branch $B$ and take the subtree of its immediate ancestry graph with nodes the principal formula occurrences along $B$, and edges given by reachability (`direct ancestry'). 
    By progressiveness this tree is infinite, and by inspection of the rules it is finitely branching, thus it must have an infinite path by K\"onig's Lemma. 
    This path induces an infintely often principal thread along $B$, which in turn must be infinitely often $\lr\mu$ principal as every other LHS rule strictly decreases the size of formula.
\end{proof}

\begin{example}
    Recall the $\infty$-proof in \cref{ex:anbn-in-asbs}. 
    The only infinite branch, looping on $\bullet$, has a progressing thread indicated in \mgt{magenta}.
\end{example}

\begin{fact}
[See, e.g., \cite{Koz83:results-on-mu,KupMarVen22:graph-reps-mu-forms}]
\label{fact:io-threads-have-critical-formula}
    Any i.o.p.\ thread has a unique smallest i.o.p.\ formula, under the subformula relation.
    This formula must be a fixed point formula.
\end{fact}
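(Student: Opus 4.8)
\medskip

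The plan is to analyse an arbitrary i.o.p.\ thread $\tau = (f_i)_i$ directly, say on the left (the right-hand case is symmetric), using only the finiteness of Fischer--Ladner closures and inspection of the rules of $\muHKA$ in \cref{fig:muHKA-rules}. The starting observation is that each step $f_i \rightsquigarrow f_{i+1}$ along $\tau$ is of exactly one of four kinds: (i) $f_i$ is not principal, and $f_{i+1} = f_i$; (ii) $f_i$ is $\lr\cdot$- or $\lr +$-principal, and $f_{i+1}$ is a subformula of $f_i$ distinct from $f_i$; (iii) $f_i$ is $\lr\mu$-principal, say $f_i = \mu X e(X)$, and $f_{i+1} = e(\mu X e(X))$; or (iv) $f_i$ is $\lr 0$- or $\lr 1$-principal, and $\tau$ terminates. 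In cases (i)--(iii) we have $f_{i+1} \in \fl{f_i}$, hence $\fl{f_{i+1}} \subseteq \fl{f_i}$ (as $\fl\cdot$ is transitive); since each $\fl{f_i}$ is finite, this decreasing chain of sets stabilises, so past some point only finitely many distinct formulas occur on $\tau$. Passing to a further tail I may assume in addition that every formula occurring on $\tau$ occurs infinitely often there; write $T$ for this finite, non-empty set of formulas.

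Next I would show that the formula $m \in T$ of minimal size is the $\subform$-least element of $T$ (hence, being a minimum, unique). Fix $h \in T$ and choose indices $s < r$ lying beyond the point where $\tau$ only meets $T$-formulas, with $f_s = h$ and $f_r = m$; then every formula on the segment $f_s, \dots, f_r$ lies in $T$, so has size $\geq |m|$. A downward induction on $k$ from $r$ to $s$ shows $m \subform f_k$: cases (i) and (ii) are immediate by transitivity of $\subform$ (since $m \subform f_{k+1} \subform f_k$); in case (iii), with $f_k = \mu X e(X)$ and $f_{k+1} = e(\mu X e(X))$, any subformula of $e(\mu X e(X))$ — in particular $m$, by the induction hypothesis — is either a subformula of $e$ alone, hence smaller than and distinct from $f_k$, or contains the substituted copy of $\mu X e(X)$, hence has $f_k$ as a subformula; in the latter case $|f_k| \leq |m|$, so $f_k = m$. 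Either way $m \subform f_k$, and taking $k = s$ gives $m \subform h$.

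Finally I would identify the rules $m$ can be principal for along $\tau$. Since $\tau$ is infinite and i.o.p., $m$ must be principal infinitely often: otherwise, past the last step at which $m$ is principal, the thread — being infinite — would be forced to stay at $m$ forever via non-principal steps, contradicting i.o.p.\ Now $m$ cannot be $\lr 0$- or $\lr 1$-principal infinitely often, since those steps terminate $\tau$; nor $\lr\cdot$- or $\lr+$-principal infinitely often, since then a subformula of $m$ distinct from $m$ would occur infinitely often, i.e.\ lie in $T$, contradicting minimality of $m$. Hence $m$ is $\lr\mu$-principal infinitely often, so $m$ is a fixed point formula. Since moreover every formula the thread is infinitely often principal for lies in $T$ and $m = \min_\subform T$, this $m$ is precisely the unique smallest i.o.p.\ formula, completing the argument.

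The main obstacle is the $\lr\mu$ case of the downward induction in the second step: tracking whether a subformula of an unfolding $e(\mu X e(X))$ sits within the $e$-skeleton (so is strictly smaller than $\mu X e(X)$) or contains a substituted copy of $\mu X e(X)$ (so has $\mu X e(X)$ as a subformula), and thereby legitimately invoking minimality of $|m|$ on the segment. Setting up the ``pure'' tail correctly — so that every formula met between two occurrences of $T$-formulas is itself in $T$ — is exactly what makes this size comparison sound. The remainder is routine bookkeeping over the rules of \cref{fig:muHKA-rules}, and the whole argument is the expected specialisation of the standard thread analysis in $\mu$-calculus proof theory cited in the statement.
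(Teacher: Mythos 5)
Your argument is correct, and it is the standard one: the paper itself does not prove this Fact but defers to the cited literature, and your proof (tail stabilisation via finiteness and monotonicity of $\fl\cdot$ along the thread, then showing the size-minimal element of the stable set $T$ is its $\subform$-minimum, then ruling out every rule except $\lr\mu$ as the infinitely recurring principal one) is exactly the specialisation of that standard thread analysis to this system. The only cosmetic remark is that in the $\lr\mu$ case of your downward induction the dichotomy should really be a trichotomy — $m$ may also sit \emph{properly inside} a substituted copy of $\mu X e(X)$, but that sub-case gives $m \subform f_k$ immediately — so nothing is actually missing.
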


\subsection{Translation to \texorpdfstring{$\omega$}{}-branching system}

We are now ready to give a translation from $\muHKAnwf$ to $\muHKAw$.

\begin{definition}
    [$\omega$-translation]
    \label{def:omega-trans}
    For preproofs $P$ define $P^\omega$ by coinduction:
    \begin{itemize}
        \item $\cdot^\omega$ commutes with any step not a $\lr\mu$.
        {\smallskip}
        \item $\left( 
        \vlderivation{
        \vlin{\lr\mu}{}{\Gamma, \mu X e(X) , \Gamma' \seqar S}{
        \vltr{P}{\Gamma, e(\mu Xe(X)),\Gamma' \seqar S}{\vlhy{\quad }}{\vlhy{}}{\vlhy{\quad }}
        }
        }
        \right)^\omega 
        \ \df \ 
        \vlderivation{
        \vlin{\omega}{}{\Gamma, \mu X e(X),\Gamma' \seqar S}{
        \vlhy{
        \left\{
        \vltreeder{P(n)^\omega}{\Gamma, e^n(0),\Gamma' \seqar S}{\quad}{\ }{\quad }
        \right\}_{n<\omega}
        }
        }
        }
        $
    \end{itemize}
\end{definition}

 \cref{thm:muHKAnwf-to-muHKAw} now follows immediately from the following result, obtained by analysis of progressing threads in the image of the $\omega$-translation:
\begin{lemma}
\label{lem:omega-trans-is-wf}
     $P$ is progressing $\implies$ $P^\omega$ is wellfounded.
\end{lemma}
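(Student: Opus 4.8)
The plan is to argue by contraposition: assuming $P^\omega$ has an infinite branch $B$, I will exhibit an infinite branch of $P$ with no progressing thread, contradicting progressiveness of $P$ via \cref{prop:lmu-fair-implies-exists-prog-thread}. First I would observe that the $\omega$-translation acts locally: every step of $P^\omega$ that is not an $\omega$-rule is a copy of a corresponding step of $P$ (possibly applied inside a projected subproof $P(\vec n)^\omega$), and every $\omega$-rule in $P^\omega$ records a $\lr\mu$ step of the underlying preproof, together with a choice of projection parameter $n<\omega$ on passing to the premiss. So an infinite branch $B$ of $P^\omega$ determines, by reading off the underlying rule instances and by tracking through each projection $P(n)$ which branch of $P$ (resp.\ of the already-projected proof) is being followed, an infinite branch $\hat B$ of $P$. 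The key bookkeeping point — exactly as in \cite[Proposition 18, and the wellfoundedness argument]{DDS23} — is that each time $B$ crosses an $\omega$-rule, the distinguished fixed-point occurrence that was principal gets \emph{assigned} a finite ordinal, and subsequent projection strictly decreases this ordinal every time that thread is $\lr\mu$-principal again (the $n+1 \mapsto n$ clause), until it is discharged by a $\lr 0$ step.

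The main step is then to analyse threads. Suppose toward a contradiction that $\hat B$ has a progressing thread $\tau$; by \cref{fact:io-threads-have-critical-formula} $\tau$ has a unique smallest infinitely-often-principal formula, a fixed point $\mu X e(X)$. I would track the fate of the occurrences along $\tau$ through the $\omega$-translation and the accompanying projections: the first time (after $B$ has gone far enough) that this $\mu$-occurrence is $\lr\mu$-principal along $\hat B$, the corresponding step in $P^\omega$ is an $\omega$-rule, so on $B$ it is replaced by passing to some premiss $\Gamma, e^{n}(0), \Gamma' \seqar S$, i.e.\ the occurrence is replaced by $e^{n}(0)$ with a \emph{finite} unfolding budget $n$. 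From that point on, every further $\lr\mu$-principal step of $\tau$ (of which there are infinitely many, since $\tau$ is progressing) is translated, by the $P(n+1,\dots)\df\dots$ clause of the projection, into a step that strictly decreases the remaining budget; and no clause ever increases it. Hence $\tau$ can be $\lr\mu$-principal at most $n$ more times along $\hat B$ beyond that point — contradicting that it is infinitely often principal. Therefore $\hat B$ has no progressing thread, so by \cref{prop:lmu-fair-implies-exists-prog-thread} $P$ is not progressing, completing the contraposition.

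The delicate part — and the main obstacle — is making precise the correspondence between branches of $P^\omega$ and branches of $P$, since $P^\omega$ is built from iterated projections $P(n)^\omega$ of subproofs, not from $P$ directly, so one must verify that threads are tracked coherently across the nesting of projections and that the "budget" assigned to a critical $\mu$-formula genuinely is monotone non-increasing along $B$ and strictly decreasing at $\lr\mu$ steps of that thread. This is exactly the content of the argument in \cite[Proposition 18]{DDS23} (and the subsequent wellfoundedness lemma there), specialised to the present setting where there is no $\nu$ and hence no alternation to worry about; I would set up the projection-tracking via an explicit correspondence on formula occurrences (the distinguished occurrences in the definition of $P(\vec n)$ are precisely the ones carrying a budget) and then the ordinal-descent argument is routine. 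An alternative packaging, which I would mention, is to assign to each sequent on $B$ the multiset of budgets of its distinguished $\mu$-occurrences and observe this decreases in the multiset ordering infinitely often if $\hat B$ had a progressing thread, again impossible.
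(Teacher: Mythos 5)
Your proposal is correct and follows essentially the same route as the paper's proof: identify each infinite branch of $P^\omega$ with an underlying branch of $P$ together with the premiss choices at $\omega$-steps, take the smallest i.o.p.\ $\mu$-formula of a progressing thread, and observe that the finite budget assigned at its first $\omega$-step is strictly decremented by the projection clause at each subsequent principal step, which is impossible infinitely often. The only difference is presentational (you argue by contraposition from an infinite branch of $P^\omega$, the paper directly bounds the height of each branch by the height of the $(k+1)$\textsuperscript{th} unfolding), and both defer the same bookkeeping about nested projections to \cite[Lemma 23]{DDS23}.
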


The proof of~\Cref{lem:omega-trans-is-wf} follows the same argument as for the analogous result in~\cite[Lemma 23]{DDS23}.

\begin{proof}
    [Proof sketch]
    Each branch of $P^\omega$ can be specified by a branch $B$ of $P$ and a (possibly infinite) sequence of natural numbers $\vec n \in \omega^{\leq \omega}$, specifying at each $\omega$-step which premiss to follow. Call this $B^{\vec n}$.
    Now, take a progressing thread along $B$ and consider its smallest i.o.p.\ $\mu$-formula, say $\mu X e(X)$.
    If $B^{\vec n}$ follows, say the $k$\textsuperscript{th} premiss at the first corresponding principal step for $\mu X e(X)$ in $B$, write $K$ for height of the $(k+1)$\textsuperscript{th} unfolding of $\mu X e(X)$ in $B$.
    It follows from inspection of the $\omega$-translation and projections that $B^{\vec n}$ has height $\leq K$.
\end{proof}

\begin{example}
    Recalling \cref{ex:anbn-in-asbs}, let us see the $\omega$-translation of $R$ in \eqref{eq:anbn-in-asbs}.
    First, let us (suggestively) write $\anbnapprx n $ for the $n$\textsuperscript{th} approximant of $\anbn$, i.e.\ $\anbnapprx 0 \df 0$ and $\anbnapprx {n+1} \df 1 + a \anbnapprx n b$. 
    Now $R^\omega$ is given below, left, where recursively $R(0) \df \vlinf{\lr 0}{}{0\seqar \asbs}{}$ and $R(n+1)$ is given below, right:
    \[
    \newcommand{\storage}{R(n)}
    \vlinf{\omega,\rr\mu}{}{\mgt\anbn \seqar [\asbs]}{
    \left\{
    \vltreeder{\storage}{\mgt{\anbnapprx n} \seqar [\asbs]}{\quad}{}{\quad }
    \right\}_{n<\omega}
    }
    \quad
    ;
    \ 
        \vlderivation{
        \vliin{\lr + }{}{\mgt{1 + a \anbnapprx n b} \seqar [\asbs]}{      
            \vliq{}{}{1  \seqar [\asbs ]}{
            \vlin{\lr 1 , \rr 1}{}{1 \seqar [1]}{
            \vlin{\id}{}{\seqar [\, ]}{\vlhy{}}
            }
            }
        }{
            \vliq{}{}{\mgt{a\anbnapprx n b} \seqar [ \asbs ]}{
            \vlin{\lr \cdot, \rr \cdot}{}{\mgt{a\anbnapprx n  b} \seqar [  a\asbs  ]}{
            \vlin{\kl a}{}{a,\mgt{\anbnapprx n  b} \seqar [  a,\asbs  ]}{
            \vlin{\rr \mu,\rr +,\wk}{}{\mgt{\anbnapprx n  b} \seqar [  \asbs  ]}{
            \vlin{\lr \cdot, \rr \cdot}{}{\mgt{\anbnapprx n b} \seqar [  \asbs b  ]}{
            \vlin{\kr b}{}{\mgt{\anbnapprx n } ,b \seqar [  \asbs ,b  ]}{
            \vltr{\storage}{\mgt{\anbnapprx n } \seqar \asbs}{\vlhy{\quad}}{\vlhy{}}{\vlhy{\quad}}
            }
            }
            }
            }
            }
            }
}
        }
    \]
\end{example} %
\section{Greatest fixed points and \texorpdfstring{$\omega$}{}-languages}
\label{sec:gfps}

We extend the grammar of expressions from \eqref{eq:grammar-of-mu-exprs} by:
\[
e,f \dots \quad \bnf \quad
\dots \quad \mid \quad \nu X e(X)
\]
We call such expressions \emph{$\mu\nu$-expressions} when we need to distinguish them from ones without $\nu$.
The notions of a \emph{(left-)productive} and \emph{(left-)guarded} expression are defined in the same way, extending the grammar of \eqref{eq:prod-exprs-mu-only} by the clause $\nu X p$.

As expected $\mu\nu$-expressions denote languages of finite and infinite words:
\begin{definition}
    [Intended semantics of $\mu\nu$-expressions]
We extend the notation $vw$ to all $v,w\in \Alphabet^{\leq \omega}$ by setting $vw = v$ when $|v|= \omega$.
    We extend the definition of $\lang \cdot $ from \cref{def:reg-lang-model} to all $\mu\nu$-expressions by setting $\wlang {\nu X e(X)} \df \bigcup \{ A \subseteq \wlang {e(A)} \}$ where now $A$ varies over subsets of $\Alphabet^{\leq \omega}$.
\end{definition}
Again, since all the operations are monotone, $\wlang{\nu X e(X)}$ is indeed the greatest fixed point of the operation $A\mapsto \lang{e(A)}$, by the Knaster-Tarski theorem.
In fact ($\omega$-)languages computed by $\mu\nu$-expressions are just the `$\omega$-context-free languages' ($\omega$-CFLs), cf.~\cite{CG77,Linna76}, defined as the {`Kleene closure'} of CFLs:

\begin{definition}[$\omega$-context-free languages]
For $A\subseteq \Alphabet^{+}$ we write $A^\omega \df \{w_0w_1w_2\cdots : \forall i<\omega \, w_i \in A\}$.
The class of \defname{$\omega$-CFLs} ($\oCF$) is defined by:
\[\oCF\ :=\ \left\{\bigcup_{i<n} A_iB_i^{\omega} \ : \  n<\omega ; \, A_i,B_i \text{ context-free and } \epsilon \notin A_i,B_i, \, \forall i<n \right\}\]    
\end{definition}

It is not hard to see that each $\omega$-CFL is computed by a $\mu\nu$-expression, by noting that $\lang e^\omega = \lang {\nu X (eX)}$:

\begin{proposition}
    \label{prop:ocf-is-munu-def}
    $L\in \oCF \implies L = \lang e$ for some left-productive $e$.
\end{proposition}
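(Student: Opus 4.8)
The plan is to prove $L \in \oCF \implies L = \lang e$ for some left-productive $\mu\nu$-expression $e$, by unwinding the definition of $\oCF$ and building $e$ compositionally. By definition, $L = \bigcup_{i<n} A_i B_i^\omega$ where each $A_i, B_i$ is context-free with $\epsilon \notin A_i, B_i$. Since our syntax already contains $+$ for finite unions, it suffices to handle a single term $A B^\omega$ and then take $\sum_{i<n}$ of the resulting expressions; note that a sum of left-productive expressions is again left-productive.

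For a single term, first I would apply \cref{prop:cf-are-mu-definable} to obtain left-productive expressions $e_A$ and $e_B$ with $\lang{e_A} = A$ and $\lang{e_B} = B$, using that $\epsilon \notin A, B$. Next I would set $e \df e_A \cdot \nu X (e_B X)$. This expression is left-guarded: $e_A$ is left-productive hence left-guarded, $e_B$ is left-productive, so $e_B X$ is left-productive (the clause $p \cdot e$), hence $\nu X (e_B X)$ is left-productive (the clause $\nu X p$), and therefore the product $e_A \cdot \nu X(e_B X)$ is left-productive (again $p \cdot e$ on the left, or rather $e_A$ being left-productive already suffices). So $e$ is left-productive as required.

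It remains to verify $\lang e = A B^\omega$. The key computation is $\lang{\nu X(e_B X)} = B^\omega$, which I would justify via the Knaster--Tarski characterisation: $\wlang{\nu X(e_B X)}$ is the greatest fixed point of the monotone operator $Y \mapsto \{ vw : v \in B, w \in Y\}$ on $\pow{\Alphabet^{\le\omega}}$. One checks directly that $B^\omega$ is a fixed point (using $\epsilon \notin B$, so every element of $B^\omega$ is genuinely infinite and factors as $v w$ with $v \in B$, $w \in B^\omega$), and that it is the greatest one: if $Y$ is any fixed point and $w \in Y$, then iterating the fixed-point equation produces a factorisation $w = v_0 v_1 v_2 \cdots$ with each $v_i \in B$, and since $\epsilon \notin B$ each $|v_i| \ge 1$, so $w$ is infinite and lies in $B^\omega$ — hence $Y \subseteq B^\omega$. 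Then $\lang{e} = \lang{e_A}\lang{\nu X(e_B X)} = A B^\omega$ by the semantic clause for $\cdot$ (recalling the convention $vw = v$ when $|v| = \omega$, which here is harmless since elements of $A$ are finite). I expect the main obstacle to be the careful verification that $B^\omega$ is precisely the greatest fixed point rather than merely a fixed point contained in it — in particular making the "iterate the fixed-point equation" argument rigorous, which amounts to observing that any $w$ in an arbitrary post-fixed point admits an infinite factorisation into nonempty $B$-blocks and hence belongs to $B^\omega$.
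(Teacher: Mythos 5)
Your proposal is correct and follows essentially the same route as the paper: apply \cref{prop:cf-are-mu-definable} to get left-productive $e_i, f_i$ for $A_i, B_i$ and form $\sum_i e_i\,\nu X(f_i X)$. The paper leaves the identity $\lang{\nu X(f_i X)} = B_i^{\omega}$ as an observation, whereas you verify it via Knaster--Tarski; that verification is sound and fills in exactly the detail the paper elides.
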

\begin{proof}
Given $L=\bigcup_{i=1}^n A_iB_i^{\omega}$, by~\cref{prop:cf-are-mu-definable} let $e_i$ and $f_i$ be left-productive with $\lang{e_i}=A_i$, $\lang{f_i}=B_i$ for $i=1,\dots, n$.
Then, $L=\lang{\sum_{i=1}^n e_i\nu X(f_iX)}$.    
\end{proof}

We shall address the converse of this result later.
First let us present our system for $\mu\nu$-expressions, a natural extension of $\muHKA$ earlier:

\medskip

\begin{definition}
    [System]
    The system $\munuHKA$ extends $\muHKA$ by the rules:
\begin{equation}
    \label{eq:nu-rules}
    \vlinf{\lr\nu}{}{\orange \Gamma, \mgt{\nu X e(X)}, \olv{ \Gamma'} \seqar \purple S}{\orange \Gamma, \mgt{e(\nu X e(X))}, \olv {\Gamma'} \seqar \purple S}
\quad
\vlinf{\rr\nu}{}{\orange\Gamma \seqar \purple S, [\blue \Delta, \mgt{\nu X e(X)}, \green {\Delta'}]}{\orange \Gamma \seqar \purple S, [\blue \Delta, \mgt{e(\nu X e(X)}, \green {\Delta'}]}
\end{equation}
\emph{Preproofs} for this system are defined just as for $\muHKA$ before.
The definitions of \emph{immediate ancestor} and \emph{thread} for $\munuHKA$ extends that of $\muHKA$ from \cref{def:threads} according to the colouring above in \eqref{eq:nu-rules}.
\end{definition}

\noindent
However we must be more nuanced in defining progress, requiring a definition at the level of threads as in \cref{sec:inf-ax-cf-theory}.
Noting that Fact~\ref{fact:io-threads-have-critical-formula} holds for our extended language with $\nu$s as well as $\mu$s, we call an i.o.p.\  thread a \defname{$\mu$-thread} (or \defname{$\nu$-thread}) if its smallest i.o.p.\ formula is a $\mu$-formula (or $\nu$-formula, respectively).

\begin{definition}
    [Progress]
    Fix a preproof $P$. 
    We say that an infinite thread $\tau$ along a (infinite) branch $B$ of $P$ is \defname{progressing} if it is i.o.p.\ and it is a $\mu$-thread on the LHS or it is a $\nu$-thread on the RHS.
    $B$ is \defname{progressing} if it has a progressing thread.
    $P$ is a $\infty$-\defname{proof} of $\munuHKA$ if each of its infinite branches has a progressing thread.
\end{definition}

\begin{figure}
    \[
\vlderivation{
\vlin{\rr \mu}{}{e \seqar [\mgt f]}{
\vlin{\rr +}{}{e \seqar [\mgt{b+ \nu X(afX)}]}{
\vlin{\wk}{}{e \seqar [b],[\mgt{\nu X(afX)}]}{
\vlin{\rr \nu}{\bullet}{e \seqar [\underline{\mgt{\nu X(afX)}}]}{
\vliq{\lr \cdot, \rr \cdot}{}{abe \seqar [\mgt{af\nu X (afX)}]}{
\vlin{\kk a}{}{a,b,e \seqar [a,f,\mgt{\nu X (a f X)}]}{
\vlin{\rr \mu}{}{b,e \seqar [f,\mgt{\nu X (a f X)}]}{
\vlin{\rr +}{}{b,e \seqar[b+ \nu X(aYX),\mgt{\nu X(afX)}] }{
\vlin{\wk }{}{b,e \seqar [b,\mgt{\nu X (afX)}],[\nu X(aYX),\nu X(afX)] }{
\vlin{\kk b}{}{b,e\seqar [b,\mgt{\nu X (a f X)}] }{
\vlin{\rr \nu}{\bullet}{e \seqar [\mgt{\nu X (afX)}] }{\vlhy{\vdots}}
}
}
}
}
}
}
}
}
}
}
}
\]
    \caption{A $\lmunuHKA$ $\infty$-preproof of $e\seqar [f]$, where $e \df \nu Z (abZ) $ and $f \df \mu Y(b + \nu X (aYX))$.}
    \label{fig:braiding-example}
\end{figure}

\begin{example}
\label{ex:r1-example}
Write $e \df \nu Z (abZ) $ and $f \df \mu Y(b + \nu X (aYX))$.
The sequent $e\seqar [f]$ has a preproof given in \cref{fig:braiding-example}.
This preproof has just one infinite branch, looping on $\bullet$, which indeed has a progressing thread following the \mgt{magenta} formulas. 
The only fixed point infinitely often principal along this thread is $\nu X (afX)$, which is principal at each $\bullet $. Thus this preproof is a proof and $e\seqar [f]$ is a theorem of $\lmuHKAnwf$.

Note that, even though this preproof is progressing, the infinite branch's smallest i.o.p.\ formula on the RHS is \emph{not} a $\nu$-formula, e.g.\ given by the \mgt{magenta} thread, as $f$ is also i.o.p. Let us point out that (a) the progressiveness condition only requires \emph{existence} of a progressing thread, even if other threads are not progressing (like the unique LHS thread above).
\end{example}

\subsection*{Some necessary conventions: left-guarded and leftmost}
Crucially, due to the asymmetry in the definition of the product of infinite words, we must employ further conventions to ensure soundness and completeness of $\infty$-proofs for $\lang \cdot$.
Our choice of conventions is inspired by the usual `leftmost' semantics of `$\omega$-CFGs', which we shall see in the next section.

\smallskip

First, we shall henceforth work with a \emph{lefmost} restriction of $\munuHKA$ in order to maintain soundness for $\lang \cdot$:
\begin{definition}
    A $\munuHKA$ preproof is \defname{leftmost}
    if each logical step has principal formula the leftmost formula of its cedent, and there are no $\kr{}$-steps.
    Write $\lmunuHKA$ for the restriction of $\munuHKA$ to only leftmost steps and $\lmunuHKAnwf$ for the class of $\infty$-proofs of $\lmunuHKA$.
\end{definition}

\noindent
 We must also restrict ourselves to left-guarded expressions in the sequel:

\begin{convention}
Henceforth, all expressions are assumed to be left-guarded.
\end{convention}

\noindent
Let us justify both of these restrictions via some examples.

\begin{remark}
    [Unsound for non-leftmost]
    Unlike the $\mu$-only setting it turns out that $\munuHKAnwf$ is unsound without the leftmost restriction, regardless of left-guardedness. 
    For instance consider the preproof,
    \[
    \vlderivation{
    \vlin{\rr\nu}{\bullet}{\seqar [\mgt{\nu X (aX)}]}{
    \vlin{\rr\cdot}{}{\seqar [\mgt{a\nu X(aX)}]}{
    \vlin{\rr\nu}{a,\bullet}{\seqar [a,\mgt{\nu X (aX)}]}{\vlhy{\vdots}}
    }
    }
    }
    \]
    where $a,\bullet$ roots the same subproof as $\bullet$, but for an extra $a$ on the left of every RHS.
    Of course the endsequent is not valid, as the LHS denotes $\{\epsilon\}$ while the RHS denotes $\{a^\omega\}$.
    Note also that, while it is progressing thanks to the thread in \mgt{magenta}, it is not leftmost due to the topmost displayed $\rr\nu$ step.
\end{remark}

\begin{remark}
    [Incomplete for unguarded]
    On the other hand, without the left-guardedness restriction, $\lmunuHKAnwf$ is not complete. 
    For instance the sequent $\nu X X \seqar [\;] , \{ [a , \nu X X]\}_{a \in \Alphabet}$ is indeed valid as both sides compute all of $\pow{\Alphabet^{\leq \omega}}$: any word is either empty or begins with a letter.
    However the only available (leftmost) rule application, bottom-up, is $\lr \nu$, which is a fixed point of leftmost proof search, obviously not yielding a progressing preproof.
\end{remark}

\section{Metalogical results: a game-theoretic approach}
\label{sec:mem-sound-compl-gfps}

Now we return to addressing the expressiveness of both the syntax of $\mu\nu$-expressions and our system $\lmunuHKAnwf$, employing game-theoretic methods.

\subsection{Evaluation puzzle and soundness}
\begin{figure}[t]
    \centering
    \begin{tabular}{|c|c|}
        \hline
        Position & Available move(s)
        \\\hline
        $(aw,[a,\Delta])$ & $(w,\Delta)$\\
        $(w,[1,\Delta])$ & $(w,\Delta)$\\
        $(w,[e+f,\Delta])$ & $(w,[e,\Delta]),(w,[f,\Delta])$ \\
        $(w,[ef,\Delta])$ & $(w,[e,f,\Delta])$ \\
        $(w,[\mu Xf(X),\Delta])$ & $(w,[f(\mu Xf(X)),\Delta])$ \\
        $(w,[\nu Xf(X),\Delta])$ & $(w,[f(\nu Xf(X)),\Delta])$ \\
        \hline
    \end{tabular}    
    \caption{Rules of the evaluation puzzle.}
    \label{fig:eval-rules}
\end{figure}

As an engine for our main metalogical results about $\lmunuHKA$, and for a converse to \cref{prop:ocf-is-munu-def}, we first characterise membership via games:

\begin{definition}
The \defname{evaluation puzzle} is a puzzle (i.e.\ one-player game) whose positions are pairs $(w,\Gamma)$ where $w\in \Alphabet^{\leq \omega}$ and $\Gamma$ is a cedent, i.e.\ a list of $\mu\nu$-expressions. A \defname{play} of the puzzle runs according to the rules in \cref{fig:eval-rules}: puzzle-play is deterministic at each state except when the expression is a sum, in which case a choice must be made. During a play of the evaluation puzzle, formula ancestry and threads are defined as for $\lmunuHKA$ preproofs,  by associating each move with the LHS of a left logical rule. 
A play is \defname{winning} if:
\begin{itemize}
    \item it terminates at the winning state $(\epsilon,[\;])$; or,
    \item it is infinite and has a $\nu$-thread (along its right components).
\end{itemize}
\end{definition}

\begin{example}
Define $d\df \mu X(\langle\rangle + \langle X\rangle X)$, the set of non-empty well-bracketed words. Let $d^\omega\df \nu Y dY$. Let us look at a play from $(\langle^\omega,[d^\omega])$. 

\begin{center}
\begin{tikzcd}[cramped, sep=small] 
\arrow[r] & \scriptstyle(\langle^\omega,[d^\omega])
\arrow[r] & \scriptstyle(\langle^\omega,[dd^\omega])
\arrow[r] & \scriptstyle(\langle^\omega,[d,d^\omega])
\arrow[r] & \scriptstyle(\langle^\omega,[\langle\rangle+\langle d\rangle d,d^\omega]) 
\arrow[r] & \scriptstyle(\langle^\omega,[\langle d\rangle d,d^\omega]) \arrow[d]\\
& & \dots & \scriptstyle(\langle^\omega,[d,\rangle d,d^\omega]) \arrow[l] & \scriptstyle(\langle^\omega,[d\rangle d,d^\omega]) \arrow[l] & \scriptstyle(\langle^\omega,[\langle, d\rangle d,d^\omega]) \arrow[l] 
\end{tikzcd}
\end{center}

The play continues without $d^\omega$ ever being principal (essentially, going into deeper and deeper nesting to match a $\langle$ with a $\rangle$). Since even the first match is never made, there is no hope of progress. The play (and, in fact, any play) is thus losing. On the other hand, consider the following play from $(u,[d^\omega])$ where $u=(\langle\rangle)^\omega$. It is, indeed winning and its $\nu$-thread is indicated in \mgt{magenta}.

\begin{center}
\begin{tikzcd}[cramped, sep=small] 
\arrow[r] & \scriptstyle(u,[\mgt{d^\omega}])
\arrow[r,"_2" near end] & \scriptstyle(u,[d,\mgt{d^\omega}])
\arrow[r] & \scriptstyle(u,[\langle\rangle+\langle d\rangle d,\mgt{d^\omega}]) 
\arrow[r] & \scriptstyle(u,[\langle\rangle,\mgt{d^\omega}]) 
\arrow[r] & \scriptstyle(u,[\langle,\rangle,\mgt{d^\omega}]) 
\arrow[r] &\scriptstyle(\rangle u,[\rangle,\mgt{d^\omega}]) \arrow[lllll,in=-10, out=-170]
\end{tikzcd}
\end{center}

\end{example}

\begin{theorem}
[Evaluation]
\label{thm:evaluation-theorem}
    $w \in \wlang\Gamma$ $\Leftrightarrow $ there is a winning play from $(w,\Gamma)$.
\end{theorem}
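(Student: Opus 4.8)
The plan is to prove both directions by exploiting the fact that $\wlang{\cdot}$ is defined via the Knaster--Tarski fixed point theorem, so least fixed points are characterised by well-founded ``unfolding'' witnesses and greatest fixed points by invariants closed under unfolding. The puzzle rules in \cref{fig:eval-rules} are exactly the left-logical rules of $\lmunuHKA$ read as a rewriting on pairs $(w,\Gamma)$, so a play is essentially a branch of attempted bottom-up proof search on the sequent $\Gamma \seqar$ (suitable RHS), and the winning condition mirrors the $\infty$-proof progress condition. The key bookkeeping device will be to track, for an infinite play, the smallest infinitely-often-principal formula along a thread, invoking \cref{fact:io-threads-have-critical-formula} (noted to hold with $\nu$s), together with the observation that between two successive principal steps of a fixed point formula, all the ``inner'' formulas that were unfolded are strictly smaller subformulas and hence are only finitely-often principal, so they are genuinely consumed.

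For the $(\Leftarrow)$ direction (a winning play witnesses membership), I would argue by a kind of signature/ordinal induction following the play. If the play is finite and reaches $(\epsilon,[\;])$, then reading the moves backwards and using the semantic clauses (each puzzle move from $(w,\Gamma)$ to $(w',\Gamma')$ satisfies: if $w' \in \wlang{\Gamma'}$ then $w \in \wlang{\Gamma}$ — this is just \cref{lemma:local-soundness}/\cref{lem:invertibility} restricted to left rules, or checked directly) gives $w \in \wlang{\Gamma}$. If the play is infinite with a $\nu$-thread whose critical formula is $\nu X g(X)$, I would peel off the maximal finite prefix before this formula first becomes principal using the finite case, reducing to the situation where $\nu X g(X)$ occurs in $\Gamma$; then I would build the witnessing set $A = \{ (v, \text{position of a suffix play starting at a }\nu X g(X)\text{-principal state}) \}$ — more precisely, let $A$ be the set of words $v$ such that some state $(v,\Delta)$ occurs on the play-suffix with $\Delta$ obtained by ``plugging $\nu X g(X)$ back in'' for its unfolding — and show $A \subseteq \wlang{g(A)}$ by running the play between consecutive $\nu X g(X)$-principal states, during which only strictly smaller subformulas of $g(\nu X g(X))$ are principal (by minimality of the critical formula) so that finite-case soundness applies to those segments, and the $\nu$-thread guarantees we keep returning. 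Hence $A \subseteq \wlang{\nu X g(X)}$ by Knaster--Tarski, giving $w \in \wlang{\Gamma}$.

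For the $(\Rightarrow)$ direction (membership yields a winning play), I would describe a strategy for choosing the branch at each $+$-move and show the resulting play is winning. The idea is to maintain the invariant $w_i \in \wlang{\Gamma_i}$ at every state $(w_i,\Gamma_i)$: deterministic moves preserve this by the semantic clauses, and at a sum $[e+f,\Delta]$ at least one of $[e,\Delta],[f,\Delta]$ keeps the invariant (choose that one; when both work, pick by some fixed tie-break, e.g.\ prefer the branch minimising a suitable ordinal measure — see below). If the play so generated terminates, it must terminate at $(\epsilon,[\;])$ since no other terminal state has a nonempty language, so it is winning. If it is infinite, I must show it has a $\nu$-thread. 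For this, the main work is to rule out that the play is ``stuck'' in $\mu$-unfoldings forever: assign to each $\mu$-subformula occurrence in the current cedent an ordinal approximant rank (the least $\alpha$ such that $w$-relevant membership holds at stage $\alpha$ of the iteration computing that $\mu$), make the tie-break at sums respect lexicographic decrease of the multiset of these ranks, and observe that unfolding a $\mu$ strictly decreases its rank while $\K$-steps and consuming letters do not increase the ranks of the surviving formulas — so no LHS/left thread can be a $\mu$-thread, and by \cref{prop:lmu-fair-implies-exists-prog-thread}-style reasoning (König's lemma on the ancestry tree of principal formulas along the branch) some infinitely-often-principal thread exists and it must be a $\nu$-thread. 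Reading the winning condition off, the play is winning.

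\textbf{Main obstacle.} I expect the delicate point to be the $(\Rightarrow)$ direction's argument that the invariant-preserving strategy cannot loop forever through $\mu$-unfoldings without ever ``cashing out'' — i.e.\ the well-foundedness of the $\mu$-approximant measure under the puzzle dynamics, and in particular checking that the measure is not spoiled by the interleaving of $\mu$ and $\nu$ unfoldings and by letter-consumption / the implicit $\K$-moves (the clause $(aw,[a,\Delta]) \to (w,\Delta)$). Handling this cleanly may require either an explicit ordinal-annotated version of the semantics (signatures, as in standard proofs for the modal $\mu$-calculus) or an appeal to the determinacy / positional determinacy of the associated parity-flavoured condition; I would lean on \cref{fact:io-threads-have-critical-formula} throughout to reduce the analysis to a single critical formula per thread. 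The $(\Leftarrow)$ direction is comparatively routine given Knaster--Tarski and the local soundness of left rules, modulo carefully setting up the invariant set $A$ for the $\nu$-thread case.
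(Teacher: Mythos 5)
Your plan for the $(\Rightarrow)$ direction is essentially the paper's argument: the paper fixes a least $\mu$-signature $\lv\alpha$ with $w\in\wlang{\Gamma^{\lv\alpha}}$ (signatures being vectors of ordinal approximants indexed by the $\mu$-formulas in the dependency order, assigned \emph{per cedent position}), builds the play so that the signature never increases and strictly decreases at $\mu$-unfoldings, and concludes that the smallest i.o.p.\ formula cannot be a $\mu$ on pain of an infinite descending ordinal sequence --- exactly your ``approximant rank with a descent-respecting tie-break'' idea, so your instinct that signatures are needed there is right. Where you genuinely diverge is the $(\Leftarrow)$ direction: you propose a direct coinductive argument, exhibiting an invariant set $A$ for the critical $\nu$-formula and verifying $A\subseteq\wlang{g(A)}$ via local soundness of the moves between consecutive principal positions, whereas the paper proves the contrapositive --- if $w\notin\wlang\Gamma$ then by the dual approximation there are $\nu$-signatures witnessing non-membership at every state of any play, and a winning play (i.e.\ one with a $\nu$-thread) would again yield a strictly descending ordinal sequence. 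The paper's route buys symmetry (both directions are the same ordinal-descent argument, run once on $\mu$-signatures and once on $\nu$-signatures) and, more importantly, it avoids the step you label ``comparatively routine'' but which is in fact the delicate one: your set $A$ must consist of words that are members of the \emph{single formula} $\nu Xg(X)$, yet the play states only certify that the residual word lies in the language of the \emph{whole cedent} (a product), so extracting the per-formula decomposition of the word --- deciding how much of the remaining word is ``owned'' by the threaded formula versus the trailing context $\Delta$, across infinitely many unfoldings --- requires precisely the cedent-indexed bookkeeping that signatures provide. Your invariant-set construction can likely be repaired (left-guardedness forces infinitely many letters to be consumed between principal $\nu$-steps, so the threaded formula can be made to absorb the whole infinite tail, and the product convention $uv=u$ for $|u|=\omega$ then discharges $\Delta$), but as written that step is under-specified, and you have the relative difficulty of the two directions inverted.
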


The proof is rather involved, employing the method of `signatures' common in fixed point logics, cf.\ e.g.\ \cite{NiwWal96:games-mu-calc}, which serve as `least witnesses' to word membership via carefully managing \emph{ordinal approximants} for fixed points. 
Here we must be somewhat more careful in the argument because positions of our puzzle include \emph{cedents}, not single formulas: we must crucially assign signatures to \emph{each} formula of a cedent.
Working with cedents rather than formulas allows the evaluation puzzle to remain strictly single player. This is critical for expressivity: \emph{alternating} context-free grammars and pushdown automata compute more than just CFLs~\cite{MHHO05,ChaKozSto81:alternation}.

The next subsection is devoted to a proof of the Evaluation Theorem above.
Before that, let us give an important consquence:

\smallskip
We can now prove the soundness of $\lmunuHKAnwf$ by reduction to \cref{thm:evaluation-theorem}:

\begin{theorem}[Soundness]
\label{thm:munuHKA-soundness}
$\lmunuHKA\nwfproves \Gamma\seqar S$ $\implies$ $\lang{\Gamma}\subseteq\lang{S}$. 
\end{theorem}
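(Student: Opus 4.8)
\textbf{Proof plan for \cref{thm:munuHKA-soundness}.}
The plan is to argue the contrapositive, reducing invalidity of the endsequent to the existence of an infinite branch through a purported $\infty$-proof that carries no progressing thread, contradicting progressiveness. So suppose $P$ is a $\lmunuHKA$ preproof of $\Gamma \seqar S$ but $\lang\Gamma \not\subseteq \lang S$, and fix a word $w \in \lang\Gamma \setminus \lang S$. By the Evaluation Theorem (\cref{thm:evaluation-theorem}) applied to $(w,\Gamma)$, there is a winning play $\pi$ witnessing $w \in \lang\Gamma$; since $w \notin \lang S = \bigcup_{\Delta \in S}\lang\Delta$, the Evaluation Theorem also tells us that for \emph{every} $\Delta \in S$ there is \emph{no} winning play from $(w,\Delta)$.

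The core of the argument is a simultaneous induction/coinduction that traces the winning play $\pi$ on the LHS up through $P$ while maintaining, as an invariant, a ``bad'' component on the RHS: at each sequent $\Gamma_i \seqar S_i$ along the branch being built we keep a distinguished $\Delta_i \in S_i$ together with a position $(w_i,\Delta_i)$ (with $w_i$ the suffix of $w$ not yet consumed) such that there is no winning play from $(w_i,\Delta_i)$, while simultaneously $(w_i,\Gamma_i)$ lies on a winning play (a tail of $\pi$, suitably extended through the logical rules). The case analysis on the bottom-up rule applications is routine: for a left logical rule we use invertibility of the evaluation puzzle on the LHS to follow $\pi$ (choosing the correct premiss at a $\lr+$ step so that the LHS stays winning); the $\K$-steps consume a letter on both sides; for a right logical rule we must pick the premiss that keeps the chosen RHS list non-winning — for $\rr+$ both disjuncts must be non-winning since their ``or'' was, for $\rr\mu,\rr\nu$ the single unfolded premiss is still non-winning, and for $\wk$ the bad $\Delta_i$ simply persists (if it was weakened away we instead pick any surviving list, but since $\Delta_i$ stays distinguished this does not arise). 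Because $P$ is leftmost and has no $\kr{}$ steps, the LHS trace is forced and the process does not get stuck; it either terminates or produces an infinite branch.

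If the branch is \emph{finite}, it ends at an axiom: $\id$ forces the LHS position to be $(\epsilon,[\,])$, which is the winning state, but then $w$ has been fully consumed and the persisting RHS position $(w_i,\Delta_i) = (\epsilon,[\,])$ would be winning too — contradicting non-winningness; the $\lr 0$ axiom is impossible since a winning play never reaches a $0$; so this case cannot occur. Hence the branch is \emph{infinite}, and we must exhibit the contradiction with progressiveness. The LHS trace we built is (a lift of) the winning play $\pi$, so if $\pi$ is infinite it has a $\nu$-thread on the LHS, which is \emph{not} a progressing thread for $\lmunuHKA$; and any RHS thread we maintained sits inside a non-winning play, so by the Evaluation Theorem it cannot be a $\nu$-thread — it is at best a $\mu$-thread on the RHS, again not progressing. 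The remaining subtlety, which I expect to be the main obstacle, is to rule out a \emph{progressing} thread along this branch entirely: one must check that \emph{every} infinite thread on this branch is accounted for by either the LHS-winning-play trace or an RHS-non-winning trace, so that no thread can be a LHS $\mu$-thread or a RHS $\nu$-thread. For the LHS this is because there is a unique thread through the leftmost principal formulas (leftmost proof search makes the LHS trace essentially unique), and it coincides with $\pi$, which by the Evaluation Theorem is a $\nu$-thread whenever it is infinite and i.o.p. For the RHS one argues that any i.o.p.\ thread on the right descends into some fixed distinguished list and hence corresponds to an infinite play from some $(w',\Delta')$; were that thread a $\nu$-thread, that play would be winning, contradicting that $\Delta' \in S$ was chosen non-winning. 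Either way the branch carries no progressing thread, contradicting that $P$ is an $\infty$-proof. This completes the reduction; the only real work is the bookkeeping in the simultaneous trace and the careful matching of puzzle-threads with proof-threads, which is exactly where the leftmost and left-guarded conventions are used.
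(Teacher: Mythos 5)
Your proposal is essentially the paper's own argument run in contrapositive form: both use the Evaluation Theorem to obtain a winning play from $(w,\Gamma)$, trace it to the induced branch of the $\infty$-proof, and then match LHS/RHS threads of that branch against plays of the evaluation puzzle --- the paper takes the guaranteed progressing thread, rules out the LHS, and reads off the resulting RHS $\nu$-thread as a winning play witnessing $w\in\lang{S}$, whereas you assume $w\notin\lang{S}$ and conclude that no thread progresses. The content is the same (including the same level of detail glossed over in identifying puzzle-threads with proof-threads), so this is correct and essentially the paper's approach.
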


\begin{proof}[Proof sketch]
Let $P$ be a $\infty$-proof of $\Gamma\seqar S$ and $w\in \wlang\Gamma$. We show $w\in\wlang S$. 
First, since $w\in \wlang\Gamma$ there is a winning play $\pi$ from $(w,\Gamma)$ by \cref{thm:evaluation-theorem}, which induces a unique (maximal) branch $B_\pi$ of $P$ which must have a progressing thread $\tau$.
Now, since $\pi$ is a \emph{winning} play from $(w,e)$, $\tau$ cannot be on the LHS, so it is an RHS $\nu$-thread following, say, a sequence of cedents $[\Gamma_i]_{i<\omega}$.
By construction $[\Gamma_i]_{i<\omega}$ has an infinite subsequence, namely whenever it is principal, that forms (the right components of) a winning play from $(w,\Gamma_0)$, with $\Gamma_0 \in S$. Thus indeed $w \in \lang S$ by \cref{thm:evaluation-theorem}.
\end{proof}

\subsection{Proof of the Evaluation Theorem}
Let us now revisit the argument for \cref{thm:evaluation-theorem} more formally.
This subsection may be safely skipped by the reader comfortable with that result.

Write $\subform$ for the subformula relation.
Recalling \cref{def:fl}, let us write $e \leqfl f$ if $e \in \fl f$, 
$e \lefl f$ if $e\leqfl f \not\leqfl e$ and $e  \eqfl f$ if $e\leqfl f \leqfl e$.

\begin{definition}[Dependency order]
Let the \defname{dependency order} be $\dleq\, \df\, \leqfl \times \supform$, i.e.\ $e\dleq f$ if either $e\lefl f$ or $e\eqfl f $ and $f \subform e$.
\end{definition}

Note that, by the properties of FL closure, $\dleq$ is a well partial order on expressions. In the sequel, we assume an arbitrary extension of $\dleq$ to a total well-order $\leq$.

\begin{definition}[Signatures]
Let $M$ and $N$ be finite sets of $\mu$-expressions $\{\mu X_0 e_0  \dge \cdots \dge \mu X_{n-1} e_{n-1}\}$ and $\{\nu X_0 e_0  \dge \cdots \dge \nu X_{n-1} e_{n-1}\}$ respectively. A $\mu$-\defname{signature} (respectively, a $\mu$-\defname{signature}) is a sequence $\vec \alpha = (\alpha_i)_{i=0}^{n-1}$ of ordinals indexed by $M$ (respecively, $N$). Signatures are ordered by the lexicographical product order.
\end{definition}

Let us temporarily expand the language of expressions by,
\[
e,f, \dots \quad \bnf \quad
\dots \quad \mid \quad 
\mu^\alpha X e \quad \mid \quad
\nu^\alpha X e
\]
where $\alpha$ ranges over ordinals.

\begin{definition}
Fix a finite set of $\mu$-formulas $\{\mu X_0 e_0  \dge \cdots \dge \mu X_{n-1} e_{n-1}\}$ and a $\mu$-signature $\vec\alpha=(\alpha_i)_{i=0}^{n-1}$. Given an expression $e$, its corresponding $\mu$-\defname{signed formula} $e^{\vec\alpha}$ is one where every occurrence $\mu X_i e_i$ has been replaced by $\mu^{\alpha_i} X_i e_i$.
\end{definition}

Similarly, given a finite set of $\nu$-formulas, a $\nu$-signature $\vec\alpha$, and an expression $e$, its corresponding $\nu$-\defname{signed formula} $e_{\vec\alpha}$ is one where every occurrence of a $\nu$-subformula has been replaced by its corresponding approximant. 

We interpret such expressions by the inflationary and deflationary constructions respectively:
\begin{itemize}
    \item $\wlang {\mu^0 X e(X) } \df \emptyset$
    \item $\wlang {\mu^{\alpha +1} X e(X)} \df \wlang {e(\mu^\alpha X e(X))}$
    \item $\wlang {\mu^\lambda X e(X)} \df \bigcup\limits_{\alpha <\lambda} \wlang {\mu^\alpha X e(X)}$, when $\lambda $ is a limit ordinal. 
    \medskip
    \item $\wlang {\nu_0 X e(X) } \df \Alphabet^{\leq \omega}$
    \item $\wlang {\nu_{\alpha +1} X e(X)} \df \wlang {e(\nu^\alpha X e(X))}$
    \item $\wlang {\nu_\lambda X e(X)} \df \bigcap\limits_{\alpha <\lambda} \wlang {\nu^\alpha X e(X)}$, when $\lambda $ is a limit ordinal. 
\end{itemize}

Finally, we extend the notion of $\mu$ and $\nu$ signatures to lists of expressions by writing $\Gamma^{\lv\alpha}$ and $\Gamma_{\lv\alpha}$ (parameterised by \emph{lists} of vectors of ordinals now, by abuse of notation) for the corresponding $\mu$ and $\nu$-signed lists of $\Gamma$. Spelt out, $[e_1,\dots,e_n]^{\vec\alpha_1::\dots::\vec\alpha_n}$ is a shorthand for $[e_1^{\vec\alpha_1},\dots,e_n^{\vec\alpha_n}]$ (similarly for $\nu$-signed). Lists of vectors are lexicographically ordered.

Recall that least and greatest fixed points can be computed as limits of approximants. 
In particular, we have,
\begin{itemize}
    \item $\lang{\mu Xe} = \bigcup \limits_{\alpha\in\SF{Ord}} \lang{\mu^{\alpha} Xe}$
    \item $\lang{\nu Xe} = \bigcap \limits_{\alpha\in\SF{Ord}}\lang{\nu^{\alpha} Xe}$
\end{itemize}
where $\alpha $ ranges over ordinals. Thus we have immediately:

\begin{proposition}
\label{prop:least-signature}
Suppose $\Gamma$ is a list of expressions. We have:
\begin{itemize}
    \item If $w \in \wlang\Gamma$ then there is a $\mu$-signature $\lv\alpha$ such that $w \in \wlang {\Gamma^{\lv\alpha}}$.
    \item If $w\notin \wlang\Gamma$ then there is a $\nu$-signature $\lv\alpha$ such that $w \notin \wlang {\Gamma_{\lv\alpha}}$.
\end{itemize}
\end{proposition}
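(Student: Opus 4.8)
\textbf{Proof proposal for Proposition~\ref{prop:least-signature}.}

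The plan is to treat both bullets essentially in parallel, using the characterisations of least/greatest fixed points as limits of their ordinal approximants. For the first bullet, suppose $w \in \wlang\Gamma$ where $\Gamma = [e_1,\dots,e_n]$. Since $\wlang\Gamma$ interprets the list by the product of its elements, $w$ factors as $w = w_1 w_2 \cdots w_n$ with $w_j \in \wlang{e_j}$ for each $j$ (with the convention $vw = v$ when $|v| = \omega$, so the factorisation terminates early if some $w_j$ is infinite; the later $e_j$ then contribute nothing and may be signed arbitrarily). It therefore suffices to produce, for a single expression $e$ with $w \in \wlang e$, a $\mu$-signature $\vec\alpha$ (over the finitely many $\mu$-subformulas of $e$, listed in $\dge$-decreasing order) with $w \in \wlang{e^{\vec\alpha}}$; then concatenate these signatures into the list signature $\lv\alpha$.

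For the single-expression claim I would proceed by induction on the structure of $e$ — more precisely, by induction on $e$ along the dependency order $\dleq$ (so that unfolding a fixed point, which replaces $\mu X f(X)$ by $f(\mu X f(X))$, strictly decreases the principal formula in $\dleq$ while the residual outer fixed points stay $\eqfl$-related but become proper subformulas). The base cases $0,1,a,X$ are immediate. For $e + f$, $w$ lies in one disjunct, apply the inductive hypothesis there and pad the other signature. For $e\cdot f$, split $w = uv$, apply the inductive hypothesis to each factor, and again concatenate. The key case is $e = \mu X f(X)$: since $\wlang{\mu X f(X)} = \bigcup_{\alpha} \wlang{\mu^\alpha X f(X)}$, pick the least $\alpha$ with $w \in \wlang{\mu^{\alpha} X f(X)}$. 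This $\alpha$ is a successor, say $\alpha = \beta + 1$, so $w \in \wlang{f(\mu^\beta X f(X))}$; apply the inductive hypothesis to $f(\mu^\beta X f(X))$ — which is $\dle$-smaller — to sign its remaining fixed points, and prepend the entry $\beta+1$ (indexed by $\mu X f(X)$ itself) at the front of the signature in accordance with the $\dge$-ordering. The second bullet is entirely dual: if $w \notin \wlang\Gamma$ then $w$ admits \emph{no} factorisation witnessing membership; working with the $\nu$-approximant characterisation $\wlang{\nu X e} = \bigcap_\alpha \wlang{\nu^\alpha X e}$, one shows by the same $\dleq$-induction that non-membership is already witnessed at some ordinal stage, choosing the least failing approximant at each $\nu$ and bundling the chosen ordinals into a $\nu$-signature.

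The main subtlety — and the step I would be most careful about — is the bookkeeping of signatures for \emph{nested and repeated} fixed points: when unfolding $\mu X f(X)$, the formula $f(\mu X f(X))$ may contain several occurrences of $\mu X f(X)$ as well as other $\mu$-formulas that are $\eqfl$-comparable to it, and the definition of a signed formula demands that \emph{all} occurrences of a given fixed point subformula get the \emph{same} approximant index. One must check that the ordinal chosen for the outermost $\mu X f(X)$ is compatible with this — which it is, precisely because we chose it \emph{least}, so every residual copy of $\mu X f(X)$ inside $f(\mu^\beta X f(X))$ has been replaced by the $\beta$-approximant uniformly, and the inductive hypothesis is applied to a formula in which $\mu X f(X)$ no longer occurs unbounded. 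This is exactly where the dependency order $\dleq = {\leqfl} \times {\supform}$ does its work, and where one leans on the standard facts about FL-closure recalled before Fact~\ref{fact:io-threads-have-critical-formula}. Once this is set up, the ``immediately'' in the statement is justified: Proposition~\ref{prop:least-signature} is really just the monotone-fixed-point approximation lemma transported through the product structure of cedents.
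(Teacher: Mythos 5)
Your proposal is correct in its overall architecture, but it takes a genuinely heavier route than the paper, which offers no induction at all: the proposition is derived ``immediately'' from the approximant characterisations $\lang{\mu Xe}=\bigcup_{\alpha}\lang{\mu^{\alpha}Xe}$ and $\lang{\nu Xe}=\bigcap_{\alpha}\lang{\nu^{\alpha}Xe}$. Since these chains stabilise at the closure ordinal $\kappa$ of the complete lattice $\pow{\Alphabet^{\leq\omega}}$, one may simply take every component of the signature to be $\kappa$; an easy induction on the nesting depth of annotations gives $\lang{\Gamma^{\lv\kappa}}=\lang\Gamma=\lang{\Gamma_{\lv\kappa}}$, and both bullets follow, since only \emph{existence} is claimed. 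Your $\dleq$-induction instead manufactures a (near-)least signature; that extra information is precisely what the paper extracts \emph{later}, inside the proof of \cref{thm:evaluation-theorem}, by appealing to the well-ordering of signatures, so you have in effect front-loaded part of that argument. That is a legitimate alternative, and your handling of the successor step and of uniformity across nested occurrences is essentially right (the clause $\lang{\mu^{\alpha+1}Xe(X)}=\lang{e(\mu^{\alpha}Xe(X))}$ already builds the decrement into the semantics, and any residual non-uniformity is absorbed by monotonicity of approximants in the ordinal).

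Two local steps as written would fail, though both are repairable. First, ``the later $e_j$ \dots may be signed arbitrarily'' is false: if a factor occurring after an infinite $w_j$ is signed with $0$, its approximant denotes $\emptyset$ and the whole product collapses to $\emptyset$, since $\lang{ef}=\{vw: v\in\lang e,\ w\in\lang f\}$ requires every factor to be inhabited. You must still sign every factor so that its language retains its witness $w_j$. Second, the second bullet is not ``entirely dual'' at products: $w\notin\lang{ef}$ is a universal statement over all factorisations of $w$, so you need a single $\nu$-signature refuting \emph{every} factorisation simultaneously; this works because $\nu$-approximants shrink as the ordinal grows, so one takes a supremum of the ordinals needed for each of the (at most countably many) factorisations, but the step deserves to be made explicit rather than delegated to duality.
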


We are now ready to prove our characterisation of evaluation:

\begin{proof}
[Proof of \cref{thm:evaluation-theorem}]

    $(\implies)$ 
    Suppose $w \in \wlang\Gamma$. By~\Cref{prop:least-signature}, there is a least $\mu$-signature $\lv \alpha$ such that $w \in \wlang {\Gamma^{\lv \alpha}}$. We will construct a winning play $(w_i,\Gamma_i)_{i\in\lambda}$ and sequence of signatures $(\lv{\alpha^i})_{i\in\lambda}$ from $(w,\Gamma)$ such that:
    \begin{itemize}
        \item $(w_0,\Gamma_0,\vec{\alpha_0})=(w,\Gamma,\lv\alpha)$;
        \item for all $i\in\lambda$, $\lv{\alpha^i}$ is a $\mu$-signature such that $w_i \in \wlang {\Gamma^{\lv{\alpha^i}}}$;
    \end{itemize}
where $\lambda\in\omega+1$. If the play is finite then it is winning by construction, so assume it is infinite i.e. $\lambda=\omega$.

We will construct it by induction on $i$. The base case is already defined. For the induction case, assume $\Gamma_i=f,\Delta$ and $\lv\alpha_i=\vec{\alpha}::\lv\alpha$ and we will do a case -analysis on $f$.

\begin{itemize}
    \item Suppose $f=f_0+f_1$.

    \begin{align*}
        &w_i \in\lang{[f_0+f_1,\Delta]^{\vec{\alpha}::\lv\alpha}} &&[\text{By induction hypothesis}]\\
        &\implies w_i\in\lang{[f_0+f_1^{\vec{\alpha}},\Delta^{\lv\alpha}]}\\
        &\implies w_i\in \lang{[f_0^{\vec{\alpha}},\Delta^{\lv\alpha}]} \text { or }w_i\in\lang{[f_1^{\vec{\alpha}},\Delta^{\lv\alpha}]}\\
        &\implies w_i\in\lang{[f_0,\Delta]^{\lv{\alpha^i}}} \text { or }w_i\in\lang{[f_1,\Delta]^{\lv{\alpha^i}}}
    \end{align*}

    Wlog, assume $w_i\in\lang{[f_0,\Delta]^{\lv{\alpha^i}}}$. Choose $(w_{i+1},\Gamma_{i+1},\lv{\alpha^{i+1}})=(w_i,[f_0,\Delta],\lv{\alpha_i})$.

\item Suppose $f=f_0\cdot f_1$.

    \begin{align*}
        &w_i \in\lang{[f_0\cdot f_1,\Delta]^{\vec{\alpha}::\lv\alpha}} &&[\text{By induction hypothesis}]\\
        &\implies w_i\in\lang{[f_0^{\vec{\alpha}},f_1^{\vec{\alpha}},\Delta^{\lv\alpha}]}\\
        &\implies w_i\in \lang{[f_0,f_1,\Delta]^{\vec{\alpha}::\vec{\alpha}::\lv{\alpha}}}
    \end{align*}

 Choose $(w_{i+1},\Gamma_{i+1},\lv{\alpha^{i+1}})=(w_i,[f_0,f_1,\Delta],\vec{\alpha}::\vec{\alpha}::\lv{\alpha})$.

\item Suppose $f=a$ for some $a\in\Alphabet$. %

    \begin{align*}
        &w_i \in\lang{[a,\Delta]^{\vec\alpha::\lv\alpha}} &&[\text{By induction hypothesis}]\\
        &\implies w_i\in\lang{[a^{\vec\alpha},\Delta^{\lv\alpha}]}\\
        &\implies w_i=aw' \text{ and } w'\in\lang{\Delta^{\lv\alpha}}
    \end{align*}

 Choose $(w_{i+1},\Gamma_{i+1},\lv{\alpha^{i+1}})=(w',\Delta,\lv{\alpha})$.

\item Suppose $f=\nu X f_0$.

    \begin{align*}
        &w_i \in\lang{[\nu Xf_0,\Delta]^{\vec\alpha::\lv\alpha}} &&[\text{By induction hypothesis}]\\
        &\implies w_i\in\lang{[\nu Xf_0^{\vec{\alpha}},\Delta^{\lv{\alpha}}]}\\
        &\implies w_i\in\lang{[f_0(\nu Xf_0)^{\vec{\alpha}},\Delta^{\lv{\alpha}}]} && [\lang{\nu X e, \Gamma}=\lang{e(\nu X e), \Gamma}]\\
        &\implies w_i\in \lang{[f_0(\nu Xf_0),\Delta]^{\vec{\alpha}::\lv\alpha}}
    \end{align*}

Choose $(w_{i+1},\Gamma_{i+1},\lv{\alpha^{i+1}})=(w_i,[f_0(\nu Xf_0),\Delta],\lv{\alpha^{i}})$

\item Suppose $f=\mu X f_0$.

    \begin{align*}
        &w_i \in\lang{[\mu Xf_0,\Delta]^{\vec{\alpha}::\lv\alpha}} &&[\text{By induction hypothesis}]\\
        &\implies w_i\in\lang{[\mu^{\alpha_j} Xf_0,\Delta^{\lv{\alpha}}]} && [\text{where }\vec\alpha=(\alpha_i)]\\
        &\implies w_i\in\lang{[f_0(\mu^{\beta} Xf_0),\Delta^{\lv{\alpha}}]} && [\text{for some }\beta<\alpha_j]\\
        &\implies w_i\in\lang{[f_0(\mu Xf_0),\Delta]^{\vec\alpha'::\lv{\alpha}}} && [\vec\alpha'=\vec\alpha[\beta/\alpha_j]]
    \end{align*}

Choose $(w_{i+1},\Gamma_{i+1},\lv{\alpha^{i+1}})=(w_i,[f_0(\nu Xf_0),\Delta],\vec\alpha'::\lv{\alpha})$ works.
\end{itemize}
We now claim that this play is winning. Suppose not. Then, the smallest formula principal infinitely often is $\mu$. We follow its thread and obtain a strictly decreasing sequence of ordinals. Contradiction.

\medskip

$(\impliedby)$ For the converse direction, we will prove the contrapositive. Suppose $w\not\in\wlang\Gamma$. Fix an arbitrary play $\pi=(w_i,\Gamma_i)_{i\in\lambda}$ from $(w,\Gamma)$ for some $\lambda\in\omega+1$. By inspection of the puzzle rules, non-membership is preserved i.e.\ we always have $w_i \notin \wlang {\Gamma_i}$. By~\cref{prop:least-signature}, there are $\nu$-signatures $\lv{\alpha_i}$ such that $w_i \notin \wlang {\Gamma_{\lv{\alpha_i}}}$. Following an argument like above, the signature corresponding to a thread is a monotone non-increasing sequence. Moreover, if $\pi$ is winning, it is strictly decreasing. Therefore, $\pi$ is not winning.
\end{proof}

\subsection{\texorpdfstring{$\omega$}{}-context-freeness via Muller grammars}
We can now use the adequacy of the evaluation puzzle to recover a converse of \cref{prop:ocf-is-munu-def}. For this, we need to recall a grammar-formulation of $\oCF$, due to Cohen and Gold \cite{CG77} and independently Nivat \cite{Nivat1977,Nivat1978}.

\smallskip

A \defname{Muller ($\omega$-)CFG} (MCFG) is a CFG $\gram$, equipped with a set $F\subseteq \pow{\Var}$ of \defname{acceptable} sets. 
We define a rewrite relation $\lred \gram \, \subseteq (\Var \cup \Alphabet)^* \times (\Var \cup \Alphabet)^*$, \defname{leftmost reduction}, by
\(
\vec a X v \lred\gram \vec a u v 
\)
whenever $\vec a \in \Alphabet^*$, $X\to u$ is a production of $\Cal G$ and $v \in (\Var \cup \Alphabet)^*$.
A \defname{leftmost derivation} is just a maximal (possibly infinite) sequence along $\lred\gram $.
We say
$\gram $ \defname{accepts} $w \in \Alphabet^{ \leq \omega}$ if there is a leftmost derivation $\delta$ such that $\delta$ converges to $w$ and the set of infinitely often occurring states that are LHSs of productions along $\delta$ is in $F$.
We write $\lang {\gram } $ for the set of words $\gram $ accepts.

\begin{theorem}[\cite{CG77,Nivat1977,Nivat1978}]
Let $L\subseteq\Alphabet^{\omega}$.
$L\in\oCF \Leftrightarrow L=\wlang{\Cal G}$ for a MCFG $\Cal G$.
\end{theorem}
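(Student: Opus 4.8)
The plan is to prove both directions of the equivalence $L\in\oCF \Leftrightarrow L=\wlang{\gram}$ for a Muller CFG $\gram$, using the $\mu\nu$-expression bridge already established. For the direction $(\Leftarrow)$, given an MCFG $\gram$ accepting $L$, I would first observe that by the game-theoretic and grammar-theoretic machinery already in place we may translate $\gram$ into a $\mu\nu$-expression computing the same $\omega$-language; concretely, one associates to each non-terminal $X$ a variable and builds a system of equations as in the proof of \cref{prop:cf-are-mu-definable}, but now closes off the non-terminals whose recurrence is ``accepted'' (those appearing in sets of $F$) using $\nu$ rather than $\mu$, and the transient ones using $\mu$. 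The delicate point is that a single non-terminal may occur in several acceptable sets and in several ``modes'', so one cannot simply tag each $X$ as $\mu$ or $\nu$; the standard fix is to pass to a product/flag construction on $\gram$ (a Muller-to-parity-like latest-appearance-record transformation) so that each non-terminal of the refined grammar has a well-defined priority, after which the equational system is solved by alternating $\mu$ and $\nu$ according to priority, via Bekić as before. Soundness of this translation—that the resulting expression's language is exactly $\wlang\gram$—is then exactly a membership-via-threads argument: a word $w$ is accepted by $\gram$ iff there is a leftmost derivation with the right Muller condition iff (unwinding the expression) there is a winning play of the evaluation puzzle from $(w,\Gamma)$, which by \cref{thm:evaluation-theorem} holds iff $w\in\wlang\Gamma$. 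Once we have a $\mu\nu$-expression, and hence (by \cref{thm:cf-equiv-mu-definable}-style decomposition of the expression into a finite sum of products of a $\mu$-part and a $\nu X(eX)$-part, as in \cref{prop:ocf-is-munu-def}) we conclude $L\in\oCF$.

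For the direction $(\Rightarrow)$, given $L=\bigcup_{i<n}A_iB_i^\omega$ with $A_i,B_i$ context-free and $\epsilon$-free, I would directly build an MCFG. Take disjoint copies of CFGs $\gram_{A_i}$ (generating $A_i$) and $\gram_{B_i}$ (generating $B_i$), add a fresh start symbol $Z$ with productions $Z\to S_{A_i} Z_{B_i}$ for each $i$, and a fresh ``looping'' non-terminal $Z_{B_i}$ with production $Z_{B_i}\to S_{B_i} Z_{B_i}$; then set the acceptable family $F$ to consist of exactly those sets of non-terminals containing $Z_{B_i}$ and contained in (non-terminals of $\gram_{B_i}$) $\cup\{Z_{B_i}\}$, for each $i$ — i.e.\ the leftmost derivation must, from some point on, loop forever inside a single block $B_i$, consuming infinitely many finite $B_i$-factors. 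Here I would use $\epsilon$-freeness of $B_i$ to guarantee that such a derivation actually converges to an $\omega$-word (each block contributes at least one letter), and the leftmost discipline to ensure the prefix from $A_i$ is fully generated before the $B_i$-loop takes over. Verifying $\wlang\gram=L$ is then routine: an accepted $w$ factors as $u v_0 v_1 v_2\cdots$ with $u\in A_i$ and each $v_j\in B_i$, and conversely.

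I expect the main obstacle to be the $(\Leftarrow)$ direction, specifically the \emph{Muller-to-priority normalisation of the grammar} needed to assign a coherent $\mu/\nu$ nesting: the naive translation of an MCFG into a fixed-point system is unsound precisely because the Muller acceptance condition is not ``local'' to individual non-terminals, whereas the alternation depth of $\mu$s and $\nu$s in an expression is. Handling this cleanly—either by an explicit latest-appearance-record construction on non-terminals, or by citing the known equivalence of Muller and parity conditions and doing the translation for parity grammars—is where the real work lies; once a priority assignment is in hand, the passage through \cref{thm:evaluation-theorem} makes the correctness proof essentially mechanical. (In fact, since the paper only needs this theorem as a ``recovery'' of a known result of Cohen–Gold and Nivat, an alternative lighter-weight route is to cite their construction for one direction and only supply the $\mu\nu$-expression $\Leftrightarrow$ MCFG passage via the evaluation puzzle; I would mention both options and pick whichever keeps the exposition shortest.)
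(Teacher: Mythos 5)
The paper does not actually prove this statement: it is imported wholesale from Cohen--Gold and Nivat \cite{CG77,Nivat1977,Nivat1978} and used as a black box, so the ``lighter-weight route'' in your final parenthetical is exactly what the authors do. Judged as a standalone proof, your $(\Rightarrow)$ direction is essentially sound: the block construction $Z\to S_{A_i}Z_{B_i}$, $Z_{B_i}\to S_{B_i}Z_{B_i}$ with acceptable sets $M$ satisfying $Z_{B_i}\in M$ and $M$ contained in the non-terminals of $\gram_{B_i}$ together with $Z_{B_i}$ does the job, and your appeal to $\epsilon$-freeness and the leftmost discipline correctly forces convergence to an $\omega$-word and rejects derivations that stall forever inside a single $S_{B_i}$-expansion (there $Z_{B_i}$ is rewritten only finitely often, so the Muller condition fails).

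The $(\Leftarrow)$ direction has a genuine gap, and relative to this paper's architecture it is circular. Your chain is MCFG $\to$ $\mu\nu$-expression $\to$ $\oCF$, and the last arrow is justified by ``decomposing the expression into a finite sum of products of a $\mu$-part and a $\nu X(eX)$-part, as in \cref{prop:ocf-is-munu-def}''. But \cref{prop:ocf-is-munu-def} goes the \emph{opposite} way (from the form $\bigcup_i A_iB_i^\omega$ to an expression); neither the paper nor your proposal establishes that an arbitrary $\mu\nu$-expression, with arbitrarily interleaved $\mu$s and $\nu$s, denotes a language of that form. In the paper that fact is a \emph{corollary} of the present theorem combined with \cref{prop:expr-to-mcfg}, so it is not available here. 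This normalisation is precisely the hard content of the Cohen--Gold/Nivat result (the analogue of the hard direction of B\"uchi's characterisation of $\omega$-regular languages), and your route relocates it rather than discharging it; the Muller-to-parity/latest-appearance-record normalisation of the grammar, which you rightly flag as delicate, is a second piece of real work left as a gesture. A non-circular proof of $(\Leftarrow)$ must instead analyse accepting leftmost derivations of $\gram$ directly: identify a non-terminal recurring on the spine of the derivation, factor the derivation into segments $X\Rightarrow^* uXv$, and extract $A_i$ and $B_i$ as languages of finite words generated by suitably decorated copies of $\gram$ --- which is essentially the original argument of \cite{CG77}.
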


Now we have a converse of \cref{prop:ocf-is-munu-def} by:
\begin{proposition}
\label{prop:expr-to-mcfg}
    For each expression $e$ there is a MCFG $ \gram $ s.t.\ $\lang e = \lang \gram$.
\end{proposition}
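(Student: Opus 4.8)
\textbf{Proof plan for \cref{prop:expr-to-mcfg}.}
The plan is to convert the expression $e$ into a Muller CFG by reading off its Fischer-Ladner closure as the nonterminal set, essentially the same canonical grammar as in \eqref{eq:can-grammar-mu-only} for the $\mu$-only case, but now equipped with an appropriate Muller acceptance condition that captures the progress condition of the evaluation puzzle. Concretely, take nonterminals $X_f$ for $f \in \fl e$, start symbol $X_e$, and productions exactly as in \eqref{eq:can-grammar-mu-only} together with the new clause $X_{\nu X f(X)} \to X_{f(\nu X f(X))}$. A leftmost derivation of this grammar is then in lockstep correspondence with a play of the evaluation puzzle from $(w,[e])$: each rewrite step $\vec a\, X_g\, v \lred\gram \vec a\, u\, v$ corresponds to the puzzle move that makes the leftmost expression $g$ principal (with the sum case giving the only genuine choice), and consuming a letter $a$ via $X_a \to a$ corresponds to the $(aw,[a,\Delta]) \mapsto (w,\Delta)$ move. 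I would set up a bijection between maximal leftmost derivations $\delta$ converging to $w$ and plays $\pi$ from $(w,[e])$ converging to $w$, tracking that the nonterminals occurring infinitely often as LHSs along $\delta$ are in exact correspondence with the expressions infinitely often principal along $\pi$.

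The crux is then choosing the family $F$ of acceptable sets so that ``$\delta$ has its infinitely-often-LHS set in $F$'' matches ``$\pi$ has a $\nu$-thread.'' Here I would invoke \cref{fact:io-threads-have-critical-formula} (which holds for $\mu\nu$-expressions): any infinite play that does not terminate at $(\epsilon,[\,])$ has, on each relevant thread, a unique $\dleq$-minimal infinitely-often-principal formula. The subtlety is that a play can have several threads with different critical formulas, and winning only requires \emph{some} thread to be a $\nu$-thread. I would argue that the set $I$ of expressions occurring infinitely often as principal formulas (equivalently, the infinitely-often-LHS nonterminals of $\delta$) already determines whether a $\nu$-thread exists: namely, the $\dleq$-least element of $I$ — or more carefully, the $\dleq$-least element among those that are ``persistently threaded'' — being a $\nu$-formula is equivalent to the play being winning. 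This requires a small lemma that the $\dleq$-minimal infinitely-often-principal formula is always genuinely threaded through cofinitely much of the play (it cannot be ``created fresh'' infinitely often once we pass the last point where anything $\dleq$-below it is principal). Granting this, I would define $F \df \{ \mathcal{X} \subseteq \{X_f : f \in \fl e\} : \text{the } \dleq\text{-least } f \text{ with } X_f \in \mathcal{X} \text{ is a } \nu\text{-formula}\}$, possibly refined to account for the finitely many reachable nonterminals and to handle the case where $\mathcal X$ is not a ``consistent'' infinitely-often set.

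Finally I would assemble the two directions: if $w \in \lang e = \wlang{[e]}$ then by \cref{thm:evaluation-theorem} there is a winning play from $(w,[e])$, which either terminates at $(\epsilon,[\,])$ — yielding a finite leftmost derivation of $w$ — or is infinite with a $\nu$-thread, yielding an infinite leftmost derivation converging to $w$ whose infinitely-often-LHS set lies in $F$; hence $w \in \wlang\gram$. Conversely an accepting leftmost derivation of $\gram$ transports back to a winning play, so $w \in \wlang{[e]} = \lang e$ by \cref{thm:evaluation-theorem} again. The main obstacle I anticipate is the bookkeeping in the previous paragraph: verifying that a purely ``set-of-infinitely-often-nonterminals'' condition (which is all a Muller condition can see) faithfully encodes the existence of a $\nu$-thread, since threads carry more structure than the bare set of formulas they pass through. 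This is where I would need the minimality/persistence argument built on \cref{fact:io-threads-have-critical-formula}, together with a careful check that along a leftmost derivation the minimal infinitely-often nonterminal is unambiguous and its ``ancestry'' stabilises — the analogue here of the standard fact that Muller and parity conditions coincide for $\mu$-calculus-style fixed-point alternation.
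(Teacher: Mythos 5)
Your overall architecture (canonical grammar on $\fl e$ with the extra clause $X_{\nu X f(X)} \to X_{f(\nu X f(X))}$, a lockstep correspondence between leftmost derivations and plays of the evaluation puzzle, and transport of winning via \cref{thm:evaluation-theorem}) is exactly the paper's. The gap is in the step you yourself flag as the main obstacle: the ``persistence lemma'' you want is false, so no Muller condition defined on the bare set of infinitely-often-occurring nonterminals of the single-copy grammar can be correct. The paper's own \cref{ex:r1-example} (\cref{fig:braiding-example}) is a counterexample: with $f = \mu Y(b + \nu X(aYX))$, the play has a progressing thread whose smallest i.o.p.\ formula is the $\nu$-formula $\nu X(afX)$, yet the $\mu$-formula $f$ is \emph{also} principal infinitely often --- each unfolding of $\nu X(afX)$ to $af\nu X(afX)$ creates a \emph{fresh} occurrence of $f$, which is principal only finitely often on its own thread and then dies. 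Hence the $\dleq$-least element of the infinitely-often set is a $\mu$-formula even though the play is winning, and that least element is precisely \emph{not} ``genuinely threaded through cofinitely much of the play.'' The set of infinitely-often-principal formulas genuinely underdetermines the existence of a $\nu$-thread, so your proposed $F$ (in either variant) rejects accepting computations.

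The paper's repair is to push the existential quantifier over threads into the grammar's nondeterminism rather than into the acceptance condition: it keeps \emph{two} copies of each nonterminal, one marked and one unmarked, where the marked productions carry the mark into exactly one factor of each product ($\mgt{X_{fg}} \to \mgt{X_f}X_g \mid X_f\mgt{X_g}$) and unmarked nonterminals never reintroduce marks. A derivation thereby guesses a single thread on the fly; the infinitely-often-occurring \emph{marked} nonterminals are exactly the i.o.p.\ formulas of that one guessed thread, to which \cref{fact:io-threads-have-critical-formula} legitimately applies; and $F$ consists of the sets whose marked part has a $\nu$-formula as smallest element, padded arbitrarily by unmarked nonterminals. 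With this doubling in place, the rest of your argument (the derivation/play correspondence and the two directions via \cref{thm:evaluation-theorem}) goes through essentially as you wrote it.
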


\begin{proof}
[Proof sketch]
    Given a $\mu\nu$-expression $e$, we construct a grammar just like in \eqref{eq:can-grammar-mu-only}, but with extra clause $X_{\nu X f(X)} \to X_{f(\nu X f(X))}$. 
    We maintain two copies of each non-terminal, one \mgt{magenta} and one normal, so that a derivation also `guesses' a \mgt{$\nu$-thread} `on the fly'. Formally, the \mgt{magenta} productions of our grammar are:
    \[
    \begin{array}{r@{\quad \rightarrow \quad }l}
         \mgt{X_1} & \epsilon \\
         \mgt{X_a} & a \\
         \mgt{X_{f+g}} & \mgt{X_f} \quad | \quad \mgt{X_g} \\
         \mgt{X_{fg}} & \mgt{X_f}X_g \quad | \quad X_f\mgt{X_g} \\
         \mgt{X_{\mu X f(X)}}& \mgt{X_{f(\mu X f(X))}}\\
         \mgt{X_{\nu X f(X)}}& \mgt{X_{f(\nu X f(X))}}
    \end{array}
    \]
    Productions for normal non-terminals have only normal non-terminals on their RHSs.

    Now set $F$, the set of acceptable sets, to include all sets extending some $\{\mgt{X_f} : f \in E\}$, for $E$ with smallest expression a $\nu$-formula, by normal non-terminals.
    Now any accepting leftmost derivation of a word $w$ from $\mgt{X_e}$ describes a winning play of the evaluation puzzle from $(w,e)$ and vice-versa.
\end{proof}

\subsection{Proof search game and completeness}
In order to prove completeness of $\lmunuHKAnwf$, we need to introduce a game-theoretic mechanism for organising proof search, in particular so that we can rely on \emph{determinacy} principles thereof.

\begin{definition}
    [Proof search game]
    The \emph{proof search game} (for $\lmunuHKA$) is a two-player game played between Prover $(\prover)$, whose positions are inference steps of $\lmunuHKA$, and Denier $(\denier)$, whose positions are sequents of $\lmunuHKA$.
A \defname{play} of the game starts from a particular sequent:
at each turn, $\prover$ chooses an inference step with the current sequent as conclusion, and $\denier$ chooses a premiss of that step; the process repeats from this sequent as long as possible.

An infinite play of the game is \defname{won} by $\prover$ (aka \defname{lost} by $\denier$) if the branch constructed has a progressing thread; otherwise it is won by $\denier$ (aka lost by $\prover$). {In the case of deadlock, the player with no valid move loses.}
\end{definition}

\begin{proposition}
    [Determinacy ($\exists 0\#$)]
    \label{prop:det-prf-srch-game}
    The proof search game is determined, i.e.\ from any sequent $\Gamma \seqar S$, either $\prover $ or $\denier$ has a winning strategy.
\end{proposition}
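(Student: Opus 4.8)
\textbf{Proof proposal for \cref{prop:det-prf-srch-game}.}

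The plan is to reduce the proof search game to a standard determined game by exhibiting its winning condition as a Borel (indeed, fairly low-level arithmetical) set over a countable move-set, and then invoke Borel (or Martin's) determinacy. First I would observe that, although sequents and inference steps are each infinite in number globally, any single play visits only sequents built from the (finite) Fischer-Ladner closure of the formulas in the starting sequent together with their left logical/\(\K\) unfoldings, and only lists over this finite set, with bounded length controlled by the rules; hence each play lives in a countably-branching tree, and in fact one may w.l.o.g. regard the arena as a subtree of \(\omega^{<\omega}\) (or \(\Nat^{\Nat}\)) by fixing some effective enumeration of the available \(\prover\)- and \(\denier\)-moves at each position. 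So the proof search game is an infinite game on a countable tree in the sense of Gale–Stewart.

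Next I would identify the payoff set. An infinite play determines an infinite branch \(B\) of a preproof together with all its threads, as set up in \cref{def:threads} and the discussion of immediate ancestry; \(\prover\) wins iff \(B\) has a progressing thread, i.e.\ an infinite thread that is i.o.p.\ and is a \(\mu\)-thread on the LHS or a \(\nu\)-thread on the RHS (using \cref{fact:io-threads-have-critical-formula} to make sense of ``smallest i.o.p.\ formula''). The key point is that this condition is Borel: ``there exists an infinite thread \(\tau\) along \(B\) such that \(\tau\) is i.o.p.\ and its smallest i.o.p.\ formula is of the right polarity and on the right side'' is a projection over the (countably many, in fact finitely many modulo the FL closure) candidate threads of a condition that is itself a Boolean combination of \(\Pi^0_2\) statements (``infinitely often principal'') — hence the whole winning set is \(\Sigma^0_3\), certainly Borel. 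The case of deadlock (a player stuck with no legal move) gives finite plays, which are decided outright and contribute only clopen pieces to the payoff. Therefore, by Borel determinacy (Martin) — or, since the complexity is low, by the earlier Gale–Stewart-style determinacy results for such arithmetical games used in non-wellfounded proof theory — the game is determined: from any starting sequent \(\Gamma \seqar S\), exactly one of \(\prover\), \(\denier\) has a winning strategy. The annotation ``\(\exists 0\#\)'' in the statement signals that full Borel determinacy (which needs sharps / is not provable in ZFC for arbitrary Borel sets at this level only up to \(\omega\) Turing jumps, but is provable in ZFC here) is the ambient principle invoked; I would simply cite Martin's Borel determinacy theorem.

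The main obstacle I anticipate is \emph{not} the determinacy invocation itself but the bookkeeping needed to present the arena as a genuine Gale–Stewart game: one must check that ``at each turn \(\prover\) chooses an inference step and \(\denier\) a premiss'' really does yield a tree of countable (even finite) branching after quotienting by the FL closure, that the set of legal positions is recursive, and — most delicately — that ``having a progressing thread'' is legitimately Borel, which requires care because threads are themselves infinite objects chosen \emph{a posteriori}. The cleanest route is to note there are only finitely many ``thread types'' (determined by which formula of the finite FL closure is being tracked and on which side), so ``\(\exists\) progressing thread'' is a finite disjunction of \(\Pi^0_3\)-ish conditions on \(B\), keeping us safely inside Borel and even letting one appeal to the more elementary determinacy of \(\omega\)-regular / parity-like conditions if one wishes to avoid Borel determinacy altogether; I would remark on this alternative but carry out the Borel-determinacy version for brevity.
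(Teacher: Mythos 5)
There is a genuine gap, and it lies exactly where you located the delicacy: the claim that ``having a progressing thread'' is Borel. Your reduction rests on the assertion that there are only finitely many ``thread types'', determined by which formula of the FL closure is being tracked and on which side. But a thread is not a formula of the FL closure: it is an infinite path in the immediate-ancestry graph over formula \emph{occurrences}, i.e.\ positions in cedents, and the cedents along a play grow without bound (e.g.\ $\lr\cdot$ and $\lr\mu$ unfoldings lengthen the LHS, and $\rr+$ duplicates lists on the RHS; the evaluation-puzzle example with $\dyck{1}$-style expressions shows the ``stack'' genuinely grows). So the set of positions is countably infinite, the immediate-ancestry graph along a branch is an infinite finitely-branching digraph with possibly uncountably many infinite paths, and two occurrences of the \emph{same} FL-formula that are each principal infinitely often need not lie on a common thread. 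Consequently ``the formula $f$ is i.o.p.\ along $B$'' does not imply ``there is a thread along which $f$ is i.o.p.'', and your finite disjunction of $\Pi^0_3$ conditions does not express the winning set. The existential quantifier over threads is a genuine quantifier over infinite objects, which is why the winning condition is (lightface) $\Sigma^1_1$; this is precisely how the paper treats it, and the proposition's annotation $\exists 0\#$ refers to Harrington's theorem that lightface analytic determinacy is equivalent to the existence of $0^\#$ --- a principle strictly beyond $\mathrm{ZFC}$. Your reading of the annotation as signalling Borel determinacy is also off: Borel determinacy is a $\mathrm{ZFC}$ theorem of Martin and needs no sharps.

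The paper's own ``proof'' is essentially the one-line observation that the payoff set is $\Sigma^1_1$ over a countably-branching Gale--Stewart arena, plus an appeal to analytic determinacy; the authors explicitly leave open (in the conclusion) whether this strength is necessary. Your set-up of the arena as a game on a subtree of $\omega^{<\omega}$ is fine and matches what is implicit in the paper, but the complexity calibration is the whole content of the proposition's hypothesis, and lowering it from $\Sigma^1_1$ to Borel would be a genuine improvement requiring a new idea (for instance, a uniform way of witnessing progressing threads by finitely much data per sequent, which is not available here because sequents are unbounded). As written, the proposal proves determinacy of a different, weaker winning condition.
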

Note that the winning condition of the proof search game is (lightface) analytic, i.e.\ $\Sigma^1_1$: ``there \emph{exists} a progressing thread''.
Lightface analytic determinacy lies beyond $\mathrm{ZFC}$, as indicated equivalent to the existence of $0\#$ \cite{Harrington1978}.
Further consideration of our metatheory is beyond the scope of this work.

\smallskip

It is not hard to see that $\prover$-winning-strategies are `just' $\infty$-proofs.
Our goal is to show a similar result for $\denier$, a sort of `countermodel construction'.

\begin{lemma}
\label{lem:d-strat-to-countermodel}
$\denier$ has a winning strategy from $\Gamma \seqar S$ $\implies$ $ \lang \Gamma \setminus \lang S \neq \emptyset$.
\end{lemma}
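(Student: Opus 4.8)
The plan is to fix a winning strategy $\denstrat$ for $\denier$ from $\Gamma \seqar S$ and use it to build an infinite branch together with a candidate word $w$ witnessing $w \in \wlang\Gamma \setminus \wlang S$. The key idea is that proof search under $\denstrat$, when Prover applies left logical rules exhaustively and then reads off letters via $\kl a$ steps, mimics a play of the evaluation puzzle on the LHS; meanwhile the RHS records, in its set of cedents, \emph{all} the puzzle plays that could witness $w \in \wlang S$, and Denier's winning condition will tell us that none of them is winning.

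First I would have $\prover$ play a fixed fair strategy: always decompose the leftmost formula with the appropriate left logical rule, and when the leftmost formula is a letter $a$, play $\kl a$. Against $\denstrat$ this produces a unique infinite branch $B = (\Gamma_i \seqar S_i)_{i<\omega}$ (infinite since the process never deadlocks — there is always a leftmost rule available, using left-guardedness to rule out the pathology of \cref{thm:evaluation-theorem}'s bad cases, i.e.\ $\lr\nu$ loops with nothing to decompose). Reading off the $\kl a$ steps along $B$ gives a word $w \in \Alphabet^{\leq\omega}$, and by construction the LHS cedents $(\Gamma_i)$ trace out a puzzle play from $(w,\Gamma)$; I claim this play is winning, hence $w \in \wlang\Gamma$ by \cref{thm:evaluation-theorem}. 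Indeed if it were losing it would either terminate badly — but $\prover$'s strategy only stops the LHS when it is emptied, matching $(\epsilon,[\,])$ — or be infinite with no $\nu$-thread on the LHS; but the LHS of $\lmunuHKA$ has no $\nu$-rule affecting progress in the ``wrong'' direction, and an infinite LHS thread that is i.o.p.\ must be a $\mu$-thread, i.e.\ LHS-progressing, contradicting that $\denier$ wins $B$. (This is the step I would be most careful about: matching the puzzle's winning condition against the branch's progress condition on the LHS, using \cref{prop:lmu-fair-implies-exists-prog-thread} and \cref{fact:io-threads-have-critical-formula}.)

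Next I would show $w \notin \wlang S$. Suppose for contradiction $w \in \wlang{\Delta}$ for some $\Delta \in S$; by \cref{thm:evaluation-theorem} there is a winning play $\pi$ from $(w,\Delta)$. The point is that the RHS of $B$ faithfully simulates \emph{every} puzzle play on each of its cedents: the $\rr{}$ rules mirror exactly the puzzle moves (with $\rr +$ branching, but Denier only ever picks one premiss, so I need to observe that $\prover$'s fair strategy, applied to the RHS as well, keeps \emph{all} disjuncts alive — this is precisely why the RHS is a \emph{set} of cedents rather than a single one, and why the system has both $\rr +$ keeping both $[\Delta,e,\Delta']$ and $[\Delta,f,\Delta']$). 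So along $B$ there is a sequence of cedents in $S_i$ tracking $\pi$; since $\pi$ is winning it either reaches $[\,]$ with $w$ exhausted — forcing $B$ to be finite or the corresponding list to disappear only via $\rr 1/\kl a$ leaving an $\id$-closable leaf, contradicting $B$ infinite or $\denstrat$ winning — or it is infinite with a $\nu$-thread on the right. In the latter case that $\nu$-thread is a progressing thread of $B$ on the RHS, so $\prover$ wins $B$, contradicting that $\denstrat$ is a winning strategy for $\denier$. Hence no such $\Delta$ exists, $w \notin \wlang S$, and $\lang\Gamma \setminus \lang S \neq \emptyset$.

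\textbf{Main obstacle.} The delicate point is the bookkeeping on the RHS: ensuring that $\prover$'s fair strategy genuinely enumerates \emph{all} relevant puzzle plays for \emph{all} cedents in $S$ simultaneously (so that a hypothetical winning play $\pi$ from some $\Delta \in S$ is actually realised as a thread along $B$), and that the reading-off of $w$ from the LHS is consistent with the letters consumed on the RHS — both sides must be driven by the same sequence of $\kl a$ steps, which is automatic since there is a single branch but needs to be stated carefully. Fairness of $\prover$'s scheduling (decompose everything on the RHS infinitely often, interleaved with the LHS) is what makes the simulation exhaustive; formalising this interleaving and checking it against left-guardedness to guarantee the branch is infinite and letters keep getting produced is the technical heart of the argument.
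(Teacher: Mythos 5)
Your proposal is correct and follows essentially the same route as the paper's proof: play a leftmost, exhaustive Prover strategy against $\denier$'s winning strategy, read the candidate word $w$ off the $\kl{}$ steps, and apply \cref{thm:evaluation-theorem} on both sides, with the LHS puzzle play winning because the i.o.p.\ thread there cannot be a (progressing) $\mu$-thread and hence must be a $\nu$-thread, and with any putative winning RHS play yielding a progressing $\nu$-thread that would contradict $\denier$'s win. One small caution: your remark that ``an infinite LHS thread that is i.o.p.\ must be a $\mu$-thread'' is only valid under your contradiction hypothesis that no LHS $\nu$-thread exists ($\lr\nu$ is a rule of $\lmunuHKA$, and the winning case is precisely when the i.o.p.\ LHS thread \emph{is} a $\nu$-thread), so the load-bearing step is the $\mu$/$\nu$ dichotomy of \cref{fact:io-threads-have-critical-formula} rather than any absence of $\nu$-rules on the left.
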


\noindent
Before proving this, let us point out that \cref{lem:invertibility} applies equally to the system $\munuHKA$.
We also have the useful observation:
\begin{proposition}
    [Modal]
    \label{lem:modal}
     $\lang {a\Gamma} \subseteq \{\epsilon\} \cup \bigcup\limits_{a\in \Alphabet}\lang{aS_a} \implies \lang{\Gamma} \subseteq \lang{S_a}$.
\end{proposition}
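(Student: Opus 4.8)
The plan is to prove this by a straightforward first-letter analysis. Throughout, fix the letter $a\in\Alphabet$ appearing in the statement and, to avoid clashing with it, let $b$ range over $\Alphabet$ in the union, so that the hypothesis reads $\lang{a\Gamma}\subseteq\{\epsilon\}\cup\bigcup_{b\in\Alphabet}\lang{bS_b}$. Unfolding the definitions, $\lang{a\Gamma}=\{a\}\cdot\lang\Gamma$ and $\lang{bS_b}=\{b\}\cdot\lang{S_b}$ for each $b$; hence every word in $\lang{a\Gamma}$ begins with $a$ (in particular $\epsilon\notin\lang{a\Gamma}$) and every word in $\lang{bS_b}$ begins with $b$. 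In particular $\lang{a\Gamma}$ meets neither $\{\epsilon\}$ nor any $\lang{bS_b}$ with $b\neq a$.

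Now intersect the hypothesis with $W_a$, the set of all words in $\Alphabet^{\leq\omega}$ beginning with $a$: the left-hand side is unchanged since $\lang{a\Gamma}\subseteq W_a$, while on the right-hand side every summand except $\lang{aS_a}$ is removed, giving $\lang{a\Gamma}\subseteq\lang{aS_a}$, i.e.\ $\{a\}\cdot\lang\Gamma\subseteq\{a\}\cdot\lang{S_a}$. It then remains to cancel the leading $a$, which is legitimate because $w\mapsto aw$ is injective on $\Alphabet^{\leq\omega}$ --- the convention $vw=v$ for $|v|=\omega$ is irrelevant as the prefix here is the single finite letter $a$ --- so from $aw\in\{a\}\cdot\lang{S_a}$ we get $w\in\lang{S_a}$. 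Applying this to an arbitrary $w\in\lang\Gamma$ yields $\lang\Gamma\subseteq\lang{S_a}$; the case $\lang\Gamma=\emptyset$ is vacuous.

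I do not anticipate a genuine obstacle: the proposition merely isolates the ``modal'' content of the $\K$-rules, namely that passing bottom-up through $\kl a$ preserves exactly the $a$-residual of the right-hand side, and the argument is pure bookkeeping about first letters. The only spots warranting a moment's care are keeping the two occurrences of the letter $a$ distinct (fixed index versus bound variable) and checking the disjointness and injectivity claims uniformly for finite and infinite words, both of which are routine.
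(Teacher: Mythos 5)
Your proof is correct and matches the paper's intent: the paper gives no argument at all, merely asserting that the proposition ``follows directly from the definition of $\lang\cdot$'', and your first-letter analysis (disjointness of the summands by initial letter, then cancellation of the prefix $a$, which is injective even on $\Alphabet^{\leq\omega}$) is exactly the routine unfolding being alluded to. Your care about the clash between the fixed $a$ and the bound index, and about the infinite-word concatenation convention, is appropriate but does not change the argument.
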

\noindent
This follows directly from the definition of $\lang\cdot$.
Now we can carry out our `countermodel construction' from $\denier$-winning-strategies:

\begin{proof}
[Proof of \cref{lem:d-strat-to-countermodel}]
Construct a $\prover$-strategy $\provstrat$ that is deadlock-free by always preserving validity, relying on \cref{lem:invertibility,lem:modal}.
In more detail $\provstrat$ does the following:
\begin{enumerate}
    \item\label{item:apply-leftmost-steps} Apply leftmost logical steps (on LHS or RHS) as long as possible.
    \item\label{item:lhs-empty-apply-id} If the LHS is empty and the RHS contains an empty list, then weaken the remainder of the RHS and apply $\id$.
    \item\label{item:lhs-nonempty-apply-modal} Otherwise, if the LHS has form $a\Gamma$ and RHS has form $aS_a,S$, where $S$ contains only lists that are empty or begin with some $b\neq a$, then apply $\wk$ and $\kl a $ to obtain the sequent $\Gamma \seqar S_a$ and go back to \ref{item:apply-leftmost-steps}.
\end{enumerate}
Now each iteration of \ref{item:apply-leftmost-steps} must terminate by left-guardedness and leftmostness.
This must end at a valid sequent, by \cref{lem:invertibility},  each of whose lists are either $\empty$ or begin with a letter, by inspection of the rules.
Now, if the LHS is empty, then the RHS must contain an empty list, and so step \ref{item:lhs-empty-apply-id} successfully terminates the preproof.
If the LHS has form $a\Gamma$, then step \ref{item:lhs-nonempty-apply-modal} applies and preserves validity by \cref{lem:modal}.
Note that any infinite play of $\provstrat$ must repeat step \ref{item:lhs-nonempty-apply-modal} infinitely often, as each iteration of \ref{item:apply-leftmost-steps} terminates, and so has infinitely many $\kl{}$ steps and is not ultimately stable.

Now, suppose $\denstrat$ is a $\denier$-winning-strategy and play $\provstrat$ against it to construct a play $B = (\sequent_i)_{i<\omega} = (  \Gamma_i \seqar S_i)_{i<\omega}$. Note that indeed this play must be infinite since (a) $\provstrat$ is deadlock-free; and (b) $\denstrat$ is $\denier$-winning.
Now, let $w = \prod\limits_{\kl a \in B} a$ be the product of labels of $\K$ steps along $B$, in the order they appear bottom-up. We claim $w \in \lang \Gamma \setminus \lang S$:
\begin{itemize}
    \item $w \in \lang \Gamma$. By construction $[\Gamma_i]_i$ has a subsequence forming an infinite play $\pi$ of the evaluation puzzle from $(w,\Gamma)$. 
    Since the play $B$ is won by $\denier$, $B$ cannot have a $\mu$-thread so it must have a $\nu$-thread (since it is i.o.p.), and so $\pi$ is winning. Thus $w \in \lang \Gamma$ by \cref{thm:evaluation-theorem}.
    \item $w \notin \lang S$. Take an arbitrary play $\pi$ of the evaluation puzzle from some $(w,\Delta)$  with $\Delta \in S$.
    This again induces an infinite sequence of cedents $[\Delta_i]_{i<\omega}$ along the RHSs of $B$.
    Now, $[\Delta_i]_{i<\omega}$ cannot have a  $\nu$-thread by assumption that $B$ is winning for $\denier$, and so $\pi$ is not a winning play of the evaluation puzzle from $(w,\Delta)$.
    Since the choices of $\Delta \in S$ and play $\pi$ were arbitrary, indeed we have $w\notin \lang S$ by \cref{thm:evaluation-theorem}.
\end{itemize}
\end{proof}

Now from \cref{prop:det-prf-srch-game,lem:d-strat-to-countermodel}, observing that $\prover$-winning-strategies are just $\infty$-proofs, we conclude:
\begin{theorem}[Completeness]
\label{thm:munuHKA-completeness}
$\lang \Gamma \subseteq \lang S \implies \lmunuHKA \nwfproves \Gamma \seqar S$.
\end{theorem}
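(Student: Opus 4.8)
The plan is to derive the statement directly from the two tools assembled in this section: the determinacy of the proof search game (\cref{prop:det-prf-srch-game}) and the countermodel construction from $\denier$-strategies (\cref{lem:d-strat-to-countermodel}). Suppose $\lang\Gamma \subseteq \lang S$. By \cref{prop:det-prf-srch-game}, one of $\prover$ or $\denier$ has a winning strategy in the proof search game started from $\Gamma\seqar S$. If $\denier$ had a winning strategy, then \cref{lem:d-strat-to-countermodel} would give $\lang\Gamma\setminus\lang S\neq\emptyset$, contradicting our hypothesis. Hence $\prover$ has a winning strategy $\provstrat$ from $\Gamma\seqar S$.

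The remaining step is the observation, flagged in the paragraph preceding the theorem, that $\prover$-winning-strategies ``are just'' $\infty$-proofs. I would make this precise as follows: a $\prover$-strategy is (essentially by definition) a function assigning to each finite play ending in a sequent an inference step with that sequent as conclusion; unfolding it over all of $\denier$'s responses produces a preproof $P$ of $\Gamma\seqar S$, where deadlock-freeness of $\provstrat$ (which is part of being $\prover$-winning, since a deadlocked $\prover$ loses) guarantees every leaf is an axiom and $P$ is a genuine preproof of $\lmunuHKA$. Conversely every branch of $P$ corresponds to a maximal play consistent with $\provstrat$. Since $\provstrat$ is winning for $\prover$, every infinite such play — hence every infinite branch of $P$ — has a progressing thread; that is exactly the condition for $P$ to be a $\infty$-proof. (One should also note the plays are leftmost and $\kr{}$-free since those are the only moves available to $\prover$ in $\lmunuHKA$, so $P$ lies in $\lmunuHKAnwf$.) Therefore $\lmunuHKA\nwfproves\Gamma\seqar S$.

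The genuinely substantive content has already been discharged upstream — in \cref{thm:evaluation-theorem} (Evaluation), which underlies \cref{lem:d-strat-to-countermodel}, and in the determinacy principle \cref{prop:det-prf-srch-game}. So the only point requiring care here is the correspondence between strategies and proofs, and in particular checking that the \emph{game-theoretic} progress condition on $\prover$-winning plays coincides with the \emph{proof-theoretic} progress condition defining $\infty$-proofs; this is immediate from the fact that both are literally stated as ``every infinite branch/play has a progressing thread''. I do not anticipate a real obstacle at this stage, beyond being careful that $\provstrat$ being winning includes handling deadlock positions correctly so that the induced $P$ has no non-axiom leaves.
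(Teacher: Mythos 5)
Your proposal is correct and is exactly the paper's argument: the theorem is deduced from determinacy of the proof search game (\cref{prop:det-prf-srch-game}) together with the countermodel construction from $\denier$-winning-strategies (\cref{lem:d-strat-to-countermodel}), plus the observation that $\prover$-winning-strategies are precisely $\infty$-proofs. Your careful unpacking of that last correspondence (deadlock-freeness giving axiom leaves, winning plays giving progressing branches, and the leftmost restriction being built into $\prover$'s available moves) is a faithful elaboration of what the paper leaves implicit.
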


\section{Complexity matters and further perspectives}

In this subsection we make further comments, in particular regarding the complexity of our systems, at the level of arithmetical and analytical hierarchies. 
These concepts are well-surveyed in standard textbooks, e.g.\ \cite{mansfield1985recursive,sacks2017}, as well as various online resources.

\medskip

\textbf{Complexity and irregularity for finite words.}
The equational theory of $\mu$-expressions in $\lang\cdot$ is actually $\Pi^0_1$-complete, i.e.\ co-recursively-enumerable, due to the same complexity of universality of context-free grammars (see, e.g., \cite{HopcroftUllmanMotwani06}).
In this sense there is no hope of attaining a finitely presentable (e.g.\ cyclic, inductive) system for the equational theory of $\mu$-expressions in $\lang\cdot$.
However it is not hard to see that our wellfounded system $\muHKAw$ enjoys optimal $\Pi^0_1$ proof search, thanks to invertibility and termination of the rules, along with decidability of membership checking.
Indeed a similar argument is used by Palka in \cite{Palka07} for the theory of `$*$-continuous action lattices'.
Furthermore let us point out that our non-wellfounded system also enjoys optimal proof search: $\muHKA \nwfproves \Gamma \seqar S$ is equivalent, by invertibility, to checking that \emph{every} sequent of form $\vec a \seqar S$ reachable by only left rules in bottom-up proof search has a polynomial-size proof (bound induced by length of leftmost derivations).
This is a $\Pi^0_1$ property.

\medskip

\textbf{Complexity and inaxiomatisability for infinite words.}
It would be natural to wonder whether a similar argument to \cref{sec:inf-ax-cf-theory} gives rise to some infinitary axiomatisation of the equational theory of $\mu\nu$-expressions in $\lang \cdot$. 
In fact, it turns out this is impossible: the equational theory of $\omega$-CFLs is $\Pi^1_2$-complete \cite{Finkel09}, so there is no hope of a $\Pi^0_1$ (or even $\Sigma^1_2$) axiomatisation.
In particular, the projection argument of \cref{sec:inf-ax-cf-theory} cannot be scaled to the full system $\lmunuHKA$ because $\cdot $ does not distribute over $\bigcap$ in $\lang\cdot$, for the corresponding putative `right $\omega $ steps' for $\nu$.
For instance $0 = ((aa)^* \cap a(aa)^*) a^* \neq (aa)^*a^* \cap a(aa)^*a^* = aa^*$.
Indeed let us point out that here it is crucial to use our hypersequential system $\HKA$ as a base rather than, say, the intuitionistic systems of other proof theoretic works for regular expressions (and friends) \cite{Palka07,DasPous18:lka-pt}: the appropriate extension of those systems by $\mu$s and $\nu$s should indeed enjoy an $\omega$-translation, due to only one formula on the right, rendering them incomplete.

Again let us point out that $\infty$-provability in $\lmunuHKA$, in a sense, enjoys optimal complexity.
By determinacy of the proof search game, $\lmunuHKA \nwfproves \Gamma \seqar S $ if and only if there is \emph{no} $\denier$-winning-strategy from $\Gamma\seqar S$.
The latter is indeed a $\Pi^1_2$ statement: ``\emph{for every} $\denier$-strategy, \emph{there exists} a play along which \emph{there exists} a progressing thread''.

\medskip

\textbf{Comparison to \cite{GraHenKoz13:inf-ax-cf-langs}.}
Our method for showing completeness of $\muHKAw$ is quite different from the analogous result of \cite{GraHenKoz13:inf-ax-cf-langs} which uses the notion of `rank' for $\mu$-formulas, cf.~\cite{AlbKraStu14:just-ind-mu-forms}.
Our result is somewhat stronger, giving \emph{cut-free} completeness, but it could be possible to use ranks directly to obtain such a result too. 
More interestingly, the notion of projections and $\omega$-translation should be well-defined (for LHS $\mu$ formulas) even in the presence of $\nu$s, cf.~\cite{DDS23}, whereas the rank method apparently breaks down in such extensions. 
This means that our method should also scale to $\munuHKA$ $\infty$-proofs where, say, each infinite branch has a LHS $\mu$-thread.
It would be interesting to see if this method can be used to axiomatise some natural fragments of $\omega$-context-free inclusions.

Note that, strictly speaking, our completeness result for $\CFA$ was only given for the guarded fragment. However it is known that $\CFA$ (and even weaker theories) already proves the equivalence of each expression to one that is even left-guarded, by formalising conversion to Greibach normal form \cite{EsikLeiss05:alg-comp-semirings}.

\section{Conclusions}
In this work we investigated of the proof theory of context-free languages (CFLs) over a syntax of $\mu$-expressions. 
We defined a non-wellfounded proof system $\muHKAnwf$ and showed its soundness and completeness for the model $\lang\cdot$ of context-free languages.
We used this completeness result to recover the same for a cut-free $\omega$-branching system $\muHKAw$ via proof-theoretic techniques. 
This gave an alternative proof of the completeness for the theory of $\mu$-continuous Chomsky algebras from \cite{GraHenKoz13:inf-ax-cf-langs}.
We extended $\mu$-expressions by \emph{greatest} fixed points to obtain a syntax for $\omega$-context-free languages. We studied an extension by \emph{greatest} fixed points, $\lmunuHKAnwf$ and showed its soundness and completeness for the model $\lang\cdot$ of context-free languages, employing game theoretic techniques.

\smallskip

Since inclusion of CFLs is $\Pi^0_1$-complete, no recursively enumerable (r.e.) system can be sound and complete for their equational theory. 
However, by restricting products to a letter on the left one can obtain a syntax for \emph{right-linear grammars}. 
Indeed, for such a restriction complete cyclic systems can be duly obtained~\cite{DasDe24}. 
It would be interesting to investigate systems for related decidable or r.e.\ inclusion problems, e.g.\ inclusions of context-free languages in regular languages, and inclusions of \emph{visibly pushdown} languages~\cite{AlurMadhusudan04,AlurMadhusudan09}.

\smallskip

The positions of our evaluation puzzle for $\mu\nu$-expressions use cedents to decompose products, similar to the stack of a pushdown automaton, rather than requiring an additional player. 
Previous works have similarly proposed model-checking games for (fragments/variations of) context-free expressions, cf.~\cite{Lange02:model-checking-fp-logic+chop,LoedMadhSerr04:vp-games}, where more complex winning conditions seem to be required.
It would be interesting to compare our evaluation puzzle to those games in more detail.

Note
that our completeness result, via determinacy of the proof search game, depends on the assumption of (lightface) analytic determinacy. It is natural to ask whether this is necessary, but this consideration is beyond the scope of this work. 
Let us point out, however, 
that even $\omega$-context-free determinacy exceeds the capacity of $\mathrm{ZFC}$~\cite{Finkel:determinacy,LiTanaka17}.

\smallskip

Finally, it would be interesting to study the \emph{structural} proof theory arising from systems $\muHKAnwf$ and $\munuHKAnwf$, cf.~\cite{DasPous18:lka-pt}. 
It would also be interesting to see if the restriction to leftmost $\infty$-proofs can be replaced by stronger progress conditions, such as the `alternating threads' from \cite{DasGirlando22,DasGirlando23}, in a similar hypersequential system for predicate logic.
Note that the same leftmost constraint was employed in \cite{HazKup22:transfin-HKA} for an extension of $\HKA$ to \emph{$\omega$-regular languages}.

\bibliographystyle{alpha}
\bibliography{biblio}

\begin{thebibliography}{MHHO05}

\bibitem[AKS14]{AlbKraStu14:just-ind-mu-forms}
Luca Alberucci, Jürg Krähenbühl, and Thomas Studer.
\newblock {Justifying induction on modal $\mu$-formulae}.
\newblock {\em Logic Journal of the IGPL}, 22(6):805--817, 03 2014.

\bibitem[AM04]{AlurMadhusudan04}
Rajeev Alur and P.~Madhusudan.
\newblock Visibly pushdown languages.
\newblock In {\em Proceedings of the Thirty-Sixth Annual ACM Symposium on
  Theory of Computing}, STOC '04, page 202–211, New York, NY, USA, 2004.
  Association for Computing Machinery.

\bibitem[AM09]{AlurMadhusudan09}
Rajeev Alur and P.~Madhusudan.
\newblock Adding nesting structure to words.
\newblock {\em J. ACM}, 56(3), may 2009.

\bibitem[CD22]{CurDas22:cic}
Gianluca Curzi and Anupam Das.
\newblock Cyclic implicit complexity.
\newblock In Christel Baier and Dana Fisman, editors, {\em {LICS} '22: 37th
  Annual {ACM/IEEE} Symposium on Logic in Computer Science, August 2 - 5,
  2022}, pages 19:1--19:13, Haifa, 2022. {ACM}.

\bibitem[CG77]{CG77}
Rina~S Cohen and Arie~Y Gold.
\newblock Theory of $\omega$-languagesi, ii: Characterizations of
  $\omega$-context-free languages.
\newblock {\em Journal of Computer and System Sciences}, 15(2):169--208, 1977.

\bibitem[CKS81]{ChaKozSto81:alternation}
Ashok~K. Chandra, Dexter~C. Kozen, and Larry~J. Stockmeyer.
\newblock Alternation.
\newblock {\em J. ACM}, 28(1):114–133, jan 1981.

\bibitem[CLS15]{CraLauStr15:omega-regular-algebras}
James Cranch, Michael~R. Laurence, and Georg Struth.
\newblock Completeness results for omega-regular algebras.
\newblock {\em Journal of Logical and Algebraic Methods in Programming},
  84(3):402--425, 2015.
\newblock 13th International Conference on Relational and Algebraic Methods in
  Computer Science (RAMiCS 2012).

\bibitem[Con71]{Conway71book}
John~H. Conway.
\newblock {\em Regular algebra and finite machines}.
\newblock Chapman and Hall mathematics series. Chapman and Hall, 1971.

\bibitem[DD24]{DasDe24}
Anupam Das and Abhishek De.
\newblock A proof theory of right-linear (omega-)grammars via cyclic proofs.
\newblock {\em CoRR}, abs/2401.13382, 2024.

\bibitem[DDP18]{DasDouPou18:lka-completeness}
Anupam Das, Amina Doumane, and Damien Pous.
\newblock Left-handed completeness for kleene algebra, via cyclic proofs.
\newblock In Gilles Barthe, Geoff Sutcliffe, and Margus Veanes, editors, {\em
  LPAR-22. 22nd International Conference on Logic for Programming, Artificial
  Intelligence and Reasoning}, volume~57 of {\em EPiC Series in Computing},
  pages 271--289. EasyChair, 2018.

\bibitem[DDS23]{DDS23}
Anupam Das, Abhishek De, and Alexis Saurin.
\newblock {Comparing Infinitary Systems for Linear Logic with Fixed Points}.
\newblock In Patricia Bouyer and Srikanth Srinivasan, editors, {\em 43rd IARCS
  Annual Conference on Foundations of Software Technology and Theoretical
  Computer Science (FSTTCS 2023)}, volume 284 of {\em Leibniz International
  Proceedings in Informatics (LIPIcs)}, pages 40:1--40:17, Dagstuhl, Germany,
  2023. Schloss Dagstuhl -- Leibniz-Zentrum f{\"u}r Informatik.

\bibitem[DG22]{DasGirlando22}
Anupam Das and Marianna Girlando.
\newblock Cyclic proofs, hypersequents, and transitive closure logic.
\newblock In Jasmin Blanchette, Laura Kov{\'{a}}cs, and Dirk Pattinson,
  editors, {\em Automated Reasoning - 11th International Joint Conference,
  {IJCAR} 2022, Haifa, Israel, August 8-10, 2022, Proceedings}, volume 13385 of
  {\em Lecture Notes in Computer Science}, pages 509--528. Springer, 2022.

\bibitem[DG23]{DasGirlando23}
Anupam Das and Marianna Girlando.
\newblock Cyclic hypersequent system for transitive closure logic.
\newblock {\em J. Autom. Reason.}, 67(3):27, 2023.

\bibitem[DP17]{DasPou17:hka}
Anupam Das and Damien Pous.
\newblock A cut-free cyclic proof system for kleene algebra.
\newblock In Renate~A. Schmidt and Cl{\'{a}}udia Nalon, editors, {\em Automated
  Reasoning with Analytic Tableaux and Related Methods - 26th International
  Conference, {TABLEAUX} 2017, Bras{\'{\i}}lia, Brazil, September 25-28, 2017,
  Proceedings}, volume 10501 of {\em Lecture Notes in Computer Science}, pages
  261--277. Springer, 2017.

\bibitem[DP18]{DasPous18:lka-pt}
Anupam Das and Damien Pous.
\newblock {Non-Wellfounded Proof Theory For
  (Kleene+Action)(Algebras+Lattices)}.
\newblock In Dan~R. Ghica and Achim Jung, editors, {\em 27th EACSL Annual
  Conference on Computer Science Logic (CSL 2018)}, volume 119 of {\em Leibniz
  International Proceedings in Informatics (LIPIcs)}, pages 19:1--19:18,
  Dagstuhl, Germany, 2018. Schloss Dagstuhl -- Leibniz-Zentrum f{\"u}r
  Informatik.

\bibitem[{\'E}L02]{EsikLeiss02}
Zolt{\'a}n {\'E}sik and Hans Lei$\beta$.
\newblock Greibach normal form in algebraically complete semirings.
\newblock In Julian Bradfield, editor, {\em Computer Science Logic}, pages
  135--150, Berlin, Heidelberg, 2002. Springer Berlin Heidelberg.

\bibitem[EL05]{EsikLeiss05:alg-comp-semirings}
Zolt\'an \'Esik and Hans Lei\ss.
\newblock Algebraically complete semirings and greibach normal form.
\newblock {\em Annals of Pure and Applied Logic}, 133(1):173--203, 2005.
\newblock Festschrift on the occasion of Helmut Schwichtenberg's 60th birthday.

\bibitem[Fin09]{Finkel09}
Olivier Finkel.
\newblock Highly undecidable problems for infinite computations.
\newblock {\em RAIRO - Theoretical Informatics and Applications},
  43(2):339–364, 2009.

\bibitem[Fin13]{Finkel:determinacy}
Olivier Finkel.
\newblock The determinacy of context-free games.
\newblock {\em The Journal of Symbolic Logic}, 78(4):1115--1134, 2013.

\bibitem[GHK13]{GraHenKoz13:inf-ax-cf-langs}
Niels Bjørn~Bugge Grathwohl, Fritz Henglein, and Dexter Kozen.
\newblock Infinitary axiomatization of the equational theory of context-free
  languages.
\newblock {\em Electronic Proceedings in Theoretical Computer Science},
  126:44–55, August 2013.

\bibitem[GR62]{GinsburgRice62}
Seymour Ginsburg and H~Gordon Rice.
\newblock Two families of languages related to algol.
\newblock {\em Journal of the ACM (JACM)}, 9(3):350--371, 1962.

\bibitem[Gru71]{Gruska71}
Jozef Gruska.
\newblock A characterization of context-free languages.
\newblock {\em Journal of Computer and System Sciences}, 5(4):353--364, 1971.

\bibitem[Har78]{Harrington1978}
Leo Harrington.
\newblock Analytic determinacy and $0\#$.
\newblock {\em Journal of Symbolic Logic}, 43(4):685–693, 1978.

\bibitem[HK22]{HazKup22:transfin-HKA}
Emile Hazard and Denis Kuperberg.
\newblock Cyclic proofs for transfinite expressions.
\newblock In Florin Manea and Alex Simpson, editors, {\em 30th {EACSL} Annual
  Conference on Computer Science Logic, {CSL} 2022, February 14-19, 2022,
  G{\"{o}}ttingen, Germany (Virtual Conference)}, volume 216 of {\em LIPIcs},
  pages 23:1--23:18. Schloss Dagstuhl - Leibniz-Zentrum f{\"{u}}r Informatik,
  2022.

\bibitem[JEH01]{HopcroftUllmanMotwani06}
Jeffrey D.~Ullman John E.~Hopcroft, Rajeev~Motwani.
\newblock {\em Introduction to automata theory, languages, and computation}.
\newblock Addison-Wesley, 2nd ed edition, 2001.

\bibitem[Jip04]{Jip04:semirings-res-kls}
Peter Jipsen.
\newblock From semirings to residuated kleene lattices.
\newblock {\em Studia Logica}, 76:291--303, 03 2004.

\bibitem[Kle56]{Kleene56}
S.~C. Kleene.
\newblock {\em Representation of Events in Nerve Nets and Finite Automata},
  pages 3--42.
\newblock Princeton University Press, Princeton, 1956.

\bibitem[KMV22]{KupMarVen22:graph-reps-mu-forms}
Clemens Kupke, Johannes Marti, and Yde Venema.
\newblock {Succinct Graph Representations of $\mu$-Calculus Formulas}.
\newblock In Florin Manea and Alex Simpson, editors, {\em 30th EACSL Annual
  Conference on Computer Science Logic (CSL 2022)}, volume 216 of {\em Leibniz
  International Proceedings in Informatics (LIPIcs)}, pages 29:1--29:18,
  Dagstuhl, Germany, 2022. Schloss Dagstuhl -- Leibniz-Zentrum f{\"u}r
  Informatik.

\bibitem[Koz83]{Koz83:results-on-mu}
Dexter Kozen.
\newblock Results on the propositional $\mu$-calculus.
\newblock {\em Theoretical Computer Science}, 27(3):333--354, 1983.
\newblock Special Issue Ninth International Colloquium on Automata, Languages
  and Programming (ICALP) Aarhus, Summer 1982.

\bibitem[Koz94]{Kozen94:completeness-ka}
D.~Kozen.
\newblock A completeness theorem for kleene algebras and the algebra of regular
  events.
\newblock {\em Information and Computation}, 110(2):366--390, 1994.

\bibitem[Kro90]{Krob91:ka-completeness}
Daniel Krob.
\newblock A complete system of {B}-rational identities.
\newblock In Michael~S. Paterson, editor, {\em Automata, Languages and
  Programming}, pages 60--73, Berlin, Heidelberg, 1990. Springer Berlin
  Heidelberg.

\bibitem[KS97]{KozSmi97:kat-completeness}
Dexter Kozen and Frederick Smith.
\newblock Kleene algebra with tests: Completeness and decidability.
\newblock In Dirk van Dalen and Marc Bezem, editors, {\em Computer Science
  Logic}, pages 244--259, Berlin, Heidelberg, 1997. Springer Berlin Heidelberg.

\bibitem[KS12]{KozSil12:left-handed-completeness}
Dexter Kozen and Alexandra Silva.
\newblock Left-handed completeness.
\newblock In Wolfram Kahl and Timothy~G. Griffin, editors, {\em Relational and
  Algebraic Methods in Computer Science}, pages 162--178, Berlin, Heidelberg,
  2012. Springer Berlin Heidelberg.

\bibitem[KS20]{KozSil20:left-handed-completeness}
Dexter Kozen and Alexandra Silva.
\newblock Left-handed completeness.
\newblock {\em Theoretical Computer Science}, 807:220--233, 2020.
\newblock In memory of Maurice Nivat, a founding father of Theoretical Computer
  Science - Part II.

\bibitem[KY19]{KrishnaswamiYallop19}
Neelakantan~R. Krishnaswami and Jeremy Yallop.
\newblock A typed, algebraic approach to parsing.
\newblock In {\em PLDI}, PLDI 2019, page 379–393, New York, NY, USA, 2019.
  Association for Computing Machinery.

\bibitem[Lan02]{Lange02:model-checking-fp-logic+chop}
Martin Lange.
\newblock Local model checking games for fixed point logic with chop.
\newblock In Lubos Brim, Petr Jancar, Mojm{\'{\i}}r Kret{\'{\i}}nsk{\'{y}}, and
  Anton{\'{\i}}n Kucera, editors, {\em {CONCUR} 2002 - Concurrency Theory, 13th
  International Conference, Brno, Czech Republic, August 20-23, 2002,
  Proceedings}, volume 2421 of {\em Lecture Notes in Computer Science}, pages
  240--254. Springer, 2002.

\bibitem[Lei92]{Leiss92:ka-with-rec}
Haas Lei{\ss}.
\newblock Towards kleene algebra with recursion.
\newblock In Egon B{\"o}rger, Gerhard J{\"a}ger, Hans Kleine~B{\"u}ning, and
  Michael~M. Richter, editors, {\em Computer Science Logic}, pages 242--256,
  Berlin, Heidelberg, 1992. Springer Berlin Heidelberg.

\bibitem[Lei16]{Leiss16:matrices-over-mu-cont-chom-alg}
Hans Leiss.
\newblock {The Matrix Ring of a mu-Continuous Chomsky Algebra is
  mu-Continuous}.
\newblock In Jean-Marc Talbot and Laurent Regnier, editors, {\em 25th EACSL
  Annual Conference on Computer Science Logic (CSL 2016)}, volume~62 of {\em
  Leibniz International Proceedings in Informatics (LIPIcs)}, pages 6:1--6:15,
  Dagstuhl, Germany, 2016. Schloss Dagstuhl -- Leibniz-Zentrum f{\"u}r
  Informatik.

\bibitem[LH18]{LeissHopkins18}
Hans Lei{\ss} and Mark Hopkins.
\newblock C-dioids and $\mu$-continuous chomsky-algebras.
\newblock In Jules Desharnais, Walter Guttmann, and Stef Joosten, editors, {\em
  Relational and Algebraic Methods in Computer Science}, pages 21--36, Cham,
  2018. Springer International Publishing.

\bibitem[Lin76]{Linna76}
Matti Linna.
\newblock On $\omega$-sets associated with context-free languages.
\newblock {\em Information and Control}, 31(3):272--293, 1976.

\bibitem[LMS04]{LoedMadhSerr04:vp-games}
Christof L{\"{o}}ding, P.~Madhusudan, and Olivier Serre.
\newblock Visibly pushdown games.
\newblock In Kamal Lodaya and Meena Mahajan, editors, {\em {FSTTCS} 2004:
  Foundations of Software Technology and Theoretical Computer Science, 24th
  International Conference, Chennai, India, December 16-18, 2004, Proceedings},
  volume 3328 of {\em Lecture Notes in Computer Science}, pages 408--420.
  Springer, 2004.

\bibitem[LT17]{LiTanaka17}
{Li, Wenjuan} and {Tanaka, Kazuyuki}.
\newblock The determinacy strength of pushdown $\omega$-languages.
\newblock {\em RAIRO-Theor. Inf. Appl.}, 51(1):29--50, 2017.

\bibitem[MHHO05]{MHHO05}
Etsuro Moriya, Dieter Hofbauer, Maria Huber, and Friedrich Otto.
\newblock On state-alternating context-free grammars.
\newblock {\em Theoretical Computer Science}, 337(1-3):183--216, 2005.

\bibitem[MW85]{mansfield1985recursive}
R.~Mansfield and G.~Weitkamp.
\newblock {\em Recursive Aspects of Descriptive Set Theory}.
\newblock Oxford logic guides. Oxford University Press, 1985.

\bibitem[Niv77]{Nivat1977}
Maurice Nivat.
\newblock Mots infinis engendrés par une grammaire algébrique.
\newblock {\em RAIRO - Theoretical Informatics and Applications - Informatique
  Théorique et Applications}, 11(4):311--327, 1977.

\bibitem[Niv78]{Nivat1978}
Maurice Nivat.
\newblock Sur les ensembles de mots infinis engendrés par une grammaire
  algébrique.
\newblock {\em RAIRO - Theoretical Informatics and Applications - Informatique
  Théorique et Applications}, 12(3):259--278, 1978.

\bibitem[NW96]{NiwWal96:games-mu-calc}
Damian Niwi\'nski and Igor Walukiewicz.
\newblock Games for the $\mu$-calculus.
\newblock {\em Theoretical Computer Science}, 163(1):99--116, 1996.

\bibitem[Pal07]{Palka07}
Ewa Palka.
\newblock An infinitary sequent system for the equational theory of
  *-continuous action lattices.
\newblock {\em Fundamenta Informaticae}, 78(2):295--309, 2007.

\bibitem[Sac17]{sacks2017}
Gerald~E. Sacks.
\newblock {\em Higher Recursion Theory}.
\newblock Perspectives in Logic. Cambridge University Press, 2017.

\bibitem[Sal73]{Salomaa73}
Arto Salomaa.
\newblock {\em Formal Languages}.
\newblock ACM monograph series. Academic Press, 1973.

\bibitem[Sch63]{Schutzenberger63}
M.P. Schützenberger.
\newblock On context-free languages and push-down automata.
\newblock {\em Information and Control}, 6(3):246--264, 1963.

\bibitem[Stu08]{Studer08}
Thomas Studer.
\newblock On the proof theory of the modal mu-calculus.
\newblock {\em Studia Logica}, 89(3):343--363, 2008.

\bibitem[Thi17]{Thiemann17}
Peter Thiemann.
\newblock Partial derivatives for context-free languages.
\newblock In Javier Esparza and Andrzej~S. Murawski, editors, {\em Foundations
  of Software Science and Computation Structures}, pages 248--264, Berlin,
  Heidelberg, 2017. Springer Berlin Heidelberg.

\end{thebibliography}

\end{document}